\newcommand{\MHVfull}{\textsc{Maximum Happy Vertices}}
\newcommand{\MHE}{\textsc{MHE}}
\newcommand{\MHV}{\textsc{MHV}}
\newcommand{\MHEfull}{\textsc{Maximum Happy Edges}}
\newtheorem{rrule}{Reduction rule}
\newcommand{\problemtitle}[1]{\gdef\@problemtitle{#1}}\newcommand{\probleminput}[1]{\gdef\@probleminput{#1}}\newcommand{\problemquestion}[1]{\gdef\@problemquestion{#1}}\newcommand{\problemparameter}[1]{\gdef\@problemparameter{#1}}
\par\addvspace{.5\baselineskip}
\@problemquestion \end{tabularx}}
	\par\addvspace{.5\baselineskip}
\newif\ifshort
\newenvironment{lemmaO}[1][]{
	\if\relax\detokenize{#1}\relax
	\expandafter\@firstoftwo
	\else
	\expandafter\@secondoftwo
	\fi
	{\begin{lemma}[$\star$]}{\begin{lemma}[#1, $\star$]}
		}
		{
		\end{lemma}
	}
	\newenvironment{theoremO}[1][]{
		\if\relax\detokenize{#1}\relax
		\expandafter\@firstoftwo
		\else
		\expandafter\@secondoftwo
		\fi
		{\begin{theorem}[$\star$]}{\begin{theorem}[#1, $\star$]}
			}
			{
			\end{theorem}
		}
		\newenvironment{corollaryO}[1][]{
			\if\relax\detokenize{#1}\relax
			\expandafter\@firstoftwo
			\else
			\expandafter\@secondoftwo
			\fi
			{\begin{corollary}[$\star$]}{\begin{corollary}[#1, $\star$]}
				}
				{
				\end{corollary}
			}
\newenvironment{claimO}[1][]{
	\if\relax\detokenize{#1}\relax
	\expandafter\@firstoftwo
	\else
	\expandafter\@secondoftwo
	\fi
	{\begin{claim}[$\star$]}{\begin{claim}[#1, $\star$]}
		}
		{
		\end{claim}
	}
			\newenvironment{proofO}[1][]{\begin{proof}}{\end{proof}}
			\newenvironment{claimproofO}[1][]{\begin{claimproof}}{\end{claimproof}}
\renewcommand{\O}{\mathcal{O}}
\newcommand{\Ostar}[1]{\O^*(#1)}
\newtheorem{theorem}{Theorem}
\newtheorem{lemma}{Lemma}
\newtheorem{corollary}{Corollary}
\newtheorem{definition}{Definition}
\newtheorem{claim}{Claim}
\theoremstyle{nonumberplain}
\newtheorem{proof}{Proof}
\newtheorem{proofsketch}{Proof sketch}
\newtheoremstyle{nonumberplainnobrackets}{\item[\theorem@headerfont\hskip\labelsep ##1\theorem@separator]}{\item[\theorem@headerfont\hskip \labelsep ##1\ ##3\theorem@separator]}
\theoremstyle{nonumberplainnobrackets}
\newtheorem{claimproof}{Proof of Claim \theclaim}
\title{On Happy Colorings, Cuts, and Structural Parameterizations\thanks{This research was supported by the Russian Science Foundation (project 16-11-10123)}}
\author{Ivan Bliznets\inst{1,2} \and
	Danil Sagunov\inst{1}}
\authorrunning{I. Bliznets and D. Sagunov}
\institute{St.\ Petersburg Department of Steklov Institute of Mathematics of the Russian Academy
	of Sciences, St.\ Petersburg, Russia\\
	\email{iabliznets@gmail.com}, \email{danilka.pro@gmail.com} \and National Research University Higher School of 
	Economics, St.\ Petersburg, Russia}
\begin{document}
	\maketitle

\begin{abstract}
We study the \MHVfull~and \textsc{Maximum} \textsc{Happy Edges} problems.
The former problem is a variant of clusterization, where some vertices have already been assigned to clusters.
The second problem gives a natural generalization of \textsc{Multiway Uncut}, which is the complement of the classical \textsc{Multiway Cut} problem.
Due to their fundamental role in theory and practice, clusterization and cut problems has always attracted a lot of attention.
We establish a new connection between these two classes of problems by providing a reduction between \MHVfull~and \textsc{Node Multiway Cut}.
Moreover, we study structural and distance to triviality parameterizations of \MHVfull~and \MHEfull. Obtained results in these directions answer questions explicitly asked in four works: Agrawal\ '17, Aravind et al.\ '16, Choudhari and Reddy\ '18, Misra and Reddy\ '17.
\end{abstract}

\section{Introduction}

In this paper, we study \textsc{Maximum Happy Vertices} and \textsc{Maximum Happy Edges}.
Both problems were recently introduced by Zhang and Li in 2015~\cite{zhang2015algorithmic}, motivated by a study of algorithmic aspects of the homophyly law in large networks.
Informally they paraphrase the law as "birds of a feather flock together".
The law states that in social networks people are more likely to connect with people sharing similar interests with them.
A social network is represented by a graph, where each vertex corresponds to a person in the network, and an edge between two vertices denotes that corresponding persons are connected within the network.
Furthermore, we let vertices have colors assigned.
The color of a vertex indicates type, character or affiliation of the corresponding person in the network.
An edge is called \emph{happy} if its endpoints are colored with the same color.
A vertex is called \emph{happy} if all its neighbours are colored with the same color as the vertex itself.
Equivalently, a vertex is happy if all edges incident to it are happy.
The formal definitions of \textsc{Maximum Happy Vertices} and \textsc{Maximum Happy Edges} are the following:

\begin{problemx}
	\problemtitle{\MHVfull~(\MHV)}
	\probleminput{A graph $G$, a partial coloring of vertices $p: S \rightarrow [\ell]$ for some $S\subseteq V(G)$ and an integer $k$.}
	\problemquestion{Is there a coloring $c: V(G) \rightarrow [\ell]$ extending the partial coloring $p$ such that the number of happy vertices with respect to $c$ is at least $k$?}
\end{problemx}

\begin{problemx}
	\problemtitle{\MHEfull~(\MHE)}
	\probleminput{A graph $G$, a partial coloring of vertices $p: S \rightarrow [\ell]$ for some $S\subseteq V(G)$ and an integer $k$.}
	\problemquestion{Is there a coloring $c: V(G) \rightarrow [\ell]$ extending the partial coloring $p$ such that the number of happy edges with respect to $c$ is at least $k$?}
\end{problemx}

\MHEfull~has an immediate connection to \textsc{Multiway Cut}.
Precisely, if each color is used in precoloring exactly once, then \textsc{Maximum Happy Edges} is exactly the \textsc{Multiway Uncut} problem, i.e.\ the edge complement of \textsc{Multiway Cut}.
Thus, \textsc{Maximum Happy Edges} is a generalization of the \textsc{Multiway Uncut} problem.
So, in this case the connection between clustering vertices by color and cutting edges in order to separate different colors is pretty obvious.
However, this is not the case for vertex version of the problem, which we would like to connect with the vertex version of \textsc{Multiway Cut}, \textsc{Node Multiway Cut}.

\MHVfull~can be seen as a sort of clusterization problem, in which some vertices already have prescribed color/cluster and the goal is to identify colors/clusters of initially uncolored/unassigned vertices.
In some sense, we would like to clusterize the graph in such a way that overall boundary of clusters is minimized.
Here, by a boundary of a cluster we understand vertices of the cluster that are connected to vertices outside the cluster.
While it is possible to straightforwardly formulate the problem in terms of a special cutting problem, this kind of formalization will sound complicated and unnatural.
We show that \MHV~can be easily transformed into \textsc{Node Multiway Cut},
thereby constructing an additional bridge between clusterization and cutting problems.

Recently, \MHV~and \MHE~have attracted a lot of attention and were studied from parameterized~\cite{Agrawal2018,Aravind2016,Aravind2017,Choudhari2018,Misra2018} and  approximation~\cite{zhang2015algorithmic,zhang2018improved, zhang2015improved,xu2016submodular} points of view as well as from experimantal perspective~\cite{lewis2019finding}.
Further, dozens of algorithms for the classical \textsc{Multiway Cut} problem have been considered as well, which is the complement of a special case of \MHE.

In 2015, Zhang and Li established that $\ell$-\MHE~and $\ell$-\MHV~are \NP-hard for $\ell\geq3$, where $\ell$ is the number of colors used.
Later, Aravind et al.\cite{Aravind2016} showed that when the input graph is a tree, $\ell$-\MHV~and $\ell$-\MHE~can be solved in $\O(n\ell \log\ell)$ and in $\O(n\ell)$ time respectively.
In \cite{Misra2018}, Misra and Reddy proved \NP-hardness of both \MHV~and \MHE~on split and on bipartite graphs, and showed that \MHV~is polynomial time solvable on cographs.

From the approximation perspective, the currently best known results are the following.
Zhang et al.~\cite{zhang2018improved} showed that \MHV~can be approximated within $\frac{1}{\Delta + 1}$, where $\Delta$ is the maximum degree of the input graph, and \MHE~can be approximated within $\frac{1}{2}+\frac{\sqrt{2}}{4}f(\ell)$, where $f(\ell)= \frac{(1-1/\ell)\sqrt{\ell(\ell-1)}+1/\sqrt{2}}{\ell-1+1/2\ell} \leq 1$.
They also claimed that a more careful analysis can improve the approixmation ratio for \MHV~to $\frac{1}{\Delta+1/g(\Delta)}$, where $g(\Delta)=(\sqrt{\Delta}+\sqrt{\Delta+1})^2\Delta>4 \Delta^2$. 

The known results in parameterized complexity (not including kernelization) are summarized  in Table~\ref{table:results}.
Results proved in the paper are marked by $^*$ in the table.
Agrawal~\cite{Agrawal2018} provides $\mathcal{O}(k^2\ell^2)$-kernel for MHV, where $\ell$ is the number of used colors and $k$ is the number of desired happy vertices.
Independently, Gao and Gao~\cite{gao2018kernelization} present a $(2^{k \ell + k} + k\ell +k+\ell)$-kernel for the general case and a $(7(k\ell+k)+\ell-10)$-kernel in the case of planar graphs.
We provide a kernel on $\O(d^3)$ vertices for \MHV~parameterized by the distance to clique, partially answering a question in \cite{Misra2018}.
Note that the kernel sizes mentioned in this paragraph correspond to the number of vertices in the kernels.

\begin{table}[!ht]
	\centering

\begin{tabular}{|m{2.65cm}|c|c|c|c|c|c|}
	\hline
	\textbf{Parameter} & \textbf{\MHE} & \textbf{$\ell$-\MHE} & \textbf{\MHV} & \textbf{$\ell$-\MHV} \\
\hline
	
	Distance to threshold graphs & ?& ?& \multicolumn{2}{c|}{$d^{\O(d)} \cdot n^{\O(1)}$ \cite{Choudhari2018}} \\
	\hline
	Distance to clique & \multicolumn{4}{c|}{$d^{\O(d)}\cdot n^{\O(1)}$ \cite{Misra2018}} \\ 
	\hline
	Distance to cluster &  $\W[1]$-hard C\ref{cor:mhe_w1_cluster}$^*$ & $\ell^d \cdot n^{\O(1)}$ &  \multicolumn{2}{c|}{$d^{\O(d)} \cdot n^{\O(1)}$ T\ref{thm:mhv_cluster_fpt}$^*$} \\

    \hline
    Distance to cographs & \multirow{4}{*}{$\W[1]$-hard C\ref{cor:mhe_w1_structural}$^*$} & ? & \multirow{4}{*}{$\W[1]$-hard C\ref{cor:mhv_w1_structural}$^*$} & $(2\ell)^{d} \cdot n^{\O(1)}$ \\
   
\cline{1-1}\cline{3-3}\cline{5-5}
Treewidth &  & $\ell^{\text{tw}} \cdot n^{\O(1)}$ \cite{Aravind2017,Agrawal2018} &  & $\ell^{\text{tw}} \cdot n^{\O(1)}$ \cite{Aravind2017,Misra2018} \\
	\cline{1-1}\cline{3-3}\cline{5-5}
	Pathwidth &  & $\ell^{\text{pw}} \cdot n^{\O(1)}$ \cite{Aravind2017,Misra2018} &  & $\ell^{\text{pw}} \cdot n^{\O(1)}$ \cite{Aravind2017,Agrawal2018} \\
	\cline{1-1}\cline{3-3}\cline{5-5}
	Cliquewidth & & ? &  & ? \\
	\cline{1-1}\cline{3-3}\cline{5-5}
	\hline
	Feedback Vertex Set Number &  & $\ell^d \cdot n^{\O(1)}$ &  & $(2\ell)^d \cdot n^{\O(1)}$ \\
	 	\hline
 Vertex Cover Number  & \multicolumn{4}{c|}{$d^{\O(d)} \cdot n^{\O(1)}$ \cite{Misra2018}} \\
	\hline
Split Vertex Deletion Number & \multicolumn{4}{c|}{\multirow{3}{*}{para-NP-hard \cite{Misra2018}}} \\
\cline{1-1}
Odd Cycle Transversal Number & \multicolumn{4}{c|}{} \\

	\hline
	
	Neighbourhood Diversity &
	\multicolumn{4}{c|}{$2^{\text{nd}}\cdot n^{\O(1)}$\cite{Aravind2017}} \\
\hline
\end{tabular}

\caption{Known and established results under distance-to-triviality and structural parameters. 
$^*$ marks results of this work.
T indicates a result proven as a theorem.
C indicates a result proven as a corollary.
$d$ denotes the distance parameter of the row.}
\label{table:results}
\end{table}
\clearpage{}

\textbf{Our results:} The main contributions of our work are the following.

\begin{itemize}
	\item We establish a natural connection between \MHVfull~on a graph $G$ and \textsc{Node Multiway Cut} on a second power of a certain subgraph of $G$.
	
	\item We answer questions in~\cite{Agrawal2018,Aravind2016} about existence of \FPT-algorithm for \MHV \linebreak parameterized by the treewidth of the input graph only.

	\item 
Similarly, we answer one of the questions from Choudhari et al.~\cite{Choudhari2018} and Misra et al.~\cite{Misra2018} by showing $\W[1]$-hardness of \MHE~parameterized by the cluster vertex deletion number.
	We show that \MHV, in contrast to \MHE, is in \FPT~when parameterized by the cluster vertex deletion number.
	
	\item We partially answer a question stated by Misra and Reddy in \cite{Misra2018}.
	We provide a kernel of size $\O(d^3)$ for \MHV, where $d$ is the distance to cliques. 

	\item Among other results, we also present the first algorithm for \textsc{Node Multiway Cut} parameterized by the clique-width of the input graph.
	
\end{itemize}

\textbf{Organization of the paper:}
Section~\ref{sec:distance-param} describes results under some structural and distance-to-triviality parameters.
In Section~\ref{section:cut-relation} we provide results connecting \textsc{Node Multiway Cut} and \textsc{Maximum Happy Vertices}. 
In Section~\ref{sec:structural-kernel} we provide a polynomial kernel for \MHV~parameterized by the distance to clique.
In Section~\ref{sec:w2} we show how to strengthen the results of \W[1]-hardness and obtain the corresponding \W[2]-hardness results.

	\section{Preliminaries}

\textbf{Basic notation.} We denote the set of positive integer numbers by $\mathbb{N}$.
For each positive integer $k$, by $[k]$ we denote the set of all positive integers not exceeding $k$, $\{1,2,\ldots, k\}$.
We use $\infty$ to denote an infinitely large number, for which holds $n<\infty$ and $n+\infty=\infty+n=\infty$, where $n$ is an arbitrary integer.
We use $\sqcup$ for the disjoint union operator, i.e.\ $A\sqcup B$ equals $A\cup B$, with an additional constraint that $A$ and $B$ are disjoint.

\ifshort
\else
We employ partial functions in our work.
To denote a \emph{partial function} from a set $X$ to a set $Y$, that is, a function that may do not map some element of $X$ to an element of $Y$, we write $f: X \nrightarrow Y$.
If a partial function $f$ maps an element $x \in X$ to some element in $Y$, we say that $f(x)$ is \emph{assigned}.
If $f(x)$ is unassigned, we allow to extend $f$ by assigning the value of $f(x)$.
\fi

We use the traditional $\O$-notation for asymptotical upper bounds.
We additionally use the $\mathcal{O}^*$-notation that hides polynomial factors.
Many of our results concern the parameterized complexity of the problems, including fixed-parameter tractable algorithms, kernelization algorithms, and some hardness results for certain parameters.
For a detailed survey in parameterized algorithms we refer to the book of Cygan et al.\ \cite{cygan2015parameterized}.
In their book one may also find definitions of pathwidth and treewidth that are considered as parameters in some of our results.

Throughout the paper, we use standard graph notation and terminology, following the book of Diestel \cite{diestel2018graph}.
All graphs in our work are undirected simple graphs.
We consider several graph classes in our work.
\emph{Interval graphs} are graphs whose vertices can be represented as intervals on the real line, so that a pair of vertices are connected by an edge if and only if their representative intervals intersect.
\emph{Cluster graphs} are graphs that are a disjoint union of cliques, or, equivalently, graphs that do not contain induced paths on three vertices.

We often refer to the distance to $\mathcal{G}$ parameter, where $\mathcal{G}$ is an arbitrary graph class.
For a graph $G$, we say that a vertex subset $S\subseteq V(G)$ is a \emph{$\mathcal{G}$ modulator} of $G$, if $G$ becomes a member of $\mathcal{G}$ after deletion of $S$, i.e.\ $G\setminus S \in \mathcal{G}$.
Then, the \emph{distance to $\mathcal{G}$} parameter of $G$ is defined as the size of its smallest $\mathcal{G}$ modulator.

\textbf{Graph colorings.} When dealing with instances of \MHVfull~or \MHEfull, we use a notion of colorings.
A \textit{coloring} of a graph $G$ is a function that maps vertices of the graph to a set of colors.
If this function is partial, we call such a coloring \emph{partial}.
If not stated otherwise, we use $\ell$ for the number of distinct colors, and assume that colors are integers in $[\ell]$.
A partial coloring $p$ is always given as a part of the input for both problems, along with graph $G$.
We also call $p$ a \emph{precoloring} of the graph $G$, and use $(G,p)$ to denote the graph along with the precoloring.
The goal of both problems is to extend this partial coloring to a specific coloring $c$ that maps each vertex to a color.
We call $c$ a \emph{full coloring} (or simply, a coloring) of $G$ that extends $p$.
We may also say that $c$ is a coloring of $(G,p)$.
For convenience, introduce the notion of potentially happy vertices, both for full and partial colorings.

\begin{definition}
	We call a vertex $v$ of $(G,p)$ \textit{potentially happy}, if there exists a coloring $c$ of $(G,p)$ such that $v$ is happy with respect to $c$. In other words, if $u$ and $w$ are precolored neighbours of $v$, then $p(u)=p(w)$ (and $p(u)=p(v)$, if $v$ is a precolored vertex). We denote the set of all potentially happy vertices in $(G,p)$ by $\mathcal{H}(G,p)$.
	
	By $\mathcal{H}_i(G,p)$ we denote the set of all potentially happy vertices in $(G,p)$ such that they are either precolored with color $i$ or have a neighbour precolored with color $i$: $$\mathcal{H}_i(G,p)=\{v \in \mathcal{H}(G,p) \mid N[v] \cap p^{-1}(i)\neq \emptyset\}.$$
	In other words, if a vertex $v \in \mathcal{H}_i(G,p)$ is happy with respect to some coloring $c$ of $(G,p)$, then necessarily $c(v)=i$.
\end{definition}

Note that if $c$ is a full coloring of a graph $G$, then $|\mathcal{H}(G,c)|$ is equal to the number of vertices in $G$ that are happy with respect to $c$.

\textbf{Clique-width.} Among other structural parameters, we consider clique-width in our work.
We follow definitions presented by Lackner et al.\ in their work on \textsc{Multicut} parameterized by clique-width \cite{Lackner2012}.

\ifshort
Due to the space restrictions, we omit the definition of clique-width and $k$-expressions to the full version of this paper.
\else
To define clique-width, we need to define $k$-expressions first.
For any $k \in \mathbb{N}$, a \emph{$k$-expression} $\Phi$ describes a graph $G_\Phi$, whose vertices are labeled with integers in $[k]$.
$k$-expressions and its corresponding graphs are defined recursively.
Depending on its topmost operator, a $k$-expression $\Phi$ can be of four following types.

\begin{enumerate}
	\item \emph{Introducing a vertex.}
	$\Phi=i(v)$, where $i \in [k]$ is a label and $v$ is a vertex.
	$G_\Phi$ is a graph consisting of a single vertex $v$ with label $i$, i.e.\ $V(G_\Phi)=\{v\}$.
	\item \emph{Disjoint union.}
	$\Phi=\Phi'\oplus\Phi''$, where $\Phi'$ and $\Phi''$ are smaller subexpressions.
	$G_\Phi$ is a disjoint union of the graphs $G_{\Phi'}$ and $G_{\Phi''}$, i.e.\ $V(G_\Phi)=V(G_{\Phi'})\sqcup V(G_{\Phi''})$ and $E(G_\Phi)=E(G_{\Phi'})\sqcup E(G_{\Phi''})$.
	The labels of the vertices remain the same.
	\item \emph{Renaming labels.}
	$\Phi=\eta_{i\to j}(\Phi')$.
	The structure of $G_\Phi$ remains the same as the structure of $G_{\Phi'}$, but each vertex with label $i$ receives label $j$.
	\item \emph{Introducing edges.}
	$\Phi=\rho_{i,j}(\Phi')$.
	$G_\Phi$ is obtained from $G_{\Phi'}$ by connecting each vertex with label $i$ with each vertex with label $j$.	
\end{enumerate}

\emph{Clique-width} of a graph $G$ is then defined as the smallest value of $k$ needed to describe $G$ with a $k$-expression and is denoted as $\operatorname{cw}(G)$.
To avoid confusion with the parameter $k$ of \MHV~and \MHE, we may use notation of $w$-expression instead of $k$-expression.

There is still no known \FPT-algorithm for finding a $k$-expression of a given graph $G$.
However, there is an \FPT-algorithm that decides that $\operatorname{cw}(G) > k$ or outputs $(2^{3k+2}-1)$-expression of $G$.
\fi
For more details on clique-width we refer to \cite{Hlineny2007}.

\ifshort
\textbf{Omitted proofs.} Due to the space restricitons, we omit full proofs of some theorems, lemmata, corollaries or claims to the full version of this paper.
For some of them we leave a proof sketch instead of a full proof, and for some of them we omit the proof completely.
Such statements with omitted proofs are marked with the `$\star$' sign.

\fi

	\section{Structural and distance-to-triviality parameters}\label{sec:distance-param}

In \cite{Agrawal2018}, Agrawal proved that \MHVfull~is $\W[1]$-hard with respect to the standard parameter, the number of happy vertices. In \cite{Aravind2016,Misra2018,Choudhari2018} some structural parameters for \textsc{MHV} and \textsc{MHE} were studied.
In \cite{Agrawal2018}, Agrawal also asked whether \MHV~admits an \FPT~algorithm when parameterized by the treewidth of the input graph alone.
In this section, we show that both \MHV~and \MHE~are \W[1]-hard with respect to certain distance-to-triviality and structural paramters, including treewidth, answering the question of Agrawal and some other questions.
We start with the definition of a classical $\W[1]$-complete (with respect to the solution size) problem.

\begin{problemx}
	\problemtitle{\textsc{Regular Multicolored Independent Set}}
	\probleminput{Graph $G$, with degree of every vertex in $G$ equal to $r$, a partition of $G$ into $k$ cliques $V_1, V_2, \ldots, V_k$.}
	\problemparameter{$k$}
	\problemquestion{Is there a multicolored independent set in $G$ of size $k$, i.e.\ a subset $S\subseteq V(G)$ of its vertices that is an independent set in $G$ and $|S\cap V_i|=1$ for every $i \in [k]$?}
\end{problemx}

\begin{theorem}\label{thm:mhv_p3}
	\MHVfull~is $\W[1]$-hard when parameterized by the distance to graphs that are a disjoint union of paths consisting of three vertices.
\end{theorem}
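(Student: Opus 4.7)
The plan is to reduce from \textsc{Regular Multicolored Independent Set}. Given an instance $(G^{\ast}, V_1,\ldots,V_k, r)$ with $V_i = \{v_{i,1}, \ldots, v_{i,n_i}\}$, I construct an MHV instance whose modulator to a disjoint union of $P_3$'s has size exactly $k$: it consists of $k$ unprecolored ``selector'' vertices $C_1, \ldots, C_k$, one per clique. The color palette has $|V(G^{\ast})|$ pairwise distinct colors $c_{i,j}$, one per vertex of $G^{\ast}$ (so colors of different cliques are disjoint).

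The non-modulator part contains two kinds of $P_3$ gadgets. For every vertex $v_{i,j}$, I add a \emph{vertex gadget} $(a_{i,j}, m_{i,j}, b_{i,j})$ with $m_{i,j}$ precolored $c_{i,j}$ and adjacent to $C_i$. For every edge $e = \{v_{i,j}, v_{i',j'}\}$ of $G^{\ast}$ with $i < i'$, I add an \emph{edge gadget} $(x_e, y_e, z_e)$ with $x_e$ precolored $c_{i,j}$ and adjacent to $C_i$, and $z_e$ precolored $c_{i',j'}$ and adjacent to $C_{i'}$; the middle $y_e$ has no modulator edges. The target will be $T = 2|V(G^{\ast})| + k + kr$. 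Removing the $C_i$'s leaves a disjoint union of $P_3$'s, so the modulator parameter is $k$.

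For the forward direction, given a multicolored IS $\{v_{1,j_1}, \ldots, v_{k,j_k}\}$, I set $C_i := c_{i,j_i}$ and every leaf $a_{i,j}, b_{i,j} := c_{i,j}$. All $2|V(G^{\ast})|$ leaves are happy; each of the $k$ middles $m_{i,j_i}$ is happy; and for every edge $e$ incident to the IS I set $y_e$ to match the activated precolor, making exactly one of $x_e, z_e$ happy. Since $G^{\ast}$ is $r$-regular and the IS has no internal edge, exactly $kr$ edges are incident to the IS, so the total count reaches $T$.

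For the reverse direction, I analyse an optimal coloring via the \emph{picked set} $P := \{v_{i,j} : C_i = c_{i,j}\}$. Pairwise distinctness of the $c_{i,j}$'s forces at most one middle $m_{i,\cdot}$ per clique to be happy (contribution at most $|P|$); every leaf can always be coloured $c_{i,j}$ to be happy (contribution $2|V(G^{\ast})|$); every $y_e$ is unhappy because $x_e, z_e$ have distinct precolors; and each edge gadget contributes exactly $1$ happy vertex iff $e$ is incident to $P$, otherwise $0$. Summing and invoking $r$-regularity gives a total of $2|V(G^{\ast})| + |P| + r|P| - |E(G^{\ast}[P])|$, which reaches $T$ iff $|P| = k$ and $|E(G^{\ast}[P])| = 0$, i.e., iff $P$ is a multicolored independent set. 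The delicate point is exploiting $r$-regularity so that $\sum_{v \in P} \deg_{G^{\ast}}(v) = r|P|$ is a fixed quantity depending only on $|P|$; without this, a single threshold $T$ could not uniformly separate IS-colorings from non-IS ones.
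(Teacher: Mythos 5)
Your reduction is, at its core, the same as the paper's: reduce from \textsc{Regular Multicolored Independent Set}, encode each edge of $G^{\ast}$ as a $P_3$ whose two endpoints are precolored with the colors of the edge's endpoints and attached to the corresponding selector vertices, and use $r$-regularity so that the number of happy precolored endpoints depends only on $|P|$ and $|E(G^{\ast}[P])|$. Your vertex gadgets $(a_{i,j},m_{i,j},b_{i,j})$ are extra padding the paper does without (the leaves are unconditionally happy and the middles only shift the threshold by $k$); they are harmless but unnecessary.

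There is, however, one concrete inconsistency between your construction and your counting. You only build edge gadgets for edges $\{v_{i,j},v_{i',j'}\}$ with $i<i'$, i.e.\ for inter-clique edges, yet every $V_i$ is a clique, so $G^{\ast}$ has $n_i-1$ intra-clique edges at each vertex, and these receive no gadget. Your arithmetic in both directions counts \emph{all} edges incident to the picked set: in the forward direction you credit $kr$ happy endpoint vertices, but only $kr-\sum_i(n_i-1)=kr-(|V(G^{\ast})|-k)$ of those edges actually have gadgets, so the constructed coloring reaches only $|V(G^{\ast})|+2k+kr$, which falls short of your target $T=2|V(G^{\ast})|+k+kr$ by $|V(G^{\ast})|-k>0$; the formula $r|P|-|E(G^{\ast}[P])|$ in the reverse direction has the same defect. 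The fix is simply to build a gadget for \emph{every} edge, including intra-clique ones (for $e=\{v_{i,j},v_{i,j'}\}$ both precolored endpoints attach to the same selector $C_i$; the ``contributes $1$ iff $e$ is incident to $P$, and never $2$'' analysis still holds because $C_i$ carries a single color), after which your accounting closes exactly as written. This is precisely what the paper does. One further small point worth stating explicitly: you should assume $|V_i|\ge 2$ (otherwise the instance trivially reduces) so that each selector $C_i$ has two differently precolored neighbours and hence can never be happy; without this the selectors could contribute uncounted happy vertices.
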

\begin{proof}
	We reduce from \textsc{Regular Multicolored Independent Set}, that is $\W[1]$-complete with respect to $k$ due to \cite{Belmonte2013}.
	
	Let $(G, k, V_1, V_2, \ldots, V_k)$ be an instance of \textsc{Regular Multicolored Independent Set}, and let $r$ be the degree of every vertex in $G$, i.e.\ $r=|N(v)|$ for any $v \in V(G)$. We assume that $|V_i| \ge 2$ for each $i$, since otherwise the instance can be trivially reduced to an instance with a smaller $k$. We construct an instance $(G', p, k')$ of \MHVfull~as follows.
	
	We set $\ell=|V(G)|$, so each color corresponds to a unique vertex of $G$. For convenience, we use vertices of $G$ as colors, instead of the numbers in $[\ell]$.
	
	For each edge $uv \in E(G)$, we introduce a path on three vertices $t_{uv}^{u}$, $e_{uv}$, $t_{uv}^v$ in $G'$, with $e_{uv}$ being the middle vertex of the path. Endpoint vertices $t_{uv}^u$ and $t_{uv}^v$ are precolored in colors $u$ and $v$ respectively, i.e.\ $p(t_{uv}^u)=u$ and $p(t_{uv}^v)=v$, and the middle vertex is left uncolored.
	
	We then introduce a selection gadget in $G'$. That is, we introduce $k$ uncolored vertices $s_1, s_2, \ldots, s_k$. For each $i \in [k]$ and each color $v \in V_i$, we connect $s_i$ with each vertex precolored in color $v$. Thus, a vertex $t_{uv}^u$ becomes connected to exactly one vertex of the selection gadget $s_i$, where $i$ is such that $u \in V_i$. The purpose of the selection gadget is that the color of $s_i$ in the optimal coloring corresponds to a vertex that we should take in $V_i$ in the initial instance of \textsc{Regular Multicolored Independent Set}.
	
	We finally set $k'=kr$ and argue that $(G, k, V_1, V_2, \ldots, V_k)$ is a yes-instance of \textsc{Regular Multicolored Independent Set} if and only if $(G', p, k')$ is a yes-instance of \MHVfull.

	Let $S \subseteq V(G)$ be a multicolored independent set of $G$, i.e.\ $S$ is an independent set in $G$ and $|S \cap V_i|=1$ for each $i$. Let us construct a coloring $c$ of $V(G')$ such that it extends $p$ and at least $k'=kr$ vertices of $G$ are happy with respect to $c$. For each $i$, set the color of $s_i$ to $v_i$, i.e.\ $c(s_i)=v_i$, where $v_i \in S \cap V_i$. For each edge $uv \in E(G)$, set the color of $e_{uv}$ to $u$, if $u \in S$, or to $v$, if $v \in S$, and to an arbitrary color otherwise. Formally, $c(e_{uv})=u$, if $u \in S$, and $c(e_{uv})=v$, if $v \in S$. If $u,v \notin S$, then $c(e_{uv})$ can be assigned an arbitrary color.
	Note that either $u \notin S$ or $v \notin S$, since $S$ is an independent set.
	
	$G'$ has no other uncolored vertex, thus the construction of $c$ is complete. 
	
	\begin{claim}
		For each vertex $u \in S$ and each edge $uv \in E(G)$ incident to $u$, $t_{uv}^u$ is happy with respect to $c$.
	\end{claim}

	\begin{claimproof}
		Indeed, $t_{uv}^u$ is adjacent to exactly two vertices: $s_i$, where $u \in V_i$, and $e_{uv}$. Since $u \in S$, $c(s_i)=u$ and $c(e_{uv})=u$ by construction of $c$. $t_{uv}^u$ is a vertex precolored with color $u$, hence $t_{uv}^u$ is happy with respect to $c$.
	\end{claimproof}
	
	For each $u \in S$, there are exactly $r$ edges adjacent to $u$, hence all $r$ vertices precolored with color $u$ are happy. $|S|=k$, hence at least $kr$ vertices of $G'$ are happy with respect to $c$.
	
	It is left to prove that if $(G', p, k')$ is a yes-instance of \MHVfull, then $(G, k, V_1, V_2,$ $ \ldots, V_k)$ is a yes-instance of \textsc{Regular Multicolored Independent Set}.
		
	\begin{claim}\label{claim:mhv_kr}
		Let $c$ be an arbitrary coloring of $V(G')$ extending $p$. There are at most $kr$ happy vertices in $G'$ with respect to $c$. Moreover, all happy vertices are precolored vertices of at most $k$ distinct colors.
	\end{claim}

	\begin{claimproof}

		Observe that for each $i \in [k]$, $s_i$ is unhappy with respect to any coloring extending $p$, since neighbours of $s_i$ are precolored with colors in $V_i$, and each color is presented exactly $r$ times among its neighbours, and we assumed that $V_i$ consists of at least two vertices.
		
		For each $uv \in E(G)$, $e_{uv}$ is adjacent to exactly two vertices $t_{uv}^u$ and $t_{uv}^v$, which are precolored with two distinct colors $u$ and $v$. Thus, $e_{uv}$ is also unhappy with respect to any coloring extending $p$.
	
		Hence, only precolored vertices of $G'$ can be happy, i.e.\ vertices $t_{uv}^u$ for $uv \in E(G)$. Each of them is adjacent to exactly one vertex of the selector gadget, i.e.\ vertex $s_i$ for some $i \in [k]$. But for each $i \in [k]$, only the neighbours that share the same color as $s_i$ can be happy. Thus, each happy vertex shares a color with one of $k$ vertices of the selection gadget. Since each color is presented exactly $r$ times in the partial coloring $p$, there can be at most $kr$ such happy vertices.
	\end{claimproof}

	Let $c$ be a coloring of $V(G')$ extending $p$ such that at least $kr$ vertices of $G'$ are happy with respect to $c$. According to Claim \ref{claim:mhv_kr}, exactly $kr$ vertices of $G'$ are happy with respect to $c$, and they are precolored with $k$ different colors. Moreover, for each color, all $r$ precolored vertices of this color are happy. Let $S$ be the set of these $k$ colors, i.e.\ $S=\{c(s_1), c(s_2), \ldots, c(s_k)\}$. We argue that $S$ is an independent set in $G$. Note that $|S \cap V_i|=1$ is then automatically satisfied, as $V_i$ is a clique in $G$ for each $i \in [k]$.
	
	\begin{claim}\label{claim:mhv_constructing_is}
		If there are $kr$ happy vertices among the vertices of type $t_{uv}^v$ in $G'$ with respect to coloring $c$, that extends $p$, then $S=\{c(s_1), c(s_2), \ldots, c(s_k)\}$ is an independent set in $G$.
	\end{claim}
	\begin{claimproof}
		Indeed, suppose that $S$ is not an independent set in $G$, i.e.\ there are vertices $u, v \in S$, such that $uv \in E(G)$. Then there is a path $t_{uv}^u$, $e_{uv}$, $t_{uv}^v$ in $G'$. $t_{uv}^u$ is a happy vertex of color $c(t_{uv}^u)=p(t_{uv}^u)=u$, hence $c(e_{uv})=u$. Analogously, $t_{uv}^v$ is a happy vertex of color $v$, hence $c(e_{uv})=v$. We get that $u=c(e_{uv})=v$, which contradicts our assumption. 
	\end{claimproof}
	
	We have shown that $(G', p, k')$ is an instance equivalent to $(G, k, V_1, V_2, \ldots,$ $ V_k)$; moreover, it can be constructed in polynomial time.
	
	Note that the deletion of the selector gadget vertices in $G'$ leads to $G'$ being a disjoint union of paths consisting of three vertices. Thus, $G'$ has the distance parameter being at most $k$, and if \MHVfull~is in $\FPT$ when parameterized by the distance to graphs being a disjoint union of path consisting of three vertices, then $\W[1]$-complete \textsc{Regular Multicolored Independent Set} is also in $\FPT$. Hence, \textsc{MHV} is $\W[1]$-hard with respect to the distance parameter.
\end{proof}

The following corollary answers an open question posed in \cite{Agrawal2018}.

\begin{corollary}\label{cor:mhv_w1_structural}
	\MHVfull~is $\W[1]$-hard with respect to parameters pathwidth, treewidth or clique-width, distance to cographs, feedback vertex set number.
\end{corollary}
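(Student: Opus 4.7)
The plan is to observe that the graph $G'$ constructed in the proof of Theorem~\ref{thm:mhv_p3} has each of the five listed parameters bounded by a function of $k$ alone; combined with the $\W[1]$-hardness of \textsc{Regular Multicolored Independent Set} parameterized by $k$, this immediately transfers the hardness of $\MHV$ to each of these parameters via the same reduction.

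The key observation is that $G' \setminus S$, where $S = \{s_1, s_2, \ldots, s_k\}$ is the selector gadget, consists of the vertex-disjoint paths $t_{uv}^u$\,-\,$e_{uv}$\,-\,$t_{uv}^v$, one for each edge $uv \in E(G)$, and is therefore a disjoint union of copies of $P_3$. This ``base'' graph is a forest (so its feedback vertex set number is zero), has treewidth and pathwidth equal to one, and is a cograph, since every $P_3$ is $P_4$-free and the class of cographs is closed under disjoint union.

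Adding $S$ back to the base graph affects each parameter in a controlled way. A width-one tree (path) decomposition of $G' \setminus S$ can be turned into a tree (path) decomposition of $G'$ by inserting all $k$ vertices of $S$ into every bag, giving $\operatorname{tw}(G'), \operatorname{pw}(G') \leq k+1$. The distance to cographs and to forests of $G'$ are each at most $|S|=k$ by the very definition of the modulator distance. For clique-width, it suffices to invoke the well-known bound $\operatorname{cw}(H) \leq 3 \cdot 2^{\operatorname{tw}(H)-1}$, which yields $\operatorname{cw}(G') = 2^{\O(k)}$; alternatively, one can give a direct $(k+2)$-expression using dedicated labels for the selectors.

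Since the reduction of Theorem~\ref{thm:mhv_p3} runs in polynomial time and each of the parameters above is at most $f(k)$ for a computable $f$, an \FPT\ algorithm for $\MHV$ under any of these parameterizations would yield an \FPT\ algorithm for \textsc{Regular Multicolored Independent Set} parameterized by $k$, contradicting \cite{Belmonte2013}. The only mildly subtle point is the clique-width case, where one invokes the standard (exponential) bound by treewidth rather than constructing a $k$-expression by hand; all other parameter bounds are immediate from the structure of $G'\setminus S$.
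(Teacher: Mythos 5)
Your proposal is correct and follows essentially the same route as the paper's proof: distance to cographs and feedback vertex set number are immediate because disjoint unions of $P_3$'s are cographs and forests, pathwidth and treewidth are bounded by $k+1$ by adding the selector set $S$ to every bag of the trivial decomposition of the $P_3$-components, and clique-width is handled via the exponential treewidth-to-clique-width bound (the paper likewise cites Corneil and Rotics and merely remarks, as you do, that a direct linear-in-$k$ expression is possible). No gaps.
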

\ifshort
\begin{proofsketch}
	$\W[1]$-hardness of \MHV~with respect to the parameters distance to cographs or feedback vertex set number is an immediate corollary of Theorem \ref{thm:mhv_p3}. Since graphs of type $n \times P_3$ (that is, graphs that are a disjoint union of paths consisting of three vertices) are simultaneously cographs and forests, respectively to the parameters.
	Proofs of bounded pathwidth, treewidth and clique-width for graphs with bounded distance to $n\times P_3$ can be found in the full version of the paper.
\end{proofsketch}
\else
\begin{proof}
	$\W[1]$-hardness of \MHV~with respect to the parameters distance to cographs or feedback vertex set number is an immediate corollary of Theorem \ref{thm:mhv_p3}, since graphs of type $n \times P_3$ (that is, graphs that are a disjoint union of paths consisting of three vertices) are simultaneously cographs and forests.
	
	\emph{Pathwidth}. Let $G$ be a graph and $S \subseteq V(G)$ be a $n \times P_3$ modulator of $G$, i.e.\ $G[V(G) \setminus S]$ is a graph consisting of connected components that are disjoint paths on three vertices. Observe that the pathwidth of $G$ is at most $|S|+1$. Indeed, let $V(G) \setminus S$ consist of $n$ connecting components, $i^{\text{th}}$ of them is a three-vertex path $v_{i,1}-v_{i,2}-v_{i,3}$. Then construct a path decomposition of $G$ as a sequence
	$S \cup \{v_{1,1}, v_{1,2}\}, S \cup \{v_{1,2}, v_{1,3}\}, S \cup \{v_{2,1}, v_{2,2}\}, \ldots, S \cup \{v_{i,1}, v_{i,2}\}, S \cup \{v_{i,2}, v_{i,3}\},\ldots, S\cup\{v_{n,2}, v_{n,3}\}.$
	
	Constructed sequence is a correct path decomposition of $G$. Firstly, each vertex is contained in a contiguous segment of sets in the sequence. Secondly, for each edge in $E(G)$, its endpoints are contained in some set of the sequence simultaneously, as each edge of $G$ is either an edge between a vertex in $S$ and some vertex $v_{i,j}$, or an edge between $v_{i,t}$ and $v_{i,t+1}$ for some $i \in [n]$ and $t \in [2]$. The size of each set of the sequence is $|S|+2$, hence the pathwidth of $G$ is at most $|S|+1$. Thus, if a graph has the distance-to-$n\times P_3$ graphs parameter equal to $k$, then its pathwidth is at most $k+1$. By Theorem \ref{thm:mhv_p3}, \MHV~is $\W[1]$-hard when parameterized by the pathwidth of the input graph.
	
	\emph{Treewidth}. $\W[1]$-hardness for the treewidth parameter follows from the fact that a path decomposition of a graph is a tree decomposition of the graph; if a graph is of pathwidth $k$, it is of treewidth at most $k$.
	
	\emph{Clique-width}. In \cite{Corneil2005}, Corneil and Rotics proved that a graph of treewidth $k$ has clique-width at most $3 \cdot 2^{k-1}$. This already gives us the hardness result for the clique-width parameter. Though, one can improve the upper bound and show that if a graph has a $n \times P_3$-modulator of size $k$, then the clique-width of such graph is at most $k+3$.
\end{proof}
\fi

\begin{theorem}\label{thm:mhe_p3_c3}
	\MHEfull~is $\W[1]$-hard when parameterized by the distance to graphs that are disjoint union of paths consisting of three vertices and is $\W[1]$-hard when parameterized by the distance to graphs that are a disjoint union of cycles of length three.
\end{theorem}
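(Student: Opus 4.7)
The plan is to adapt the reduction of Theorem~\ref{thm:mhv_p3}. We again start from \textsc{Regular Multicolored Independent Set} on an $r$-regular graph $G$ with clique partition $V_1,\ldots,V_k$, but we have to work a bit harder: the MHV construction, reinterpreted as counting happy edges, yields a total that does not depend on the selector colors, so the independence condition is no longer detected. I fix this by rerouting each selector-to-gadget edge to the \emph{middle} vertex of the $P_3$ and by introducing many vertex-forcing gadgets that pin each selector to a color of its own partition.

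Concretely, introduce selectors $s_1,\ldots,s_k$; for each $uv\in E(G)$ an \emph{edge gadget} $a_{uv}-b_{uv}-c_{uv}$ with $a_{uv}$ precolored $u$, $c_{uv}$ precolored $v$, and $b_{uv}$ adjacent to $s_i,s_j$ (where $u\in V_i,v\in V_j$); and for each $v\in V(G)$ and each $j\in[M]$, where $M:=2|E(G)|+1$, a \emph{vertex gadget} $\ell^v_{j,1}-m^v_j-\ell^v_{j,2}$ with both endpoints precolored $v$ and $m^v_j$ adjacent to the unique selector $s_i$ with $v\in V_i$. Removing the $k$ selectors leaves a disjoint union of $P_3$'s, so the distance from $G'$ to $n\times P_3$ is at most $k$. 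Set the target $k':=2nM+kM+|E(G)|+kr$.

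A case analysis on the colors assigned to the middle vertices $b_{uv}$ and $m^v_j$ shows that, under the constraint $c(s_i)\in V_i$ for every $i$, the number of happy edges equals $2nM+kM+|E(G)|+kr-|E(G[S])|$, where $S:=\{c(s_1),\ldots,c(s_k)\}$; hence it reaches $k'$ precisely when $S$ is a multicolored independent set, by the same inclusion--exclusion argument used in Theorem~\ref{thm:mhv_p3}. Conversely, any coloring with $c(s_i)\notin V_i$ for some $i$ loses at least $M$ happy edges from the $M\cdot|V_i|$ vertex gadgets tied to $s_i$, while the total gain that such a deviation can produce across all edge gadgets is at most $2|E(G)|<M$ (each edge gadget contributes at most $3$ happy edges, with a baseline of at least $1$ in the non-deviating analysis). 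This yields the required $\W[1]$-hardness for the distance-to-$n\times P_3$ parameter.

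For the $n\times C_3$ variant, close each $P_3$ into a triangle by adding an edge between its two non-middle vertices. In every vertex gadget the added edge joins two vertices precolored $v$, is always happy, and contributes $nM$ extra happy edges (the target is lifted by the same amount to $3nM+kM+|E(G)|+kr$); in every edge gadget the added edge joins vertices precolored with distinct $u\ne v$ and is never happy, so the edge gadget counting transfers without change. The main technical obstacle throughout is the deviation bound in the edge gadgets when two selectors share a common color outside their partition, which is exactly what forces $M$ to be polynomially large in $|E(G)|$.
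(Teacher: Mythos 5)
Your reduction is correct and follows essentially the same route as the paper: reduce from \textsc{Regular Multicolored Independent Set}, keep the per-edge $P_3$ with precolored endpoints and the selector vertices, attach selectors to the middle vertices, add polynomially many auxiliary $P_3$'s precolored with a single color to pin each $s_i$ to a color of $V_i$, and close the $P_3$'s into triangles for the $C_3$ variant. The only differences are minor wiring choices (the paper also keeps the selector--endpoint edges and uses fully precolored auxiliary paths with all three vertices attached to $s_i$, whereas you attach only uncolored middles and take $M=2|E(G)|+1$ copies), and your counting, which I checked, goes through.
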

\ifshort
\begin{proofsketch}
		We adjust the reduction from \textsc{Regular Multicolored Independent Set} to \textsc{MHV} provided in the proof of Theorem \ref{thm:mhv_p3}.
		
		Given an instance $(G, k, V_1, V_2, \ldots, V_k)$ of \textsc{Regular Multicolored Independent Set}, we construct an instance $(G', p, k')$ of \MHEfull~as follows.
		
		Let $n=|V(G)|$, $m=|E(G)|$. $G'$ is constructed in the same way as in the proof of Theorem \ref{thm:mhv_p3}: for each edge $uv \in E(G)$, we introduce a path on three vertices $t_{uv}^u$, $e_{uv}$, $t_{uv}^v$, and set $p(t_{uv}^u)=u$, $p(t_{uv}^v)=v$, and $e_{uv}$ is left uncolored; then we introduce the selection gadget vertices $s_1, s_2, \ldots, s_k$, and introduce an edge between $s_i$ and $t_{uv}^u$ for each $i \in [k]$, $u \in V_i$ and $uv \in E(G)$. For each $i \in [k]$, $s_i$ is left uncolored.
		
		Additionally, we introduce edges new to this construction: for each $i, j \in [k]$ and each edge $uv \in E(G)$, such that $u \in V_i$ and $v \in V_j$, we introduce edges between $e_{uv}$ and $s_i$ and between $e_{uv}$ and $s_j$. In case $i=j$, we introduce only one edge.
		
		We also need additional precolored vertices in order for this reduction to work. For each $i \in [k]$, and each $v \in V_i$, we introduce $m$ new paths consisting of three vertices in $G'$: for each $j \in [m]$, we introduce a path $a_{v,j}^1$, $a_{v,j}^2$, $a_{v,j}^3$. Every vertex in these new paths we precolor with color $v$, i.e.\ $p(a_{v,j}^1)=p(a_{v,j}^2)=p(a_{v,j}^3)=v$ for each $j$, and connect by a newly-introduced edge to exactly one vertex of the selector gadget $s_i$. These auxiliary vertices will ensure that for each $i \in [k]$, $s_i$ is colored with one of the colors in $V_i$. Paths between these vertices are needed only to preserve the distance parameter.
		
		We finally set $k'=kr+(m+kr)+(3k+2n)\cdot m$ and argue that $(G, k, V_1, V_2,\ldots,$ $V_k)$ is a yes-instance of \textsc{Regular Multicolored Independent Set} if and only if $(G', p, k')$ is a yes-instance of \MHEfull.
		For the full proof of this fact, which is mostly by carefully counting each edge in $G'$, we refer to the full version of our paper.
		
		To prove the same for the distance to graphs being a disjoint union of cycles of length three, we note that in our construction of $G'$, endpoints of the paths are precolored vertices.
		Hence, we can add an edge between endpoints of each path, i.e.\ between $t_{uv}^v$ and $t_{uv}^u$ for each $uv \in E(G)$ and between $a_{v,j}^1$ and $a_{v,j}^3$ for each $v \in V(G)$ and $j \in [m]$, and just increase the parameter $k'$ by the number of newly-appeared happy edges. Namely, these are the edges between $a_{v,j}^1$ and $a_{v,j}^3$, thus we increase $k'$ by $n\cdot m$, and the other parts of the construction remain the same.
\end{proofsketch}
\else
\begin{proof}
	We adjust the reduction from \textsc{Regular Multicolored Independent Set} to \textsc{MHV} provided in the proof of Theorem \ref{thm:mhv_p3}.
	
	Given an instance $(G, k, V_1, V_2, \ldots, V_k)$ of \textsc{Regular Multicolored Independent Set}, we construct an instance $(G', p, k')$ of \MHEfull~as follows.
	
	Let $n=|V(G)|$, $m=|E(G)|$. $G'$ is constructed in the same way as in the proof of Theorem \ref{thm:mhv_p3}: for each edge $uv \in E(G)$, we introduce a path on three vertices $t_{uv}^u$, $e_{uv}$, $t_{uv}^v$, and set $p(t_{uv}^u)=u$, $p(t_{uv}^v)=v$, and $e_{uv}$ is left uncolored; then we introduce the selection gadget vertices $s_1, s_2, \ldots, s_k$, and introduce an edge between $s_i$ and $t_{uv}^u$ for each $i \in [k]$, $u \in V_i$ and $uv \in E(G)$. For each $i \in [k]$, $s_i$ is left uncolored.
	
	Additionally, we introduce edges new to this construction: for each $i, j \in [k]$ and each edge $uv \in E(G)$, such that $u \in V_i$ and $v \in V_j$, we introduce edges between $e_{uv}$ and $s_i$ and between $e_{uv}$ and $s_j$. In case $i=j$, we introduce only one edge.
	
	We also need additional precolored vertices in order for this reduction to work. For each $i \in [k]$, and each $v \in V_i$, we introduce $m$ new paths consisting of three vertices in $G'$: for each $j \in [m]$, we introduce a path $a_{v,j}^1$, $a_{v,j}^2$, $a_{v,j}^3$. We precolor every vertex in these new paths with color $v$, i.e.\ $p(a_{v,j}^1)=p(a_{v,j}^2)=p(a_{v,j}^3)=v$ for each $j$.
	Then we connect each of them by a newly-introduced edge to the vertex $s_i$ of the selector gadget. These auxiliary vertices will ensure that for each $i \in [k]$, $s_i$ is colored with one of the colors in $V_i$. Note that paths between these newly-introduced vertices $a^t_{v,j}$ are needed only to preserve the distance parameter.
	
	\begin{claim}\label{claim:mhe_aux_ens}
		In any optimal coloring $c$ of $G'$ extending $p$, $c(s_i) \in V_i$ for each $i \in [k]$.
	\end{claim}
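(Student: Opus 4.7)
The plan is a one-move exchange argument. Suppose for contradiction that $c$ is an optimal coloring of $(G',p)$ extending $p$ but $c(s_i)=w$ for some $i\in[k]$ with $w\notin V_i$. Fix any $v^*\in V_i$ and define $c'$ by $c'(s_i)=v^*$ and $c'(x)=c(x)$ for every $x\neq s_i$. Since $c$ and $c'$ differ only at $s_i$, the set of happy edges can only change among edges incident to $s_i$, so it suffices to show a strictly positive net gain for the happy edges with $s_i$ as an endpoint.

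I will partition the edges incident to $s_i$ into three groups according to the type of their other endpoint and bound the change in each. \emph{Auxiliary edges}: by construction (and as witnessed by the summand $3km$ in $k'$), every vertex $a_{v,j}^t$ with $v\in V_i$, $j\in[m]$, $t\in[3]$ is joined to $s_i$ and is precolored with color $v$. Under $c$ none of these $3|V_i|m$ edges is happy (because $w\notin V_i$), whereas under $c'$ exactly the $3m$ edges into the vertices $a_{v^*,j}^t$ are happy; net contribution $+3m$. \emph{Precolored $t$-edges}: the $|V_i|r$ edges of the form $s_i-t_{uv}^u$ with $u\in V_i$ are all unhappy in $c$, while exactly the $r$ of them with $u=v^*$ become happy in $c'$; net contribution $+r$. \emph{$e$-edges}: the number of $e$-vertices adjacent to $s_i$ equals the number of edges of $G$ incident to $V_i$, which is at most $|V_i|r$, so the net contribution of this group is at least $-|V_i|r$.

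Summing, the change in the happy-edge count is at least $3m+r-|V_i|r$. Regularity of $G$ gives $2m=\sum_{v\in V(G)}\deg(v)=nr$, hence $|V_i|r\le nr=2m$, and the total change is at least $3m+r-2m=m+r>0$ (the case $m=0$ is degenerate: $G$ has no edges and the input instance is trivial). This contradicts the optimality of $c$ and establishes that $c(s_i)\in V_i$ for every $i\in[k]$.

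The only delicate point is arranging the gain from the auxiliary edges to strictly dominate the worst-case loss in the $e$-edge group. This is precisely why the construction (i) uses $m$ parallel auxiliary paths per color $v\in V_i$ and (ii) attaches every one of the three vertices of each auxiliary path to $s_i$, giving a gain of $3m$ against a loss of at most $|V_i|r\le 2m$. Note that no re-optimization of the other coordinates of $c$ (colors of $e$-vertices or of other $s_j$) is needed, which keeps the bookkeeping clean.
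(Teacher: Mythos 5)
Your proof is correct and follows essentially the same route as the paper: a one-step exchange argument that recolors $s_i$ to a color in $V_i$ and shows a strictly positive net gain in happy edges incident to $s_i$. The only difference is bookkeeping — the paper bounds the loss by $3m$ (the total number of path vertices $t_{uv}^u, e_{uv}, t_{uv}^v$) and gains $3m+r$, while you bound the loss by $|V_i|r\le 2m$ via regularity and the handshake lemma; both yield the required contradiction.
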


	\begin{claimproof}
		Suppose $c$ is an optimal coloring of $G'$ extending $p$, but $c(s_i) \notin V_i$ for some $i \in [k]$. Then no edge between $s_i$ and $a_{v,j}^k$ is happy for any $j, k$ with respect to $c$. Hence, the only edges incident to $s_i$ that can be happy are edges between $s_i$ and vertices of the paths constructed for edges, i.e.\ $t_{uv}^u$ or $e_{uv}$. There are exactly $3m$ such vertices, thus $s_i$ is incident to at most $3m$ edges happy with respect to $c$. But for each $v \in V_i$, $s_i$ is adjacent to $3m$ vertices of type $a_{v,j}^k$ and $r$ vertices of type $t_{uv}^{v}$ precolored with color $v$ . Hence, if we change the color of $s_i$ in $c$ to one of the colors in $V_i$, we lose at most $3m$ happy edges, and win at least $3m+r$ happy edges, which contradicts the optimality of $c$.
	\end{claimproof}
	
	We finally set $k'=kr+(m+kr)+(3k+2n)\cdot m$ and argue that $(G, k, V_1, V_2,\ldots,$ $ V_k)$ is a yes-instance of \textsc{Regular Multicolored Independent Set} if and only if $(G', p, k')$ is a yes-instance of \MHEfull.
	
	Again, similarly to the proof of Theorem \ref{thm:mhv_p3}, let construct a coloring $c$ of $G'$ from a multicolored independent set $S$ of $G$ with $|S|=k$. The coloring is constructed almost in the same way as in the proof of Theorem \ref{thm:mhv_p3}: for each $i \in [k]$, we put $c(s_i)=v_i$, where $v_i \in S \cap V_i$, and for each $uv_i \in E(G)$ we put $c(e_{uv_i})=v_i$. The difference is in coloring vertices $e_{uv}$, where $u \notin S$ and $v \notin S$. Since $e_{uv}$ is now adjacent to one or two vertices of the selector gadget, one may win one happy edge by coloring $e_{uv}$ with one of the colors of the selector gadget vertices. Thus we put
	$$c(e_{uv})=\left\{
		\begin{matrix}
			u, && \text{if $u \in S$,} \\
			v, &&\text{if $v \in S$,} \\
		c(s_i)\text{ or }c(s_j), && \text{where $u \in V_i$ and $v \in V_j$, otherwise.}
		\end{matrix}
	\right.$$
	
	\begin{claim}\label{claim:mhe_s_to_c}
		There are exactly $k'=kr+(m+kr)+(3k+2n)\cdot m$ edges that are happy with respect to $c$.
	\end{claim}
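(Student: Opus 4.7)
I plan a direct edge-counting argument: partition $E(G')$ into four classes induced by the construction, count happy edges in each under the coloring $c$, and verify that the totals add to $k'$. The four classes are (i) the two internal edges of each auxiliary path $a^1_{v,j}, a^2_{v,j}, a^3_{v,j}$ together with its three edges to the associated $s_i$ (where $v \in V_i$); (ii) the two path edges $t^u_{uv}-e_{uv}$ and $e_{uv}-t^v_{uv}$ for each $uv \in E(G)$; (iii) the edges $s_i - t^u_{uv}$ with $u \in V_i$; and (iv) the edges between $e_{uv}$ and the selector gadget.

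Classes (i)--(iii) are routine. For (i), all three $a$-vertices of a path share the precolor $v$, so the $2nm$ internal edges are all happy; of the $3nm$ edges from $a$-paths to selectors, the happy ones are those with $c(s_i)=v$, and since $c(s_i)=v_i$ this picks out exactly the $3m$ edges per $i$ with $v=v_i$, contributing $3km$ in total. For (ii), the independence of $S$ implies at most one of $u,v$ belongs to $S$, and the rule defining $c(e_{uv})$ then makes exactly one of the two path edges happy precisely when $uv$ is incident to $S$, contributing $\sum_{u\in S}\deg_G(u) = kr$. For (iii), $s_i - t^u_{uv}$ is happy iff $c(s_i) = u$, i.e.\ $u = v_i \in S$, yielding another $kr$.

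The main obstacle, and the place needing the most care, is class (iv), which requires a case split on whether $u$ or $v$ lies in $S$ and on whether the classes $V_i, V_j$ coincide (since $i = j$ collapses the two potential selector-edges at $e_{uv}$ into one). I plan to show that for every $uv \in E(G)$ exactly one selector-edge at $e_{uv}$ is happy. When $u \in S$, the rule forces $c(e_{uv}) = u = v_i = c(s_i)$, while $c(s_j) = v_j \in V_j \not\ni u$ makes $s_j - e_{uv}$ unhappy when $i \ne j$, and only the single edge $s_i - e_{uv}$ exists when $i = j$; the case $v \in S$ is symmetric; and when neither endpoint is in $S$, the rule chooses $c(e_{uv}) \in \{c(s_i), c(s_j)\}$, again matching exactly one selector. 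This yields $m$ happy edges.

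Summing across the four classes gives $(2nm + 3km) + kr + kr + m = kr + (m + kr) + (3k + 2n)m = k'$, which is the desired count.
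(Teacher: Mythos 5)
Your proof is correct and follows essentially the same route as the paper's: an exhaustive partition of $E(G')$ by edge type with a happy-edge count in each class, summing to $k'$. The only difference is cosmetic — you merge the paper's six edge types into four classes (in particular, you count the $m$ happy selector-to-$e_{uv}$ edges in one step, where the paper splits them into $m-kr$ and $kr$ according to whether $uv$ meets $S$) — and your case analysis for class (iv), including the $i=j$ collapse, matches the paper's reasoning.
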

	\begin{claimproof}
		Let consider every type of edges in $G'$.
		\begin{enumerate}
			\item Edges inside of the path $a_{v,j}^1, a_{v,j}^2, a_{v,j}^3$ for any $v \in V(G)$ and $j \in [m]$.
			
			Each such path gives exactly $2$ happy edges, and there are $nm$ such paths. In total, there are $2nm$ edges of this type.
			
			\item Edges of type $(s_i, a_{v,j}^k)$, for any $i \in [k]$, $v \in V_i$, $j \in [m]$, and $k \in [3]$.
			
			Since $c(s_i)=v_i \in V_i$, only edges between $s_i$ and $a_{v_i, j}^k$ are happy for a fixed $i$. There are $k \cdot m \cdot 3$ possible options of choosing $i, j$ and $k$, hence these edges are $3km$ in total.
			
			\item Edges between $s_i$ and $t_{uv}^v$ for any $i \in [k]$, $v \in V_i$ and $uv \in E(G)$.
			
			Again, since $t_{uv}^v$ is precolored with color $v$, and $c(s_i)=v_i \in V_i$, only edges between $s_i$ and $t_{uv_i}^{v_i}$ are happy. There are exactly $r$ edges incident to $v_i$ in $G$, hence $s_i$ is adjacent to exactly $r$ vertices of type $t_{uv_i}^{v_i}$. In total, these sum up to $kr$ edges.
			
			\item Edges between $e_{uv}$ and $t_{uv}^v$ for any $uv \in E(G)$.
			
			Since $c(e_{uv})=c(t_{uv}^v)$ if and only if $v \in S$, for each fixed $v \in S$ there are exactly $r$ happy edges of such type. Hence, there are $kr$ such happy edges.
			
			\item Edges between $s_i$ and $e_{uv}$, for any $i \in [k]$, $v \in V_i$ and $uv \in E(G)$, where $u \notin S$ and $v \notin S$.
			
			For each such $uv \in E(G)$, we constructed $c$ so that $e_{uv}$ is colored in the color of one of its neighbours in the selector gadget. $e_{uv}$ is adjacent to one or two selector gadget vertices of distinct colors, hence it is adjacent to exactly one edge of such type. There are exactly $m-kr$ edges in $G$ with no endpoints in $S$, thus exactly $m-kr$ edges of type $(s_i, e_{uv})$ are happy in $G'$ with respect to $c$.
			
			\item Edges between $s_i$ and $e_{uv}$, for any $i \in [k]$, $v \in V_i$ and $uv \in E(G)$, where $v \in S$.
			
			As $v \in S$, $v \in S \cap V_i$, hence $c(s_i)=v$. Also, since $v \in S$, $c(e_{uv})=v$. Thus, each edge of such type in $G'$ is happy with respect to $c$, and there are $kr$ edges of such type.
		\end{enumerate}
	
		In total, we get that exactly
		$$2nm+3km+kr+kr+(m-kr)+kr=k'$$
		edges are happy in $G'$ with respect to $c$.
	\end{claimproof}

	Claim \ref{claim:mhe_s_to_c} shows that if $(G, k, V_1, V_2, \ldots, V_k)$ is a yes-instance of \textsc{RMIS}, then $(G', p, k')$ is a yes-instance of \textsc{MHE}. We now give a proof in the other direction.
	
	Let $c$ be a coloring of $G'$ extending $p$ such that at least $k'$ edges are happy in $G'$ with respect to $c$. We may assume that $c$ is an optimal coloring of $G'$, i.e.\ it yields the maximum possible number of happy edges in $G'$. Then, by Claim \ref{claim:mhe_aux_ens}, $c(s_i) \in V_i$ for every $i \in [k]$. 
	
	Again, we argue that $S=\{c(s_1), c(s_2), \ldots, c(s_k)\}$ is a multicolored independent set in $G$. We start proving this fact with the following claim.
	
	\begin{claim}\label{claim:mhe_Ekr}
		There are at least $m+kr$ happy edges incident to the vertices of type $e_{uv}$ with respect to $c$.
	\end{claim}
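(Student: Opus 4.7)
The plan is to prove this by a clean edge-accounting argument: I partition the edges of $G'$ into those incident to some vertex of type $e_{uv}$ and those that are not, count the number of happy edges of the second kind \emph{exactly} using Claim~\ref{claim:mhe_aux_ens}, and then subtract from the lower bound $k'$ on the total number of happy edges.

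First I would enumerate the edges of $G'$ not incident to any $e_{uv}$. From the construction of $G'$, every such edge falls into one of three mutually disjoint categories: (i) an internal edge of some auxiliary path $a_{v,j}^1 a_{v,j}^2 a_{v,j}^3$; (ii) a selector-to-auxiliary edge $(s_i, a_{v,j}^t)$ with $v \in V_i$; or (iii) a selector-to-$t$ edge $(s_i, t_{uw}^u)$ with $u \in V_i$ and $uw \in E(G)$. The remaining edges of $G'$, namely the path edges $(t_{uv}^u, e_{uv})$ and $(e_{uv}, t_{uv}^v)$ together with the newly introduced edges of the form $(e_{uv}, s_i)$, are precisely the edges incident to some $e_{uv}$.

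Next, writing $v_i := c(s_i)$, which lies in $V_i$ by Claim~\ref{claim:mhe_aux_ens}, I would count the happy edges in each of the three categories exactly, observing that the counts do not depend on the specific representatives $v_i$. Category (i) contributes $2nm$ happy edges since every vertex of each auxiliary path is precolored with the same color. Category (ii) contributes exactly $3m$ happy edges per $s_i$, one for each of the $3m$ auxiliary vertices precolored with $v_i$, giving a total of $3km$. Category (iii) contributes exactly $r$ happy edges per $s_i$, one for each of the $r$ neighbours of $v_i$ in $G$, giving a total of $kr$.

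Summing, exactly $2nm + 3km + kr$ happy edges of $G'$ are not incident to any $e_{uv}$. Since there are at least $k' = kr + (m + kr) + (3k + 2n)m$ happy edges in total, the number of happy edges incident to some $e_{uv}$ is at least
\[
k' - (2nm + 3km + kr) = m + kr,
\]
which is the desired bound. I do not foresee any real obstacle here; the argument is pure accounting, and the only care required is verifying that the three categories are mutually disjoint and together exhaust all edges not touching some $e_{uv}$, both of which follow directly from the description of $V(G')$ and $E(G')$.
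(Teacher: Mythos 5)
Your proposal is correct and follows essentially the same route as the paper: both arguments upper-bound the happy edges not incident to any $e_{uv}$ (the $2nm$ internal auxiliary-path edges, the $3km$ selector-to-auxiliary edges, and the $kr$ selector-to-$t_{uv}^u$ edges, using Claim~\ref{claim:mhe_aux_ens} to pin down $c(s_i)\in V_i$) and subtract from $k'$. The only cosmetic difference is that you count category (iii) exactly where the paper writes ``at most $kr$''; either suffices since only an upper bound on the non-incident happy edges is needed.
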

	\begin{claimproof}
		We bound the number of happy edges not incident to the vertices of type $e_{uv}$.
		
		From $c(s_i)=v_i \in V_i$ follows that exactly $(3k + 2n) \cdot m$ edges are happy with respect to $c$ among edges that are incident to auxiliary vertices $a_{v,j}^k$. These are exactly the edges of types $1$ and $2$ in the proof of Claim \ref{claim:mhe_s_to_c}, and happy edges among them are counted in the same way as in the proof.
		
		The only other edges not incident to the vertices of type $e_{uv}$ are edges between $s_i$ and $t_{uv}^v$ for each $i \in [k]$, $v \in V_i$ and $uv \in E(G)$. Again, analysis of these edges is the same as the analysis of the edges of type $3$ in the proof of Claim \ref{claim:mhe_s_to_c}, and their number is at most $kr$.
		
		The only happy edges left are edges incident to $e_{uv}$ for some $uv \in E(G)$, hence the number of happy edges among them is at least $k'-(3k+2n)\cdot m-kr=m+kr$.
	\end{claimproof}

	The following claim, along with Claim \ref{claim:mhe_Ekr}, allows us to move from counting happy edges to counting happy vertices in $G'$.
	
	\begin{claim}\label{claim:mhe_edges_to_vertices}
		For any $uv \in E(G)$, $e_{uv}$ cannot be incident to more than two happy edges in $G'$ with respect to any coloring $c$ extending $p$. Moreover, if $e_{uv}$ is incident to exactly two happy edges, then either $t_{uv}^v$ is happy or $t_{uv}^u$ is happy with respect to $c$.
	\end{claim}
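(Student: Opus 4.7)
The plan is to enumerate the (few) neighbours of $e_{uv}$ in $G'$, then do a short case analysis on $c(e_{uv})$, leveraging Claim~\ref{claim:mhe_aux_ens} (applicable here because in the broader argument $c$ is taken to be optimal). By construction, the neighbours of $e_{uv}$ are exactly $t_{uv}^u$ (precoloured $u$), $t_{uv}^v$ (precoloured $v$), and the selector-gadget vertices $s_i$ with $u \in V_i$ and $s_j$ with $v \in V_j$, where if $i=j$ there is only one such selector neighbour.

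First I would observe that the edges $(e_{uv}, t_{uv}^u)$ and $(e_{uv}, t_{uv}^v)$ cannot both be happy, since $u \neq v$. Next, using Claim~\ref{claim:mhe_aux_ens} I have $c(s_i) \in V_i$ and $c(s_j) \in V_j$, so when $i \neq j$ disjointness of $V_i, V_j$ forces $c(s_i) \neq c(s_j)$, and hence at most one of the edges $(e_{uv}, s_i)$, $(e_{uv}, s_j)$ is happy (when $i = j$ this is trivial, since only one such edge exists). Combining the two observations immediately gives at most two happy edges incident to $e_{uv}$.

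For the moreover part I would assume that exactly two of these edges are happy. By the above, one must be a $t$-edge and the other an $s$-edge. Say $(e_{uv}, t_{uv}^u)$ is the happy $t$-edge, so $c(e_{uv}) = u$. I would then use the same disjointness argument to rule out $(e_{uv}, s_j)$ being happy when $i \neq j$ (that would require $c(s_j) = u \in V_j$, contradicting $u \in V_i$). Therefore the happy $s$-edge is $(e_{uv}, s_i)$ and $c(s_i) = u$. Since $t_{uv}^u$ has exactly the two neighbours $e_{uv}$ and $s_i$ in $G'$ and both now carry colour $u = p(t_{uv}^u)$, the vertex $t_{uv}^u$ is happy. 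The case where the happy $t$-edge is $(e_{uv}, t_{uv}^v)$ is entirely symmetric and yields that $t_{uv}^v$ is happy.

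The only delicate point is the $i = j$ case, which actually simplifies the argument: there is a single selector neighbour, the disjointness step becomes vacuous, and one simply checks that $s_i$ must share the colour of the happy $t$-endpoint for two edges at $e_{uv}$ to be happy, after which the happiness of the corresponding $t$-vertex follows exactly as above.
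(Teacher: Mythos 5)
Your proof is correct and follows essentially the same route as the paper's: enumerate the (at most four) neighbours of $e_{uv}$, pair them off into the two $t$-vertices (distinctly precoloured) and the two selector vertices (distinctly coloured by Claim~\ref{claim:mhe_aux_ens}), and then observe that two happy edges force $c(e_{uv})=c(t_{uv}^x)=c(s_y)$ with $x\in V_y$, making $t_{uv}^x$ happy. Your explicit remark that the argument needs $c$ to be optimal (so that Claim~\ref{claim:mhe_aux_ens} applies, despite the claim being phrased for "any coloring") is a fair and accurate reading of how the claim is actually used in the paper.
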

	\begin{claimproof}
		Take any $uv \in E(G)$. The neighbours of $e_{uv}$ are vertices $t_{uv}^u$ and $t_{uv}^v$, and also $s_i$ and $s_j$, where $u \in V_i$ and $v \in V_j$. In case $i=j$, $e_{uv}$ has only three neighbour vertices.
		
		We know that $c(t_{uv}^v)\neq c(t_{uv}^u)$, as $p(t_{uv}^v)\neq p(t_{uv}^u)$. Hence, only one of the edges between $(e_{uv}, t_{uv}^v)$ and $(e_{uv}, t_{uv}^u)$ can be happy with respect to the same coloring $c$.
		
		The same holds for $s_i$ and $s_j$: since $c(s_i)\neq c(s_j)$, only one of the edges between $e_{uv}$ and $s_i$ or between $e_{uv}$ and $s_j$ can be happy at the same time. In case $i=j$, there is the only edge $(e_{uv}, s_i)$ that can be either happy or not.
		
		Thus, happy edges incident to $e_{uv}$ can sum up to no more than two edges. Suppose now that $e_{uv}$ is incident to exactly two happy edges. Then these edges are $(e_{uv}, t_{uv}^x)$ and $(e_{uv}, s_{y})$ for some $x \in \{u,v\}$ and some $y \in \{i,j\}$. Hence, $c(s_y)=c(t_{uv}^x)=x$. By Claim \ref{claim:mhe_aux_ens}, $x \in V_y$, i.e.\ it is either $x=u$ and $y=i$ or $x=v$ and $y=j$. Hence, $t_{uv}^x$ and $s_y$ are connected by an edge in $G'$, and, since $t_{uv}^x$ has exactly two neighbours $e_{uv}$ and $s_y$ and they share the same color, $t_{uv}^x$ is happy with respect to $c$.
	\end{claimproof}

	By Claim \ref{claim:mhe_Ekr} and Claim \ref{claim:mhe_edges_to_vertices}, at least $kr$ vertices of type $e_{uv}$ are incident to exactly two happy edges with respect to $c$. Hence, there are at least $kr$ vertices of type $t_{uv}^v$ that are happy in $G'$ with respect to $c$. Note that these vertices remain happy with respect to $c$ even if we remove auxiliary vertices $a_{v,j}^k$ and edges between $e_{uv}$ and $s_i$, i.e.\ return to the original construction of $G'$ in the proof of Theorem \ref{thm:mhv_p3}.
	
	Thus, coloring $c$ yields at least $kr$ happy vertices of type $t_{uv}^v$ in the original construction of $G'$ in the proof of Theorem \ref{thm:mhv_p3}, hence we use Claim \ref{claim:mhv_constructing_is} to finish the proof of the first part of this theorem in the same way.
	
	We thereby have shown that \MHE~is $\W[1]$-hard when parameterized by the distance to graphs being a disjoint union of paths on three vertices. To prove the same for the distance to graphs being a disjoint union of cycles of length three, we note that in our construction of $G'$, endpoints of the paths are precolored vertices.
	
	Hence, we can add an edge between endpoints of each path, i.e.\ between $t_{uv}^v$ and $t_{uv}^u$ for each $uv \in E(G)$ and between $a_{v,j}^1$ and $a_{v,j}^3$ for each $v \in V(G)$ and $j \in [m]$, and just increase the parameter $k'$ by the number of newly-appeared happy edges. Namely, these are the edges between $a_{v,j}^1$ and $a_{v,j}^3$, thus we increase $k'$ by $n\cdot m$, and the other parts of the construction remain the same.
\end{proof}
\fi

\begin{corollary}\label{cor:mhe_w1_structural}
	\MHEfull~is $\W[1]$-hard with respect to parameters pathwidth, treewidth or clique-width, distance to cographs, feedback vertex set number.
\end{corollary}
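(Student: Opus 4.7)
The plan is to lift the $\W[1]$-hardness from Theorem \ref{thm:mhe_p3_c3} to each of the five listed parameters by bounding every one of them above by a function of the distance to graphs of the form $n\times P_3$ (a disjoint union of three-vertex paths). Since Theorem \ref{thm:mhe_p3_c3} provides the required reduction and the constructed graph $G'$ has a modulator of size $\mathcal{O}(k)$ to $n\times P_3$ (namely, the selector gadget vertices $s_1,\ldots,s_k$), this reduces the task to purely structural observations about $n\times P_3$ graphs, exactly mirroring the proof of Corollary \ref{cor:mhv_w1_structural}.

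First I would handle distance to cographs and feedback vertex set number: since $P_3$ itself is a cograph (it is $P_4$-free) and cographs are closed under disjoint union, any $n\times P_3$ modulator is a cograph modulator; since $P_3$ is acyclic, such a modulator is also a feedback vertex set. Both hardness results then follow immediately from Theorem \ref{thm:mhe_p3_c3}.

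Second, for pathwidth I would exhibit a concrete path decomposition of width at most $|S|+1$, where $S$ is an $n\times P_3$ modulator and the components of $G\setminus S$ are three-vertex paths $v_{i,1}v_{i,2}v_{i,3}$ for $i\in[n]$. Taking the bag sequence
\[
S\cup\{v_{1,1},v_{1,2}\},\ S\cup\{v_{1,2},v_{1,3}\},\ S\cup\{v_{2,1},v_{2,2}\},\ \ldots,\ S\cup\{v_{n,2},v_{n,3}\},
\]
each vertex appears in a contiguous block of bags and every edge of $G$ (either incident to $S$ or internal to some path component) is covered. Treewidth is upper bounded by pathwidth, so the treewidth case comes for free. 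For clique-width I would then invoke the Corneil--Rotics inequality $\operatorname{cw}(G)\le 3\cdot 2^{\operatorname{tw}(G)-1}$, which combined with the treewidth bound yields $\W[1]$-hardness for clique-width.

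The argument is essentially routine given Theorem \ref{thm:mhe_p3_c3}, so the main obstacle is merely bookkeeping: I must verify that the modulator remains small and of the correct structure in both variants of the reduction. For the $P_3$ variant this is immediate, as deleting the $k$ selector vertices leaves only the paths $t_{uv}^u$--$e_{uv}$--$t_{uv}^v$ and $a_{v,j}^1$--$a_{v,j}^2$--$a_{v,j}^3$. For the $C_3$ variant (which gives the stronger hardness for distance to disjoint triangles but is irrelevant here), the same deletion turns these paths into triangles, which are still cographs with constant clique-width; since the $P_3$ variant already suffices for each parameter in the corollary, no further work is needed.
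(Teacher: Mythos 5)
Your proposal is correct and matches the paper's intent exactly: the paper states this corollary without proof because it follows from Theorem~\ref{thm:mhe_p3_c3} by precisely the same structural observations used in the proof of Corollary~\ref{cor:mhv_w1_structural} (cographs/forests closed under disjoint union of $P_3$'s, the explicit path decomposition of width $|S|+1$, treewidth bounded by pathwidth, and the Corneil--Rotics bound for clique-width). Your bookkeeping about the modulator in the $P_3$ variant of the reduction is also accurate, so nothing is missing.
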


The rest of the section focuses on the parameterized complexity of both \MHV~and \MHE~parameterized by the distance to cluster parameter.
We separate \MHE~and \MHV, showing that the former problem is $\W[1]$-hard with respect to this parameter, but the latter admits an \FPT-algorithm.
This answers an open question posed in works of Choudhari and Reddy \cite{Choudhari2018} and Misra and Reddy \cite{Misra2018}.

\begin{corollary}\label{cor:mhe_w1_cluster}
	\MHEfull~is $\W[1]$-hard when parameterized by the cluster vertex deletion number.
\end{corollary}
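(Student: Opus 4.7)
The plan is to derive this corollary directly from the second part of Theorem~\ref{thm:mhe_p3_c3}. Recall that a \emph{cluster graph} is, by definition, a disjoint union of cliques. A cycle of length three $C_3$ is isomorphic to the clique $K_3$, so any graph that is a disjoint union of triangles $C_3$ is a cluster graph. Consequently, if $S\subseteq V(G)$ is a modulator of $G$ to ``disjoint union of $C_3$'', i.e.\ $G\setminus S$ is a disjoint union of cycles of length three, then $S$ is also a cluster vertex deletion set of $G$. Hence the cluster vertex deletion number of any graph $G$ is at most the distance from $G$ to the class of disjoint unions of $C_3$s.

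Theorem~\ref{thm:mhe_p3_c3} yields a parameterized reduction from \textsc{Regular Multicolored Independent Set} (which is $\W[1]$-complete with respect to $k$) that produces an \MHEfull~instance $(G',p,k')$ together with a modulator of size $k$ whose removal leaves a disjoint union of triangles. By the observation above, this same modulator witnesses that the cluster vertex deletion number of $G'$ is at most $k$. Since the reduction runs in polynomial time and the new parameter is still bounded by a function of the original parameter $k$, it is also a valid parameterized reduction when the target parameter is the cluster vertex deletion number. Therefore \MHEfull~parameterized by the cluster vertex deletion number is $\W[1]$-hard.

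The only step that warrants care is making explicit that the modulator constructed in the proof of Theorem~\ref{thm:mhe_p3_c3} for the $C_3$-variant is exactly the same selector gadget $\{s_1,\ldots,s_k\}$ used in the $P_3$-variant, and that after adding the extra edges between the endpoints of each three-vertex path (both the edge-paths $t_{uv}^u,e_{uv},t_{uv}^v$ and the auxiliary paths $a_{v,j}^1,a_{v,j}^2,a_{v,j}^3$) the graph $G'\setminus\{s_1,\ldots,s_k\}$ is indeed a disjoint union of triangles. This is a direct inspection of the construction; there is no genuine obstacle, and no new hardness argument is required beyond invoking Theorem~\ref{thm:mhe_p3_c3}.
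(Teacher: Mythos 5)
Your proposal is correct and matches the paper's own argument: both simply observe that a disjoint union of triangles is a cluster graph (each $C_3$ is a clique $K_3$), so the modulator from the second part of Theorem~\ref{thm:mhe_p3_c3} is already a cluster vertex deletion set, and the $\W[1]$-hardness transfers directly. The extra verification you mention about the selector gadget is harmless but not needed beyond what the theorem already guarantees.
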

\begin{proof}
	Observe that graph consisting of  disjoint cycles of length three is a cluster graph. Then, by Theorem \ref{thm:mhe_p3_c3}, \textsc{MHE} is $\W[1]$-hard when parameterized by the distance to cluster graphs. 
\end{proof}

\begin{theorem}\label{thm:mhv_cluster_fpt}
	\MHVfull~can be solved in $\Ostar{(2d)^d}$ time, where $d$ is the distance to cluster parameter of the input graph.
\end{theorem}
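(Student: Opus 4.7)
The approach has two layers. First, compute a cluster modulator $S \subseteq V(G)$ with $|S|=d$ in FPT time (via the standard $\O^*(3^d)$ branching on induced $P_3$'s, which is dominated by the claimed bound). Since $G - S$ is a disjoint union of cliques, once a coloring $c_S$ of $S$ is fixed the remaining problem decouples across those cliques up to coordination through $S$-vertex happiness. Thus the outer layer enumerates $\Ostar{(2d)^d}$ candidate colorings of $S$, and the inner layer polynomially computes the best extension.

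\textbf{Outer enumeration.} For every $v \in S$ I would guess one of $2d$ labels that names the color of $v$: $d$ of the labels have the form ``$c(v) = c(u)$ for a specified $u \in S$'' (merging $v$ into $u$'s color class), and $d$ of the labels have the form ``$c(v)$ lies in the $i$-th fresh color slot'' for $i \in [d]$. Each guess $\sigma : S \to [2d]$ induces a partition of $S$ into color classes. The color of a class is forced when it contains a precolored vertex (and the profile is discarded if two precolors inside the same class disagree); otherwise the class is assigned a color distinct from all precolors and from every other fresh-slot color present in the profile. The correctness claim is an exchange argument: for any optimal coloring $c^*$, ordering $S$ and reading off the equalities among the values $c^*(v)$ yields a profile $\sigma^*$ whose induced candidate coloring agrees with $c^*$ on $S$ up to renaming of fresh colors that are invisible outside of $S$.

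\textbf{Inner routine and main obstacle.} Given $c_\sigma$, for each clique $C$ of $G - S$ I would iterate over the $O(d)$ candidate monochromatic colors $\gamma$ (the colors used by $c_\sigma$ plus one fresh color) and evaluate the happy count: a vertex $v \in C$ counts as happy iff its precolor (if any) equals $\gamma$ and every $S$-neighbour of $v$ has $c_\sigma$-color $\gamma$; a vertex $u \in S$ that remains in $\mathcal{H}(G,p)$ under $c_\sigma$ contributes iff every clique $C$ meeting $u$ is coloured $c_\sigma(u)$ on $N(u)\cap C$. A preprocessing pass turns each ``commit $u$ to be happy'' decision into a colour constraint on the cliques touching $u$, and these constraints merge cleanly with the per-clique search because conflicting constraints simply forbid certain $\gamma$'s for that clique and cost the conflicting $S$-vertex a happy count we have already accounted for. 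The main obstacle is the exchange argument sketched above: one must verify that labelling each $v \in S$ by either an $S$-representative or one of $d$ fresh slots captures every essentially distinct coloring of $S$, including the cases where an optimal coloring shares a fresh colour between a class of $S$ and several cliques of $G - S$. Once this is established, each of the $(2d)^d$ profiles is processed in polynomial time and the maximum is returned, giving the stated $\Ostar{(2d)^d}$ running time.
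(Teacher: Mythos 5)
There is a genuine gap, in fact two. The first and most damaging one is your treatment of the ``fresh'' color classes of $S$: you fix each such class to ``a color distinct from all precolors,'' and you justify this by claiming fresh colors are ``invisible outside of $S$.'' They are not: an unprecolored class of $S$ may need to take a color that appears in the precoloring of the \emph{cliques}. Concretely, let $S=\{u\}$ with $u$ uncolored and let $G-S$ be a single clique $C$ all of whose $m$ vertices are precolored $1$ and adjacent to $u$. The optimum colors $u$ with $1$ and makes all $m+1$ vertices happy; your rule forces $c_\sigma(u)\neq 1$, every vertex of $C$ then has a wrongly-colored neighbour, and your algorithm reports $0$. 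The paper's proof spends most of its effort on exactly this point: it keeps the colors of the free classes as symbolic variables $\alpha_1,\dots,\alpha_t$, computes for every pair (variable, actual color) how many clique vertices would become happy under that identification, and resolves all identifications simultaneously by a maximum-weight bipartite matching between variables and $[\ell]$. Some such global optimization step is unavoidable, because the best color for a class depends jointly on all cliques it touches, and it is entirely absent from your inner routine.

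The second gap is the happiness of the $S$-vertices themselves. You defer the decision of which $u\in S$ to ``commit to be happy'' to a preprocessing pass and assert the resulting constraints ``merge cleanly with the per-clique search,'' but this choice is a joint one over $2^{|S|}$ subsets (a single clique can be touched by several $S$-vertices demanding different colors, and each demand constrains \emph{all} cliques meeting that vertex), so it cannot be folded into an independent per-clique loop without argument. Moreover, committing several differently-colored $S$-vertices to be happy can force a clique to be colored \emph{non}-monochromatically on the relevant neighbourhoods (take $u_1,\dots,u_d\in S$ precolored $1,\dots,d$, each $u_i$ adjacent only to $w_i$ in a clique $\{w_1,\dots,w_d\}$: the optimum makes all $d$ vertices of $S$ happy, while any monochromatic coloring of the clique yields only $2$ happy vertices), so your restriction of the inner search to monochromatic $\gamma$ per clique also loses solutions. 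The paper resolves both issues at once by explicitly guessing the set $H\subseteq S$ of happy modulator vertices (a $2^d$ factor, which combined with the $d^d$ partitions still gives $\Ostar{(2d)^d}$ -- note your $(2d)^d$ labellings are a redundant encoding of at most $d^d$ partitions, so the budget is there) and propagating $\sigma(u)=\sigma(v)$ to every neighbour $u$ of every $v\in H$ before the cliques are processed; after that propagation the per-clique analysis is sound. Your outer enumeration of partitions of $S$ is fine, but without the guess of $H$, the constraint propagation, and the final matching, the inner routine does not compute an optimal extension.
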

\ifshort
\begin{proofsketch}
	We adapt algorithms of Misra and Reddy presented in \cite{Misra2018} in their proofs of \FPT\ membership result for both \MHV~and \MHE~parameterized by the vertex cover number and by the distance to clique parameters.
For a full proof we refer to the full version of our paper.
\end{proofsketch}
\else
\begin{proof}
	We adapt algorithms of Misra and Reddy presented in \cite{Misra2018} in their proofs of \FPT\ membership result for both \MHV~and \MHE~parameterized by the vertex cover number and by the distance to clique parameters.
	
	Let $(G,p,k,S)$ be an instance of \MHV, and $S$ is a given minimum modulator to cluster of $G$.
	We describe an algorithm that works in $\Ostar{d^d}$, where $d=|S|$ is the distance to cluster parameter of $G$.
	Note that it is not necessary that $S$ is given explicitly.
	To find $S$, one can simply consider $G$ as an instance of \textsc{Cluster Vertex Deletion} parameterized by the solution size, and employ one of the algorithms working in time $\Ostar{c^d}$, let it be a simple $\Ostar{3^d}$ running time algorithm \cite{jansen1997disjoint}, or more sophisticated ones, working in $\Ostar{2^d}$ \cite{Hffner2008} or even in $\Ostar{1.9102^d}$ \cite{Boral2015} running time.
	Note that this would not change the overall $\Ostar{d^d}$ running time, since $c$ is a constant value.
	
	To solve the problem, the algorithm finds an optimal coloring of $(G,p)$.
	Let $c$ be an arbitrary optimal coloring of $(G,p)$.
	Firstly, the algorithm guesses what vertices of $S$ are happy with respect to $c$.
	Clearly, there are $2^d$ options to choose a subset $H \subseteq S$, and the algoithm considers each one of them.
	From now on, let $H$ be a fixed guess of the algorithm, i.e.\ it assumes that $H$ is the set of vertices of $S$ that are happy in $(G,p)$ with respect to $c$.
	
	At the other hand, $c$ partitions vertices of $S$ into groups of the same color, in other words, into equivalence classes.
	Obviously, such partitions can be enumerated in $\Ostar{d^d}$ time (if $\ell<d$, there are at most $\ell^d$ such partitions, that is even less).
	The algorithm guesses a partition corresponding to $c$.
	Let $S_1 \sqcup S_2 \sqcup \ldots S_t = S$ be a fixed guessed partition, where $t \le d$.
	Formally, a partition corresponding to $c$ should satisfy $c(u)=c(v) \Leftrightarrow \exists i: \; u, v \in S_i$ for each pair of vertices $u, v \in S$.
	For each $i$, the vertices in $S_i$ are assigned the same color, denote this color by $\alpha_i$.
	The actual value of the colors $\alpha_i$ is not known to the algorithm.
	Thus, $\alpha_1, \alpha_2, \ldots, \alpha_t$ are color variables and the algorithm is to determine what actual colors they should correspond to.
	Importantly, for distinct $i$ and $j$, $\alpha_i$ and $\alpha_j$ should correspond to distinct colors in $[\ell]$.
	
	For convenience, we introduce a partial function $\sigma: V(G) \nrightarrow \{\alpha_1, \alpha_2, \ldots, \alpha_t\}$ to the algorithm.
	If specified, a value $\sigma(v)$ denotes a color variable that corresponds to the color $c(v)$.
	Since distinct variables correspond to different colors, $\sigma$ can be viewed as a partial coloring of the vertices of $G$, just with the color variables used instead of actual colors.
	If both $\sigma(u)$ and $\sigma(v)$ are specified for a pair of vertices $u,v$, then $\sigma(u)=\sigma(v)$ if and only if $c(u)=c(v)$.
	In other words, $\sigma$ agrees with $c$.
	Since $c$ is a coloring of $(G,p)$, $\sigma$ agrees with $p$ as well.

	The purpose of $\sigma$ is to reflect restrictions on a coloring that are implied by the fixed guesses of the algorithm.
	That is, $\sigma$ should agree with the set of happy vertices $H$, and with the partition $S_1, S_2, \ldots, S_t$.
	Clearly, for each $i \in [t]$ and for each $v \in S_i$, $\sigma(v)=\alpha_i$.
	Also, since all vertices in $H$ are happy with respect to $c$, for each $v \in H$ and for each $u \in N(v)$, $\sigma(u)$ should equal $\sigma(v)$. The algorithm assigns values of $\sigma$ so that these restrictions are satisfied.
	If the fixed guesses correspond to an actual coloring, the function $\sigma$ satisfying these restrictions exists and is found easily by the algorithm.
	If $\sigma$ cannot be found, the algorithm stops working with the currently fixed guesses, since they do not correspond to any coloring of $G$.
	Note that the restrictions do not ensure that all vertices in $S\setminus H$ are unhappy.
	We formulate the main property of $\sigma$ in the following claim.
	
	\begin{claim}
		Let $c'$ be a coloring that agrees with $\sigma$ constructed by the algorithm.
		Then all vertices in $H$ are happy with respect to $c'$ in $G$ and $c'$ partitions the vertices of $S$ according to $S_1, S_2, \ldots, S_t$.
	\end{claim}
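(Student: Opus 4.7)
The plan is to verify the claim by unpacking the construction of $\sigma$ directly. Recall that the algorithm explicitly enforces two families of constraints when assigning $\sigma$: for every $i \in [t]$ and $v \in S_i$ it sets $\sigma(v)=\alpha_i$, and for every $v \in H$ and every $u \in N(v)$ it sets $\sigma(u)=\sigma(v)$. Since the algorithm successfully constructed $\sigma$ (otherwise the current guess was abandoned), these assignments are mutually consistent. The agreement of $c'$ with $\sigma$ means, by the definition given in the setup, that $c'(x)=c'(y)$ whenever $\sigma(x)=\sigma(y)$ are both assigned, and $c'(x)\ne c'(y)$ whenever $\sigma(x),\sigma(y)$ are assigned to distinct variables (since distinct variables must correspond to distinct colors in $[\ell]$).

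First I would establish the partition statement. Fix $i,j\in[t]$ and pick $u\in S_i$, $v\in S_j$. By construction $\sigma(u)=\alpha_i$ and $\sigma(v)=\alpha_j$, so the agreement of $c'$ with $\sigma$ gives $c'(u)=c'(v)$ precisely when $i=j$. Hence the equivalence classes of $c'$ restricted to $S$ are exactly $S_1,\ldots,S_t$.

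Next I would establish the happiness statement. Fix $v\in H$; then $v\in S_i$ for some $i$, so $\sigma(v)=\alpha_i$. By the second family of constraints, for every $u\in N(v)$ we have $\sigma(u)=\sigma(v)=\alpha_i$, which by the agreement of $c'$ with $\sigma$ yields $c'(u)=c'(v)$. Since every neighbour of $v$ receives the same color as $v$, the vertex $v$ is happy in $G$ with respect to $c'$. Combining both conclusions gives the claim.

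The only genuine subtlety — and the one place where a proof attempt could fail — is ensuring the consistency of $\sigma$: one must verify that the two families of constraints do not contradict each other or contradict the precoloring $p$, e.g.\ a vertex $v\in H$ could have a neighbour $u\in S_j$ with $j\ne i$, in which case the algorithm would be forced to set $\alpha_i=\alpha_j$ and detect the inconsistency. The statement of the claim is conditional on $\sigma$ having been produced by the algorithm, so this case is exactly when the algorithm discards the guess; under the hypothesis that $\sigma$ exists, no such conflict arises, and the above deductions go through without further work.
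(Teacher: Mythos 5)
Your proof is correct and matches the paper's treatment: the paper states this claim without an explicit proof, as an immediate consequence of how $\sigma$ is constructed (every $v\in S_i$ gets $\sigma(v)=\alpha_i$, and every neighbour of a vertex in $H$ inherits that vertex's variable), and your argument is exactly that direct verification, including the correct reading of ``agrees with $\sigma$'' as an if-and-only-if condition on colors versus variables. Your closing remark about consistency being guaranteed by the hypothesis that the algorithm did not discard the guess is also the right way to handle the only potential pitfall.
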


	Now the algorithm starts to find values of the color variables.
	This can be viewed as a constructing an injective function $\lambda: \{\alpha_1, \alpha_2, \ldots, \alpha_t\} \nrightarrow [\ell]$.
	Since $\sigma$ agrees with $p$, some values of $\lambda$ can be determined by the algorithm: if for some vertex $v$ both $\sigma(v)$ and $p(v)$ are specified, then $\lambda(\sigma(v))=p(v)$.
	The algorithm constructs $\lambda$ so that this property is satisfied.
	If it is impossible to construct an appropriate injective $\lambda$, the algorithm stops working with the current guesses and continues with another ones.
	
	When found, $\lambda$ allows to extend both $\sigma$ and $p$.
	If a correspondence between a color variable $\alpha_i$ and a color $a$ was established, i.e.\ $\lambda(\alpha_i)=a$, then we may assume that $\sigma(v)=\alpha_i \Leftrightarrow p(v)=a$.
	According to this, the algorithm extends $\sigma$ and $p$.
	Note that $p$ is no more an initial precoloring of $G$, since it was extended according to $\sigma$.
	
	We now want each vertex of $G$ to be assigned either a color (by $p$) or a color variable (by $\sigma$).
	If for a vertex $v \in V(G)$, neither $\sigma(v)$ nor $p(v)$ is assigned, we call $v$ \emph{unassigned}.
	Recall that all vertices in $S$ are assigned a color variable by $\sigma$.
	It is left to assign colors (by $p$) or color variables (by $\sigma$) to each of the vertices of the cluster, i.e.\ vertices in $V(G)\setminus S$.
	It turns out to be possible since we are looking for an optimal coloring that agrees with $p$ and $\sigma$.
	$p$ and $\sigma$ already ensure happiness of all vertices in $H$, so the algorithm can focus directly on happiness of the vertices in $V(G)\setminus S$.
	
	Consider a connected component in the cluster graph $G\setminus S$, say, a clique $C$.
	There are a few cases to consider.
	If $C$ contains two vertices that are assigned distinct colors by $p$ or distinct colors by $\sigma$, then all vertices in $C$ are unhappy with respect to any coloring that agrees with $p$ and $\sigma$.
	Thus, vertices in $C$ can be colored arbitrarily and there are no happy vertices among them.
	The algorithm assigns an arbitary color, say color $1$, to each unassigned vertex in $C$.
	Now $|p(C)|\le 1$ and $|\sigma(C)|\le 1$, consider easier case $p(C)=\emptyset$.
	In this case, the algorithm assigns color variables to unassigned vertices in $C$.
	$C$ can yield a happy vertex only if all vertices in $C$ receive the same color variable.
	Since $|\sigma(C)|\le 1$, there is an optimal coloring in which all vertices in $C$ are colored with the same color.
	If $|\sigma(C)|=1$, this is simply the color variable in $\sigma(C)$.
	Otherwise, the algorithm can simply determine how many happy vertices will $C$ yield if a color variable $\alpha_i$ is chosen.
	The only neighbours of vertices in $C$ outside of $C$ are vertices in $S$, and each vertex in $S$ is assigned a color variable by $\sigma$.
	Thus, it is easy to determine for a vertex whether it is happy if the whole clique is assigned a color variable $\alpha_i$.
	The algorithm chooses a color variable that gives the maximum possible number of happy vertices in $C$, and assigns it to each vertex in $C$.
	
	It is left to consider $p(C)=\{a\}$.
	In this case, $C$ can yield a happy vertex only if all vertices in $C$ receive color $a$.
	Hence, there is an optimal coloring where each unassigned vertex in $C$ is colored with color $a$.
	The algorithm assigns color $a$ to each unassigned vertex in $C$.
	It is left to determine how many happy vertices does $C$ contain.
	In contrast with the previous two cases, this depends on which color variable does correspond to color $a$.
	In case $\sigma(C)=\{\alpha_i\}$, $C$ can yield happy vertices only if $\alpha_i$ corresponds to $a$, and it is easy to find the number of happy vertices in $C$.
	In case $\sigma(C)=\emptyset$, each vertex in $C$ is colored with color $a$.
	Therefore, $C$ can contain some vertices that are happy in any case, that is, vertices that have no neighbours outside $C$.
	Since these vertices are always happy, the algorithm does not count them.
	Each other vertex in $C$ has at least one neighbour in $S$.
	If it has two neighbours with distinct color labels assigned, it can never be happy.
	Otherwise, all of its neighbours in $S$ are assigned the same color label, say $\alpha_j$, and the vertex is happy if and only if $\alpha_j$ corresponds to $a$.
	Thus, for each color variable $\alpha_j$ we get that $C$ yields a certain number of happy vertices if $\alpha_j$ corresponds to $a$.
	
	Summing up these values over all clique components, we get a weighted bipartite graph $B$.
	Left part of the graph corresponds to the color variables $\{\alpha_1, \alpha_2, \ldots, \alpha_t\}$, and the right one corresponds to the colors $[\ell]$.
	An edge between $\alpha_i$ in the left part and $a$ in the right part is assigned a weight equal to the number of happy vertices in $V(G)\setminus S$ in case $\alpha_i$ gets corresponding to $a$ (not counting vertices that are happy independently of this choice).
	Some color variables can already be assigned a color by $\lambda$, and the graph should reflect that.
	That is, for each $\alpha_i$ with $\lambda(\alpha_i)$ assigned, there is only one edge incident to $\alpha_i$ in $B$, and this edge is $\alpha_ia$.
	For each other color variable, there is each of $\ell$ possible edges presented in $B$.
	Clearly, a maximum-weight matching $M$ in $B$ that saturates all color variables yields an optimal way to assign colors to the color variables.
	
	The algorithm constructs graph $B$ and finds a maximum matching $M$ in $B$ in polynomial time.
	If $\alpha_i$ gets connected to $a$ in $M$, the algorithm extends $\lambda$ with $\lambda(\alpha_i)=a$.
	Since $G$ no more contains unassigned vertices, the optimal coloring can be simply constructed from the values of $p$, $\sigma$ and $\lambda$.
	The pseudo-code of the algorithm procedure working with a single pair of guesses is presented in Fig. \ref{fig:algo:cluster}.
\begin{figure}
	\centering
	
	\begin{algorithm}[H]
\LinesNumbered
		\TitleOfAlgo{$\texttt{find\_coloring}(G, p, S, H, \{S_1, S_2, \ldots, S_t\})$}
		\KwOut{An optimal coloring $c'$ of $(G,p)$ corresponding to the partition $\{S_1, S_2, \ldots, S_t\}$ such that all vertices in $H$ are happy with respect to $c'$; or Nothing, if $c'$ does not exist.}
		\BlankLine
		\SetAlgoVlined
		\DontPrintSemicolon
		
		initialize $\sigma: V(G) \nrightarrow \{\alpha_1, \alpha_2, \ldots, \alpha_t\}$ with no value assigned\;
		\lForEach{$i \in [t]$, $v \in S_i$}{
				$\sigma(v) \leftarrow \alpha_i$
		}
		
		\ForEach{$v \in H$, $u \in N(v)$}{
				$\sigma(u) \leftarrow \sigma(v)$, or \Return Nothing if $\sigma(u) \neq \sigma(v)$
}
		
		initialize $\lambda: \{\alpha_1, \alpha_2, \ldots, \alpha_t\} \nrightarrow [\ell]$ with no value assigned\;
		
		\ForEach{$v \in V(G)$ with both $\sigma(v), p(v)$ assigned}{
			$\lambda(\sigma(v)) \leftarrow p(v)$, or \Return Nothing if $\lambda(\sigma(v))\neq p(v)$\;
}
		
		\If{$\exists i, j \in [t]$ with $i\neq j$ but $\lambda(\alpha_i)= \lambda(\alpha_j)$}{
			\Return Nothing\;
		}
		
		\ForEach{$\alpha_i$ with $\lambda(\alpha_i)$ assigned}{
			\ForEach{$v \in V(G)$ with $p(v)=\lambda(\alpha_i)$ or $\sigma(v)=\alpha_i$}{
				$p(v) \leftarrow \lambda(\alpha_i)$;
				$\sigma(v) \leftarrow \alpha_i$\;
			}
		}
		
		initialize bipartite graph $B$ on $(\{\alpha_1, \alpha_2, \ldots, \alpha_t\}, [\ell])$ with zero-weight edges according to $\lambda$\;
		
		\ForEach{connected component $C$ in $G \setminus S$}{
			\eIf{$|p(C)|\ge 2$ or $|\sigma(C)| \ge 2$}{
				\ForEach{$v \in C$ with both $p(v),\sigma(v)$ unassigned}{
					$p(v) \leftarrow 1$\;
				}
			}{
				\ForEach{$i \in [t]$}{
					$C_i \leftarrow \{v \in C \mid \sigma(N[v])=\{\alpha_i\}\}$
				}
				\eIf{$p(C)=\emptyset$}{
					$i \leftarrow$ $\operatorname{argmax}_{i \in [t]} |C_i|$\;
					\ForEach{$v \in C$}{
						$\sigma(v) \leftarrow \alpha_i$
					}
				}{
					$a \leftarrow $ the color in $p(C)$\;
					\ForEach{$v \in C$ with both $p(v),\sigma(v)$ unassigned}{
						$p(v)\leftarrow a$\;
					}
					\ForEach{$i \in [t]$}{
						increase the weight of edge $\alpha_i a$ in $B$ by $|C_i|$ if it exists\;
					}
				}
			}
		}
		
		$M \leftarrow$ maximum-weight matching in $B$ saturating $\{\alpha_1, \alpha_2, \ldots, \alpha_t\}$\;
		\ForEach{$\alpha_i a \in M$}{
			$\lambda(\alpha_i) \leftarrow a$\; 
		}
		
		\ForEach{$v \in V(G)$ with $p(v)$ unassigned}{
			$p(v)\leftarrow \lambda(\sigma(v))$\;
		}
		\Return $p$\;

	\end{algorithm}
	
	\caption{A procedure finding an optimal coloring for fixed partitions of $S$.}
	\label{fig:algo:cluster}
\end{figure}
	The algorithm applies this procedure to each guess of the algorithm, and chooses the best among the resulting colorings.
	The correctness of the algorithm follows from the discussion.
	It is formulated in the following claim.
	
	\begin{claim}
		Let $c$ be a coloring of $(G,p)$, $H$ be the set of vertices in $S$ that are happy with respect to $c$, $\{S_1, S_2, \ldots, S_t\}$ be the partition of $S$ into groups of the same color according to $c$.
		Then $\texttt{find\_coloring}(G, p, S, H, \{S_1, S_2, \ldots S_t\})$ outputs a coloring $c'$ that yields at least the same number of happy vertices as $c$.
	\end{claim}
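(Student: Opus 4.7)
The plan is to show two things about the run of $\texttt{find\_coloring}$ on the guesses derived from $c$: first, that none of the early-exit checks trigger (so the procedure actually returns a coloring $c'$), and second, that the number of happy vertices in $c'$ is at least the number in $c$, analysed separately on $S$ and on $V(G)\setminus S$.

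For the consistency checks, I would use $c$ as a direct witness. Setting $\sigma(v)=\alpha_i$ for $v\in S_i$ is forced by the partition, and for $v\in H$ all neighbours of $v$ share the colour $c(v)$, so the propagation $\sigma(u)\leftarrow\sigma(v)$ across $u\in N(v)$ never conflicts (if two assignments disagreed, $c$ would also disagree with itself). The definition $\lambda(\sigma(v))\leftarrow p(v)$ is well-defined because $c$ extends $p$ and all vertices sharing a label $\alpha_i$ have the same colour under $c$; distinct $\alpha_i,\alpha_j$ cannot map to a common colour in $[\ell]$ since $S_i$ and $S_j$ are distinct colour classes of $c$. Hence the procedure reaches the clique-component stage without returning \emph{Nothing}.

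Next I would compare happy-vertex counts component by component. For $S$: by construction $c'$ colours every $v\in S_i$ with $\lambda(\alpha_i)$, and the $H$-propagation step ensures every neighbour of each $v\in H$ receives the same label as $v$, hence the same colour; so $H$ is entirely happy in $c'$, matching the contribution of $S$ under $c$. For each clique $C$ in $G\setminus S$ I would split into the three cases of the algorithm. If $|p(C)|\ge 2$ or $|\sigma(C)|\ge 2$, no vertex of $C$ is happy in any extension consistent with the guesses, so $C$ contributes $0$ to both $c$ and $c'$. If $p(C)=\emptyset$, the procedure assigns the whole clique the colour variable $\alpha_i$ maximising $|C_i|$, which is by definition at least the number of happy vertices $C$ produces under $c$. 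If $|p(C)|=\{a\}$, every unassigned vertex must be coloured $a$ in any optimal extension, so the number of happy vertices in $C$ under any admissible coloring equals a constant (the vertices intrinsically happy because they have no neighbour in $S$ or only $\alpha_i$-neighbours with $\lambda(\alpha_i)=a$) plus the term $|C_i|\cdot[\lambda(\alpha_i)=a]$ summed over $i$ — which is exactly the weight deposited on edge $\alpha_i a$ in $B$.

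Summing over all cliques, the total number of happy vertices in $V(G)\setminus S$ under any coloring respecting $p,\sigma$ equals a fixed constant plus the weight of the matching induced by $\lambda$ in $B$. Since the procedure selects a maximum-weight matching saturating $\{\alpha_1,\ldots,\alpha_t\}$, and the assignment $\alpha_i\mapsto c(S_i)$ induced by $c$ is itself such a saturating matching (its weight equalling the happy-count of $c$ in $V(G)\setminus S$ minus the same constant, and it is compatible with the already-fixed values of $\lambda$ forced by $p$), the chosen matching weight is at least that of $c$, giving the desired inequality.

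The main obstacle will be the bookkeeping in the matching step: verifying that the sets $C_i$ defined in the algorithm really capture every vertex that becomes conditionally happy when $\lambda(\alpha_i)=a$, and no others, so that the weight of any saturating matching corresponds bijectively to a coloring's happy-vertex count on $V(G)\setminus S$. Once this correspondence is pinned down, optimality of the maximum-weight matching yields the claim immediately.
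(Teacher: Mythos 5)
Your argument is correct and follows essentially the same route as the paper, which gives no separate proof of this claim but instead declares that correctness ``follows from the discussion'' preceding it --- and that discussion is exactly what you reconstruct: using $c$ as a witness that no consistency check returns \emph{Nothing}, the per-clique case analysis on $|p(C)|$ and $|\sigma(C)|$, and the observation that the assignment $\alpha_i \mapsto c(S_i)$ induces a saturating matching in $B$ compatible with the forced values of $\lambda$, so the maximum-weight matching can only do better. The bookkeeping issue you flag at the end (that the sets $C_i$, together with the uncounted always-happy vertices, exactly account for the conditional happy vertices of each clique) is likewise left implicit in the paper.
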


	Since the procedure works in polynomial time for any given partition, the overall running time is $2^d \cdot d^d \cdot n^{\O(1)}$.
	This finishes the proof.
\end{proof}
\fi

\section{Obtaining \W[2]-hardness}\label{sec:w2}
We are grateful for the anonymous reviewers of this paper for sharing ideas of how the statements of Theorem \ref{thm:mhv_p3} and Theorem \ref{thm:mhe_p3_c3} can be changed to obtain \W[2]-hardness with respect to structural parameters, strengthening corrollaries \ref{cor:mhv_w1_structural}, \ref{cor:mhe_w1_structural} and \ref{cor:mhe_w1_cluster}.
This section is dedicated to these \W[2]-hardness results.

\begin{theoremO}\label{thm:mhv_w2hard}
	\MHVfull~is $\W[2]$-hard when parameterized by the distance to graphs that are a disjoint union of stars.
\end{theoremO}

\begin{proofO}
	The proof is by reduction from the \textsc{Colourful Red-Blue Dominating Set} problem.
	\begin{problemx}
		\problemtitle{\textsc{Colourful Red-Blue Dominating Set (CRBDS)} \cite{Cygan2011}}
		\probleminput{A bipartite graph $G=(R \sqcup B, E)$, an integer $k$, and a coloring $c: R \to [k]$.}
		\problemparameter{$k$}
		\problemquestion{Does there exist a set $D \subseteq R$ of $k$ distinctly colored vertices such that $D$ is a dominating set of $B$?}
	\end{problemx}
	In \cite{Cygan2011}, Cygan et al.\ proved that \textsc{Colourful Red-Blue Dominating Set} is \W[2]-hard with respect to $k$.	
	Let $(G=(R\sqcup B, E),k,c)$ be an input of \textsc{CRBDS}.
	We assume that for each $v \in B$, $\deg_G(v)>1$.
	If $\deg_G(v)=0$ for some $v \in B$, then $(G,k,c)$ is a no-instance.
	If $\deg_G(v)=1$ for some $v \in B$, then the only neighbour of $v$ should be taken into the answer set $D$, and the instance can be trivially reduced.
	Analogously, we assume that for each color $i \in [k]$, there are at least two distinct vertices in $B$ that are colored with the color $i$ by $c$.
	
	We show how to construct an instance $(G',p',k')$ of \MHV~in polynomial time, such that $(G,k,c)$ is a yes-instance of \textsc{CRBDS} if and only if $(G',p',k')$ is a yes-instance of \MHV.
	Additionally, $G'$ is a graph such that at most $k$ vertices can be deleted from it to obtain a disjoint union of stars.
	
	Start from $G'$ being a graph consisting of no vertices and no edges.
	For each $v \in B$, introduce a new vertex $v$ in $G'$.
	Then, for each $u \in N(v)$, introduce a new vertex $u_v$ to $G'$ and connect it with the vertex $v$ by an edge.
	Note that for each vertex $u\in B$, exactly $|N(u)|$ vertices are introduced in $G'$, that are vertices $u_{v_1}, u_{v_2}, \ldots, u_{v_d}$, where $N(u)=\{v_1, v_2,\ldots, v_d\}$.
	Observe that $G'$ is now a graph consisting of $|B|$ connected components, and each of them is a star.
	
	The set of colors used in the precoloring $p'$ of $G'$ is identified with the set $R$.
	For each copy of a vertex $u \in R$, that is, for each $u \in R$ and for each $v \in N(u)$, precolor $u_v$ with a color $u$, i.e.\ put $p'(u_v)=u$.
	Thus, $|R|$ colors are used in the precoloring $p'$ of $G'$.
	
	Now introduce the selector gadget in $G'$, that consists of exactly $k$ vertices.
	For each $i \in [k]$, introduce new vertex $s_i$ in $G'$.
	Connect the vertex $s_i$ with all vertices of type $u_v$, such that $c(u)=i$, where $c$ is the coloring from the instance of \textsc{CRBDS}.
	That is, take each vertex that is colored with the color $i$ in the initial instance of \textsc{CRBDS} and connect $s_i$ with each copy of this vertex in $G'$.
	The construction of $G'$ is finished.
	Note that only the copies of the vertices in $R$ are precolored by $p'$ in $G'$.
	Finally, put $k'=|B|$.
	
	Analogously to the proof of Theorem \ref{thm:mhv_p3}, we now show that if $k'$ happy vertices in $(G',p')$ are achievable with some coloring $c'$ of $G'$ extending $p'$, then $D=\{c'(s_1), c'(s_2), \ldots,$ $ c'(s_k)\}$ is an answer to the initial instance $(G,k,c)$.
	And vice versa, if $D$ is a colorful dominating set of $B$, then it is enough to color vertices of the selector gadget in $G'$ correspondingly to the vertices of $D$.
	We first need the following claim.
	
	\begin{claim}
		There is an optimal coloring $c'$ of $(G',p')$ such that $c(c'(s_i))=i$ for each $i \in [k]$.
	\end{claim}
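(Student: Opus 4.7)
The plan is to start with an arbitrary optimal coloring $c'$ of $(G',p')$ and argue that whenever $c(c'(s_i))\neq i$ for some $i \in [k]$, the vertex $s_i$ can be recolored without decreasing the number of happy vertices. Iterating this modification across all offending $i$ yields the desired optimal coloring.

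First, I would establish the key observation that $s_i$ itself can never be happy in any coloring extending $p'$. The neighborhood of $s_i$ in $G'$ consists exactly of the copies $u_v$ for $u \in c^{-1}(i)$ and $v \in N_G(u)$, and each such copy is precolored with color $u$. The assumption made earlier in the construction that $|c^{-1}(i)|\geq 2$ guarantees that the neighborhood of $s_i$ carries at least two distinct precolors, so no color assigned to $s_i$ can make all of its neighbors agree with it. Hence the happiness of $s_i$ itself drops out of the analysis whenever we consider recoloring it.

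Next, I would observe that if $c'(s_i)=w$ with $c(w)\neq i$, then no neighbor $u_v$ of $s_i$ is happy under $c'$: such a copy has precolor $u \in c^{-1}(i)$, and the edge $u_v s_i$ already witnesses the disagreement $u\neq w$, so $u_v$ is unhappy. Consequently the entire set $\{s_i\}\cup N(s_i)$ contributes zero happy vertices to $c'$.

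The main step is the recoloring itself. I would pick any $u^* \in c^{-1}(i)$ (which is nonempty by assumption) and change $c'(s_i)$ to $u^*$; this is a valid extension since $s_i$ is not precolored. The only vertices whose happiness can depend on $c'(s_i)$ are $s_i$ and its neighbors, so I would check these one case at a time: $s_i$ remains unhappy by the first observation; any neighbor $u_v$ with $u \in c^{-1}(i)\setminus\{u^*\}$ remains unhappy since $c'(s_i)=u^* \neq u$; and any neighbor $u^*_v$ was unhappy under the old $c'$ but may now become happy if $c'(v)=u^*$. In every case the number of happy vertices does not decrease. The main obstacle is not really an obstacle but the clean observation of the first step: once the multiplicity condition $|c^{-1}(i)|\geq 2$ is used to remove $s_i$'s own happiness from consideration, the case analysis over $N(s_i)$ is entirely routine.
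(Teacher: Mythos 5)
Your proposal is correct and follows essentially the same route as the paper's own proof: observe that $s_i$ can never be happy (because $|c^{-1}(i)|\ge 2$ gives it two distinctly precolored neighbours), observe that when $c(c'(s_i))\neq i$ none of its neighbours $u_v$ is happy either, and then recolor $s_i$ to a color $u^*$ with $c(u^*)=i$ without losing any happy vertices, iterating over all offending $i$. The only cosmetic difference is that the paper sets $c'(s_i)$ to the precolor of an arbitrary neighbour $u_v$ rather than an arbitrary $u^*\in c^{-1}(i)$, which amounts to the same thing.
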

	\begin{claimproof}
		Take a coloring $c'$ and suppose that $c(c'(s_i))\neq i$ for some $i \in [k]$.
		Note that $s_i$ cannot be happy with respect to $c'$ as it has at least two distinctly precolored neighbours.
		Note that for each neighbor $u_v$ of $s_i$, $c(c'(u_v))=i$ by the construction of $G'$.
		Hence, $c'(s_i)\neq c'(u_v)$, so $u_v$ is not happy with respect to $c'$.
		Thus, one can change the color of $s_i$ in $c'$ to an arbitrary color without losing any happy vertices.
		
		Pick an arbitrary neighbor $u_v$ of $s_i$ and assign $c'(s_i)=c'(u_v)$, so that $c(c'(s_i))=i$ is now satisfied.
		Proceed with another $i$ until $c'$ satisfies the claim statement.
	\end{claimproof}

	This allows us to formulate the next claim.
	
	\begin{claim}\label{claim:mhv_w2_mhv_to_rbds}
		Let $c'$ be a coloring of $(G',p')$ such that at least $k'$ vertices are happy in $G'$ with respect to $c'$ and $c(c'(s_i))=i$ for each $i \in [k]$.
		Then $D=\{c'(s_1), c'(s_2), \ldots,$ $ c'(s_k)\}$ is an answer to $(G,k,c)$.
	\end{claim}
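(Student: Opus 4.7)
\begin{claimproofO}
The plan is to classify which vertices of $G'$ can be happy under $c'$ and then exploit the tight budget $k'=|B|$. First, I would rule out the vertices in $B$ and the selector vertices $s_i$: each $v\in B$ has in $G'$ at least $\deg_G(v)\geq 2$ neighbors of the form $u_v$, each precolored with a distinct color (the palette of $p'$ is identified with $R$, and different $u\in N_G(v)$ give different labels), so $v$ cannot be happy. Each selector $s_i$ is adjacent to copies $u_v$ for all $u$ with $c(u)=i$; by the reduction assumption that at least two distinct vertices of $R$ share color $i$ under $c$, the neighborhood of $s_i$ contains precolored vertices of at least two distinct colors, so $s_i$ is unhappy as well.

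Next, I would bound the number of happy $u_v$'s per $v\in B$. A happy $u_v$ forces $c'(v)=u$, and since $c'(v)$ is a single color, at most one $u_v$ over $u\in N_G(v)$ can be happy. This yields at most $|B|=k'$ happy vertices in total. Since by hypothesis there are at least $k'$, the inequality is tight and for each $v\in B$ there is a unique $u\in N_G(v)$ such that $u_v$ is happy under $c'$.

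Finally, I would turn each such happy $u_v$ into a domination witness: since $s_{c(u)}$ is a neighbor of $u_v$, happiness forces $c'(s_{c(u)})=u$, hence $u\in D$. Therefore every $v\in B$ has a neighbor in $D$, so $D$ dominates $B$; the hypothesis $c(c'(s_i))=i$ gives that the $k$ elements of $D$ are pairwise distinctly $c$-colored, so $D\subseteq R$ is a colourful dominating set of $B$ of size $k$. The main obstacle I expect is not any single calculation but the bookkeeping between the two colorings involved, namely the \textsc{CRBDS} coloring $c:R\to[k]$ and the \MHV~coloring $c':V(G')\to R$, and verifying that the preliminary reductions on vertex degrees in $B$ and color multiplicities in $R$ are invoked precisely where needed to rule out the exceptional cases above.
\end{claimproofO}
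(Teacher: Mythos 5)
Your proposal is correct and follows essentially the same route as the paper's proof: rule out the star centers and selector vertices as potentially happy (using the degree and color-multiplicity assumptions), observe that each star contributes at most one happy leaf so the budget $k'=|B|$ is tight, and then read off a dominating neighbor $u\in D\cap N_G(v)$ for each $v\in B$ from the happy leaf $u_v$ via its selector neighbor $s_{c(u)}$. No gaps; the bookkeeping between $c$ and $c'$ is handled exactly as in the paper.
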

	\begin{claimproof}
		Note that the vertices of the selector gadget, i.e.\ vertices of type $s_i$ cannot be happy in $(G',p')$, as each of them has at least two neighbours with distinct colors in $p'$.
		For each $v \in B$, a copy of $v$ (one of $|B|$ star centers) cannot be happy in $(G',p')$ for the same reasons.
		Thus, the only vertices that can be happy in $(G',p')$ are precolored vertices.
		
		Note that in each of $|B|$ stars in $G'$, only one vertex can be happy in $(G',p')$ simultaneously, as all precolored vertices in $(G',p')$ are distinctly precolored, but a happy vertex should be colored by $c'$ with the same color as the star center.
		Since at least $k'=|B|$ vertices are happy in $G'$ with respect to $c'$, exactly one leaf vertex in each star is happy with respect to $c'$, and exactly $|B|$ vertices are happy in $G'$ with respect to $c'$ at all.
		
		That is, for each $v \in B$, there exists $u \in N(v)$, such that $c'(v)=c'(u_v)=u$.
		Note that $u_v$ is also connected to $s_{c(u)}$, and since $u_v$ is a happy vertex, $c'(s_{c(u)})=c'(u_v)$, hence $c'(v)=c'(s_{c(u)})$.
		This proves that $D$ is a dominating set of $B$, as for each vertex $v \in B$, $c'(v) \in N_G(v)$ and $c'(v) \in D$.
		The fact that $D$ consists of $k$ distinctly precolored vertices follows from the claim statement.
		Thus, $D$ is a colorful red-blue dominating set in $(G,k,c)$.
	\end{claimproof}

	The claim shows that if $(G',p',k')$ is a yes-instance, then $(G,k,c)$ is a yes-instance.
	We finally claim the other direction.
	
	\begin{claim}
		Let $D=\{u_1, u_2, \ldots, u_k\}$ be a dominating set of $B$ in $G$, where $u_i \in B$ and $c(u_i)=i$ for each $i \in [k]$.
		Extend the precoloring $p'$ to a full coloring $c'$ by putting $c'(s_i)=u_i$ for each $i\in[k]$ and, for each $v \in B$, put $c'(v)=u_j$, where $u_j \in D \cap N_G(v)$ is a vertex dominating $v$.
		Exactly $k'$ vertices are happy in $(G',p')$ with respect to $c'$.
	\end{claim}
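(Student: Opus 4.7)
The plan is to show that in the coloring $c'$ defined by the claim, the only vertices of $(G',p')$ that can possibly be happy are the precolored leaves of the stars, and then to count exactly how many such leaves are happy under $c'$.

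First I would rule out the other two types of vertices. Each selector gadget vertex $s_i$ is adjacent to every leaf $u_v$ with $c(u)=i$, and by the instance assumption there are at least two such vertices $u \in R$, so $s_i$ has neighbors precolored with at least two distinct colors and is therefore unhappy under any coloring extending $p'$. Likewise, each star center $v \in B$ satisfies $\deg_G(v) \geq 2$ by assumption, so $v$ is adjacent to at least two distinctly precolored leaves and is also unhappy under any extension of $p'$.

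Next I would analyze when a precolored leaf $u_v$ (of color $u$) is happy under $c'$. Its only neighbors are the star center $v$ and the selector vertex $s_{c(u)}$, so $u_v$ is happy if and only if $c'(v)=u$ and $c'(s_{c(u)})=u$. By the definition of $c'$ we have $c'(s_{c(u)})=u_{c(u)}$, the unique member of $D$ of color $c(u)$, so this equals $u$ exactly when $u \in D$. Similarly, $c'(v)=u_j$ where $u_j \in D \cap N_G(v)$ is the chosen dominator of $v$, so this equals $u$ exactly when $u$ is that chosen dominator.

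Combining the two conditions, for each $v \in B$ there is exactly one happy leaf, namely the copy $u_v$ where $u$ is the dominator chosen for $v$ (which exists because $D$ dominates $B$). Summing one happy leaf over every $v \in B$ gives exactly $|B|=k'$ happy vertices, matching the claim. I do not anticipate a serious obstacle; the only subtle point is carefully identifying the selector color $c'(s_{c(u)})=u_{c(u)}$ as the unique element of $D$ of color $c(u)$, which forces $u \in D$ before $u_v$ has any chance of being happy.
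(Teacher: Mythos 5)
Your proposal is correct and follows essentially the same route as the paper, which proves this claim by simply tracing the construction of $c'$ through the argument of the preceding claim: selector vertices and star centers are never happy, and each star contributes exactly one happy precolored leaf, namely the copy $u_v$ of the chosen dominator of $v$, giving $|B|=k'$ happy vertices in total. Your explicit verification that $c'(s_{c(u)})=u_{c(u)}$ forces $u\in D$ is exactly the point the paper leaves implicit.
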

	\begin{claimproof}
		To prove this claim, one can easily follow the construction of $c'$ and the proof of Claim \ref{claim:mhv_w2_mhv_to_rbds}.
	\end{claimproof}

	The last claim shows that if $(G,k,c)$ is a yes-instance, then $(G',p',k')$ is a yes-instance.
	The provided construction is polynomial.
	Deletion of the selector gadget makes $G'$ a disjoint union of star graphs, so the distance parameter is preserved.
	The proof is complete.
\end{proofO}

\begin{theoremO}
	\MHEfull~is $\W[2]$-hard when parameterized by the distance to graphs that are a disjoint union of stars and is $\W[2]$-hard when parameterized by the distance to graphs that are a disjoint union of cliques.
\end{theoremO}

\begin{proofO}
	As in the proof of Theorem \ref{thm:mhv_w2hard}, we again reduce from \W[2]-hard \textsc{Colourful Red-Blue Dominating Set}.
	Given an instance $(G,k,c)$ of \textsc{CRBDS}, we construct an instance $(G',p',k')$ in polynomial time.
	
	In the same way as in the proof of Theorem \ref{thm:mhv_w2hard}, start with $G'$ being an empty graph; for each $v \in B$ introduce a star in $G'$ with the center in $v$ and vertices $u_v$ for each $u \in N_G(v)$, precolored as $p'(u_v)=u$.
	Introduce the selector gadget vertices $s_1, s_2, \ldots, s_i$ to $G'$, but now connect them only to each of the star centers.
	Thus, the star centers and the vertices of the selector gadget induce a complete bipartite graph in $G'$.
	Note that the vertices of the selector gadget are not in any way connected to the precolored vertices of type $u_v$, as it was before in the proof of Theorem \ref{thm:mhv_w2hard}.
	To avoid a situation when it is profitable to color $s_i$ in a way that $c(c'(s_i))=i$, for each $u \in R$ introduce $|B|$ new vertices in $G'$, precolor them with color $u$ and connect them with $s_{c(u)}$.
	Finally, ask to make at least $k'=(2+k)\cdot|B|$ edges happy in $(G',p')$.
	The construction of $(G',p',k')$ is finished.
	Note that the deletion of the selector gadget from $G'$ still makes $G'$ a disjoint union of stars.
	
	We continue the proof with the series of claims similar to one in the proof of Theorem \ref{thm:mhv_w2hard}.
	
	\begin{claim}
		There is an optimal coloring $c'$ of $(G',p')$ such that $c(c'(s_i))=i$ for each $i \in [k]$. 
	\end{claim}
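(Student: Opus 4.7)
\begin{proofO}
The plan is to show that for any optimal coloring $c'$ violating the claim for some index $i \in [k]$, one can modify $c'$ by changing only the color of $s_i$, obtaining a coloring that is at least as good and satisfies $c(c''(s_i))=i$. Since no two selector gadget vertices are adjacent and the happiness of each edge depends only on the colors of its two endpoints, the modifications for different indices $i$ are independent, so iterating yields an optimal coloring satisfying the claim.

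The key step is to bound the gain and loss when changing $c'(s_i)$ to a suitable color. I would observe that the neighborhood of $s_i$ in $G'$ consists of exactly two kinds of vertices: the $|B|$ star centers, and the auxiliary vertices. By construction, the auxiliary neighbors of $s_i$ are precisely the vertices precolored with some $u \in R$ satisfying $c(u)=i$, and for each such $u$ there are exactly $|B|$ copies. We may assume that for each $i \in [k]$ there is at least one $u \in R$ with $c(u)=i$, otherwise $(G,k,c)$ is trivially a no-instance and can be discarded.

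Now fix some $u^* \in R$ with $c(u^*) = i$ and define $c''$ to agree with $c'$ everywhere except at $s_i$, where we set $c''(s_i)=u^*$. Let $n_w := |\{v \in B : c'(v)=c'(s_i)\}|$ and $n_{u^*} := |\{v \in B : c'(v)=u^*\}|$. Changing the color of $s_i$ affects only edges incident to $s_i$. Among the edges to star centers the number of happy ones goes from $n_w$ to $n_{u^*}$; among the edges to auxiliary vertices, previously none were happy (since $c(c'(s_i)) \neq i$ implies $c'(s_i) \neq u$ for every color $u$ precoloring an auxiliary neighbor of $s_i$), while after the change exactly $|B|$ are happy (those precolored with $u^*$). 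Hence the net change in the number of happy edges is $|B| + n_{u^*} - n_w$, which is non-negative because $n_w \le |B|$ and $n_{u^*} \ge 0$. Thus $c''$ is also optimal.

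The main obstacle I anticipate is purely bookkeeping: one has to confirm that no happy edge outside the star $\{s_i\} \cup N(s_i)$ is disturbed, and that the count of auxiliary edges is exactly $|B|$ per color class. Both facts follow directly from the construction of $G'$ (selector gadget vertices form an independent set, and the auxiliary vertices attached to $s_i$ are partitioned into groups of size $|B|$ indexed by $\{u \in R : c(u)=i\}$). After verifying these facts, one applies the above swap for each index $i$ with $c(c'(s_i))\neq i$ in turn; since each swap only touches the color of $s_i$, previous swaps are preserved, and the process terminates in at most $k$ steps with an optimal coloring satisfying the claim.
\end{proofO}
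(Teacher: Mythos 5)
Your proposal is correct and follows essentially the same exchange argument as the paper: recolor $s_i$ to some $u^*\in R$ with $c(u^*)=i$, observe that before the swap only the (at most $|B|$) star-center edges incident to $s_i$ can be happy while afterwards the $|B|$ auxiliary edges to the copies precolored $u^*$ become happy, so the swap never decreases the number of happy edges, and iterate over $i$. Your accounting via $n_w$ and $n_{u^*}$ is just a slightly more explicit version of the paper's bound (it even avoids a small typo in the paper, which writes $u\in B$ where $u\in R$ is meant).
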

	\begin{claimproof}
		Take an optimal coloring $c'$ of $(G',p')$ and suppose that $c(c'(s_i))\neq i$ for some $i \in [k]$.
		For each $v \in B$, $s_i$ is connected to $v$ in $G'$.
		Note that any other neighbour of $s_i$ in $G'$ is precolored by $p'$ with a color $u$ such that $c(u)=i$.
		The color of $s_i$ in $c'$ does not satisfy this property, so only edges that are happy in $G'$ with respect to $c'$ and star centers.
		Hence, at most $|B|$ edges incident to $s_i$ are happy in $G'$ with respect to $c'$.
		
		Now take an arbitrary vertex $u \in B$ with $c(u)=i$ and change the color of $s_i$ in $c'$ to $u$.
		$s_i$ is now incident to at least $|B|$ happy edges.
		At most $|B|$ happy edges was lost with such operation, and at least $|B|$ happy edges was gained.
		Since $c'$ was an optimal coloring, the number of happy edges remained the same and $c'$ remains optimal.
		Continue this process until $c'$ satisfies the claim statement.
	\end{claimproof}

	\begin{claim}
		Let $c'$ be a coloring of $(G',p')$ such that at least $k'$ edges are happy in $G'$ with respect to $c'$ and $c(c'(s_i))=i$ for each $i \in [k]$.
		Then $D=\{c'(s_1), c'(s_2),$ $ \ldots,$ $ c'(s_k)\}$ is an answer to $(G,k,c)$.
	\end{claim}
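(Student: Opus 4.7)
The plan is to establish this claim by a tight edge-counting argument. From the hypothesis $c(c'(s_i))=i$ it is immediate that each $c'(s_i)$ lies in $R$ (the domain of $c$) and that the $k$ elements $c'(s_1),\dots,c'(s_k)$ carry pairwise distinct colors under $c$, so $D$ is automatically a set of $k$ distinctly colored vertices of $R$. All the remaining work is to deduce from the happy-edge hypothesis that $D$ is a dominating set of $B$ in $G$.

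To do so, I would partition the edges of $G'$ into three types and upper-bound the happy edges of each type so that the bounds sum to precisely $k'=(k+2)|B|$. The three types are: (i) star edges $(v,u_v)$ for $v\in B$ and $u\in N_G(v)$; (ii) selector-center edges $(s_i,v)$ for $i\in[k]$ and $v\in B$; and (iii) auxiliary edges from some $s_i$ to a vertex precolored in a color $u$ with $c(u)=i$. For type (i), the leaves at a fixed $v\in B$ are precolored in pairwise distinct colors, so at most one star edge at $v$ can be happy, yielding at most $|B|$ happy edges in total. For type (ii), the colors $c'(s_1),\dots,c'(s_k)$ are pairwise distinct, so at most one of the $k$ edges at each $v\in B$ is happy, again yielding at most $|B|$. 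For type (iii), the condition $c(c'(s_i))=i$ makes exactly $|B|$ edges happy at each $s_i$ (one per auxiliary copy of the vertex $c'(s_i)$), contributing exactly $k|B|$.

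Adding these gives at most $(k+2)|B|=k'$ happy edges, so the assumption of at least $k'$ happy edges forces the bounds in (i) and (ii) to be tight. Consequently, for every $v\in B$ there is a happy star edge at $v$, which forces $c'(v)\in N_G(v)$, and a happy selector-center edge at $v$, which forces $c'(v)=c'(s_i)\in D$ for some $i$. Together these give $c'(v)\in N_G(v)\cap D$ for every $v\in B$, so $D$ dominates $B$ in $G$ and therefore answers the \textsc{CRBDS} instance. The only step that requires any real care is the type (iii) count, which is where the auxiliary gadget earns its keep by making that contribution an equality rather than an inequality; once the three bounds are in hand, the conclusion is essentially mechanical.
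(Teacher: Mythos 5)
Your proof is correct and follows essentially the same route as the paper: both arguments count exactly $k|B|$ happy auxiliary edges (using $c(c'(s_i))=i$), bound the happy edges incident to each star's vertex set by two (one star edge, one selector--center edge), and use tightness at $k'=(k+2)|B|$ to extract, for each $v\in B$, a neighbour $u$ with $c'(v)=u=c'(s_i)\in D$. The only cosmetic difference is that you split the star-incident edges into two explicit types where the paper bounds them jointly per star.
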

	\begin{proof}
		Note that for each $v \in B$, a star centered in the copy of $v$ in $G'$ can contain at most one happy edge with respect to $c'$.
		Moreover, the vertex set of this star can be incident to at most two happy edges: one inside the star and one going from the star center to a vertex of the selector gadget.
		
		Edges that are not incident to the star components are edges between the selector gadget and the auxiliary vertices, that were introduced to ensure that $c(c'(s_i))=s_i$ for each $i \in [k]$.
		For each vertex $s_i$ of the selector gadget, there are exactly $|B|$ such happy edges incident to it, so there are exactly $k \cdot |B|$ such edges in total.
		Hence, at least $2|B|$ happy edges are incident to the star components.
		Since there are exactly $|B|$ stars and at most two happy edges can be incident to each one of them, exactly two happy edges are incident to each of the $|B|$ star centers.
		
		That is, for each $v \in B$, $c'(v)=c'(u_v)$ for some $u \in N_G(v)$, and $c'(v)=c'(s_i)$ for some $i \in [k]$.
		Similarly to the proof of Claim \ref{claim:mhv_w2_mhv_to_rbds}, this leads to that $D$ is an answer to $(G,k,c)$.
	\end{proof}

	The following claim again resembles a claim from the proof of Theorem \ref{thm:mhv_w2hard} and is clear.
	
	\begin{claim}
		Let $D=\{u_1, u_2, \ldots, u_k\}$ be a dominating set of $B$ in $G$, where $u_i \in B$ and $c(u_i)=i$ for each $i \in [k]$.
		Extend the precoloring $p'$ to a full coloring $c'$ by putting $c'(s_i)=u_i$ for each $i\in[k]$ and, for each $v \in B$, put $c'(v)=u_j$, where $u_j \in D \cap N_G(v)$ is a vertex dominating $v$.
		Exactly $k'$ edges are happy in $(G',p')$ with respect to $c'$.
	\end{claim}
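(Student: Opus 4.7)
My plan is to verify the happy-edge count directly by partitioning $E(G')$ into the three natural groups corresponding to the three construction phases, and enumerating happy edges in each group under the coloring $c'$ defined in the claim.

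First, I would observe that the edges of $G'$ fall into three disjoint classes that together cover all of $E(G')$: (i) the star edges joining each center $v \in B$ to its leaves $u_v$ for $u \in N_G(v)$; (ii) the edges of the complete bipartite graph between the selector vertices $\{s_1,\ldots,s_k\}$ and the star centers $\{v : v \in B\}$; and (iii) the auxiliary edges joining, for each $u \in R$, the $|B|$ copies precolored with color $u$ to $s_{c(u)}$. By construction of $G'$ these three classes are disjoint and cover every edge.

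Next, I would count the happy edges class by class. For class (i), since $c'(v) = u_j$ for some $u_j \in D \cap N_G(v)$, the unique leaf of the star centered at $v$ whose precolor matches $c'(v)$ is $(u_j)_v$; all other leaves of $v$ are precolored with colors distinct from $u_j$ and cannot produce a happy edge. Hence class (i) contributes exactly $|B|$ happy edges. For class (ii), we have $c'(v) = u_j = c'(s_j)$ for exactly one index $j \in [k]$, so each star center contributes precisely one happy edge incident to the selector gadget, giving another $|B|$. For class (iii), an auxiliary vertex precolored with $u$ forms a happy edge with $s_{c(u)}$ iff $u = c'(s_{c(u)}) = u_{c(u)}$, i.e.\ iff $u \in D$; each of the $k$ elements of $D$ accounts for $|B|$ such auxiliary copies, for a total of $k|B|$ happy edges.

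Adding the three contributions gives $|B| + |B| + k|B| = (2+k)|B| = k'$, as required. The one point that needs genuine care is justifying that the three edge classes are indeed disjoint and exhaustive, so that no happy edge is missed or double-counted; the other delicate observation is that within each star the leaves are pairwise distinctly precolored, ensuring that $(u_j)_v$ is the unique leaf contributing a happy edge when $v$ is assigned color $u_j$.
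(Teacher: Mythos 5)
Your proof is correct, and it carries out exactly the verification the paper has in mind: the paper declares this claim ``clear'' by analogy with its earlier edge-class count (one happy edge per star, one happy selector--center edge per star center, and $k\cdot|B|$ happy auxiliary edges), which is precisely your three-class enumeration. The only point worth noting is that the hypothesis should read $u_i \in R$ rather than $u_i \in B$ (a typo in the statement), which your argument implicitly and correctly assumes.
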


	We resembled the proof of Theorem \ref{thm:mhv_w2hard} by proving a similar chain of claims for \MHE.
	Thus, the \W[2]-hardness of \MHE~with respect to the distance to a disjoint union of stars is proven.
	It is left to prove the same for the distance to cluster parameter.
	
	Note that after the deletion of the selector gadget $G'$ becomes a graph, where each component is a star where each leaf vertex is a precolored vertex.
	Moreover, all leaf vertices of each star are distinctly precolored.
	Thus, complementing each star to a clique in $G'$ yields edges that cannot be happy with respect to any coloring extending $p'$.
	Obtain a graph $G''$ by complementing each abovementioned star in $G'$ to a clique.
	Clearly $(G',p',k')$ and $(G'',p',k')$ are equivalent instances of \MHE, and $G''$ has the distance to cluster parameter being at most $k$.
	This finishes the proof.
\end{proofO}

\begin{corollaryO}
	\MHVfull~and \MHEfull~are both $\W[2]$-hard with respect to parameters pathwidth, treewidth or clique-width, distance to cographs, feedback vertex set number.
\end{corollaryO}
\begin{proofO}
	The proof is similar to the proof of Corollary \ref{cor:mhv_w1_structural}:
	star graphs are simultaneously cographs and forests, their pathwidth is equal to $1$.
\end{proofO}
	\section{\MHVfull~and \textsc{Node Multiway Cut}}\label{section:cut-relation}

This section reveals the connection between \MHVfull~and \textsc{Node Multiway Cut}.
This connection is a natural supplement of the straightforward connection of the edge versions of the problems, \MHEfull~and \textsc{Multiway Cut}.
It is more convenient for us to use a variation of \textsc{Node Multiway Cut}, called \textsc{Group Multiway Cut}, where terminal groups are used instead of singleton terminals. 

\begin{problemx}
	\problemtitle{\textsc{Group Multiway Cut} \cite{Chitnis2013}}
	\probleminput{A graph $G$ and pairwise disjoint sets of terminals $\{T_1, T_2, \ldots, T_\ell\}$, and an integer $k$.}
	\problemquestion{Is there a set $S \subseteq V(G)$ of size at most $k$ such that $G\setminus S$ has no $u-v$ path for any $u \in T_i, v \in T_j$ and $i\neq j$?}
\end{problemx}

We start with the following crucial lemma.

\begin{lemma}\label{lemma:mhv_paths}
	Let $G$ be a graph with precoloring $p$. Let $H \subseteq \mathcal{H}(G,p)$ be an arbitrary subset of its potentially happy vertices. Then a coloring $c$ extending $p$, so that all vertices in $H$ are happy with respect to $c$, exists if and only if there exists no path $u_1, u_2, \ldots, u_t$ in $G$, such that $u_1$ and $u_t$ are precolored and $p(u_1)\neq p(u_t)$, and for each $i \in [t-1]$, either $u_i$ or $u_{i+1}$ is in $H$.
\end{lemma}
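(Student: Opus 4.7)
The plan is to prove the two directions of the equivalence separately. The forward (``only if'') direction will be a short propagation argument along the hypothetical path, while the backward (``if'') direction will build the desired coloring $c$ from a natural equivalence relation on $V(G)$ induced by $H$.

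For the forward direction I would assume that a coloring $c$ extending $p$ exists such that every vertex of $H$ is happy with respect to $c$, and argue by contradiction. Given a path $u_1, u_2, \ldots, u_t$ as in the statement, for each $i \in [t-1]$ at least one of $u_i, u_{i+1}$ belongs to $H$ and is therefore happy with respect to $c$; since a happy vertex shares its color with all of its neighbours, this forces $c(u_i) = c(u_{i+1})$. Iterating along the path yields $c(u_1) = c(u_t)$, contradicting $p(u_1) \neq p(u_t)$ because $c$ extends $p$.

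For the backward direction I would define a binary relation $R$ on $V(G)$ by $uRw$ iff $uw \in E(G)$ and $\{u,w\}\cap H \neq \emptyset$, and let $\sim$ be its reflexive-transitive closure. The key claim is that, under the hypothesis of the lemma, every equivalence class of $\sim$ is consistent with $p$, meaning that all precolored vertices inside it share a common precoloring color: two precolored $\sim$-equivalent vertices with distinct precoloring colors would be joined by a walk whose consecutive edges all lie in $R$, and shortening this walk to a simple path (removing any repeated vertices preserves the $R$-membership of the remaining edges) would yield exactly the forbidden configuration of the statement. Hence a ``class color'' is well defined whenever the class contains any precolored vertex.

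The coloring $c$ is then built class by class: for each equivalence class that contains a precolored vertex assign its common precoloring color to every vertex of the class, and for every remaining class assign an arbitrary fixed color in $[\ell]$. This $c$ extends $p$ by construction, and for every $v \in H$ and every $u \in N(v)$ the edge $uv$ witnesses $uRv$, so $c(u) = c(v)$ and therefore $v$ is happy. The main subtle point that I expect to require the most care is the shortening of a walk to a simple path without losing the ``at least one endpoint in $H$ per edge'' condition; fortunately this is an edge-local property that is preserved by any subwalk, so the reduction to the stated forbidden path is routine, and the inclusion $H \subseteq \mathcal{H}(G,p)$ enters only implicitly to guarantee that the already-forced equalities $c(u)=c(v)$ around each $v\in H$ can be reconciled with the given precoloring $p$.
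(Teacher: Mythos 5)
Your proof is correct, and the forward direction is the same as the paper's: a happy vertex makes all incident edges happy, so $c(u_i)=c(u_{i+1})$ propagates along the path and contradicts $p(u_1)\neq p(u_t)$. In the backward direction the paper takes a slightly coarser route: it restricts to $G[N[H]]$ and colors each connected component of $G[N[H]]$ monochromatically, asserting that every edge of a path inside such a component has an endpoint in $H$. Your formulation via the reflexive-transitive closure of the relation ``adjacent through an edge with an endpoint in $H$'' is the finer granularity, and it is in fact the more robust one: a component of $G[N[H]]$ may contain an edge joining two non-$H$ vertices, so it can contain two differently precolored vertices even when no forbidden path exists (take $H=\{h_1,h_2\}$ with edges $h_1a$, $h_2b$, $ab$ and $p(a)\neq p(b)$), and such a component cannot be colored with a single color while extending $p$; your classes split exactly where they must, each class containing a precolored vertex receives its well-defined common color, and every $v\in H$ is happy because each edge $uv$ with $v\in H$ witnesses $u\sim v$. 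The walk-to-path shortening you flag as the delicate step is indeed routine, since every edge of the shortened path is an edge of the original walk and membership in your relation is a per-edge property.
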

\ifshort
\begin{proofsketch}

	The basic idea of the proof is that happy vertices are incident only to happy edges.
	Furthermore, two vertices of distinct colors cannot be connected by a path containing only of happy edges.
	For the full formal proof we refer to the full version of our paper.
\end{proofsketch}
\else
\begin{proof}
	Let $(G,p)$ be a graph with precoloring, and $H \subseteq \mathcal{H}(G,p)$ be a subset of its potentially happy vertices. We prove the statement first in the direction that the existence of the coloring implies the non-existence of the path.
	
	Suppose there is a coloring $c$ of $(G,p)$ such that all vertices in $H$ are happy with respect to $c$. Suppose by contradiction that there exists a path $u_1, u_2, \ldots, u_t$ in $(G,p)$, such that $p(u_1)\neq p(u_t)$ and for each $i \in [t-1]$, either $u_i$ is in $H$ or $u_{i+1}$ is in $H$. All vertices in $H$ are happy with respect to $c$, hence all edges incident to the vertices in $H$ are happy with respect to $c$. Thus, for each $i \in [t-1]$, the edge $(u_i, u_{i+1})$ is happy with respect to $c$, i.e.\ $c(u_i)=c(u_{i+1})$. This contradicts $p(u_1)\neq p(u_t)$.
	
	Let us prove in the other direction. Suppose that there exists no path $u_1, u_2,$ $ \ldots,$ $ u_t$ satisfying the conditions in $(G,p)$. Construct a coloring $c$ as follows. For each vertex $v$ in $V(G)$, such that $v$ is neither in $H$ nor a neighbour of a vertex in $H$, remove $v$ from $G$ and set $c(v)=p(v)$ if $v$ is precolored, or set $c(v)$ to an arbitrary color otherwise. No deleted vertex influences a happiness of a vertex in $H$, so it is left to color the vertices of the remaining graph $G[N[H]]$ so that each vertex in $H$ is happy. Consider a connected component $C$ of $G[N[H]]$ and observe that there are no two vertices in $C$ that are precolored with a different color. Suppose it's not true, then there exists a path between some vertices $u$ and $w$, $p(u)\neq p(w)$, in $(G[N[H]],\left.p\right|_{N[H]})$. Note that each edge of this path has an endpoint in $H$, and obtain a contradiction.
	
	Thus, each connected component of $(G[N[H]],\left.p\right|_{N[H]})$ either contains no precolored vertices, or all precolored vertices in this component are of the same color. That is, each connected component of $(G[N[H]],\left.p\right|_{N[H]})$ can be colored with a single color, so all vertices of $G[N[H]]$ are happy. Hence, all vertices in $H$ are also happy with respect to the same coloring in $(G, p)$. This concludes the proof.
\end{proof}
\fi

\begin{theorem}\label{thm:mhv_gmc_connection}
	Let $(G, p, k)$ be an instance of \MHVfull. Then $(G,p,k)$ is a yes-instance of \MHVfull~if and only if\linebreak $(G^2[\mathcal{H}(G,p)], \{\mathcal{H}_1(G,p), \mathcal{H}_2(G,p), \ldots, \mathcal{H}_\ell(G,p)\}, |\mathcal{H}(G, p)|-k)$ is a yes-instance of \textsc{Group Multiway Cut}.
\end{theorem}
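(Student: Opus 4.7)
The plan is to route both sides of the equivalence through Lemma~\ref{lemma:mhv_paths} and thereby reduce the theorem to a purely combinatorial equivalence between two kinds of forbidden paths. First I would rephrase $\MHV$: $(G,p,k)$ is a yes-instance iff some set $H\subseteq\mathcal{H}(G,p)$ with $|H|\ge k$ can be made simultaneously happy (clearly only potentially happy vertices can contribute to the count). On the \textsc{Group Multiway Cut} side, any solution $S$ is automatically contained in $\mathcal{H}(G,p)$ (the vertex set of the cut graph), and $H:=\mathcal{H}(G,p)\setminus S$ satisfies $|H|\ge k$. So it suffices to show, for every $H\subseteq\mathcal{H}(G,p)$, that $H$ can be made happy iff $\mathcal{H}(G,p)\setminus H$ is a \textsc{GMC} in $G^2[\mathcal{H}(G,p)]$ for the terminal groups $\{\mathcal{H}_i\}$. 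Pairwise disjointness of the $\mathcal{H}_i$, required for \textsc{GMC}, follows directly from the definition of potential happiness.

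By Lemma~\ref{lemma:mhv_paths} the first condition translates into the \emph{absence of a bad path} in $G$: a path $u_1,\ldots,u_t$ with precolored endpoints of distinct colors and every edge touching $H$. The second condition translates into the absence of a path in $G^2[H]$ from a vertex of $\mathcal{H}_i\cap H$ to a vertex of $\mathcal{H}_j\cap H$ for $i\neq j$. The heart of the proof is the equivalence of these two absences; I would prove it in both directions by explicit construction.

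For the direction ``no bad path $\Rightarrow$ no $G^2[H]$-path'': given $u=w_0,w_1,\ldots,w_r=v$ in $G^2[H]$ with $u\in\mathcal{H}_i$, $v\in\mathcal{H}_j$, I would expand each $G^2$-edge $(w_{s-1},w_s)$ into a path of length at most $2$ in $G$ by inserting a common neighbor $x_s$ when the edge is not already in $G$, and then prepend a precolored-$i$ neighbor of $u$ (or $u$ itself if $u$ is already precolored) and analogously append one for $v$, using the defining property of $\mathcal{H}_i,\mathcal{H}_j$. Every edge of the resulting walk touches $H$ because each inserted $x_s$ sits between two $H$-vertices $w_{s-1},w_s$, and the prepended/appended edges touch $u,v\in H$. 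A standard walk-to-path contraction preserves the endpoint-in-$H$ property (deleting a sub-walk $a_i,\ldots,a_j$ with $a_i=a_j$ replaces an edge by one that still has $a_i=a_j$ as an endpoint), yielding a bad path.

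For the opposite direction, from a bad path $u_1,\ldots,u_t$ I would choose $u'\in\{u_1,u_2\}\cap H\cap \mathcal{H}_i$ (which exists: if $u_1\notin H$, the first-edge condition forces $u_2\in H$, and $u_2\in\mathcal{H}_i$ as a neighbor of the precolored $u_1$), and symmetrically $v'\in\{u_{t-1},u_t\}\cap H\cap\mathcal{H}_j$. Along the subpath from $u'$ to $v'$, no two consecutive vertices lie outside $H$, so listing only its $H$-vertices in order produces a sequence in which any two consecutive ones are at distance at most $2$ in $G$, hence a path in $G^2[H]$ between the two distinct terminal groups. The main obstacle I anticipate is precisely this bookkeeping at the endpoints: handling cleanly the four combinations of whether $u_1,u_t$ themselves lie in $H$, and verifying that after the shift to $u',v'$ the obtained path still connects vertices of two different groups $\mathcal{H}_i,\mathcal{H}_j$ rather than collapsing to a single group.
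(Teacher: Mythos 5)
Your proposal is correct and follows essentially the same route as the paper: both reduce the theorem to Lemma~\ref{lemma:mhv_paths} and then establish the correspondence between paths in $G$ all of whose edges touch $H$ and paths in $G^2[H]$, using the $\mathcal{H}_i$-membership of the endpoints to supply the differently precolored vertices. Your write-up is in fact more careful than the paper's about the reverse direction (which the paper dismisses as ``similar'') and about the walk-to-path and endpoint bookkeeping, but the underlying argument is the same.
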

\begin{proof}
	Let $(G,p,k)$ be a yes-instance of \MHV. We show that $(G^2[\mathcal{H}(G,p)],$ $ \{\mathcal{H}_1(G,p),\mathcal{H}_2(G,p),\ldots,\mathcal{H}_\ell(G,p)\},$ $ |\mathcal{H}(G, p)|-k)$ is a yes-instance of \textsc{Group Multiway Cut}. Since $(G,p,k)$ is a yes-instance, there is a coloring $c$ such that at least $k$ vertices of $(G,p)$ are happy with respect to $c$. Let $H$ be a set of any $k$ of these vertices, i.e.\ $|H|=k$ and all vertices in $H$ are happy with respect to $c$ in $(G,p)$. 
	
	Observe that any path in $G$ whose all edges are incident to at least one vertex of $H$, corresponds to a simple path in $G^2[H]$. 
	Indeed, let $u_1, u_2, \ldots, u_t$ be a path in $G$ such that $u_i \in H$ or $u_{i+1} \in H$ for each $i \in [t-1]$. Let $u_1^H, u_2^H, \ldots, u_{t_1}^H$ be the subsequence of $u_1, \ldots, u_t$ of vertices in $H$ ($H\cap \{u_1, \ldots, u_t\}=\{u_1^H, \ldots, u_{t_1}^H\}$). Note that for each $i \in [t_1-1]$, $u_{i}^H$ and $u_{i+1}^H$ are either consequent in $u_1, \ldots, u_t$ or there is only one vertex between them in $u_1, \ldots, u_t$. That is, there is an edge between $u_i^H$ and $u_{i+1}^H$ in $G^2[H]$. Thus, $u_1^H, u_2^H, \ldots, u_{t_1}^H$ is a path in $G^2[H]$. Vice versa, any simple path in $G^2[H]$ corresponds to paths in $G$ which edges are incident to vertices in $H$.
	
	Since all vertices $H$ are happy in $(G,p)$, by Lemma \ref{lemma:mhv_paths}, there is no path between differently precolored vertices with all edges incident to at least one vertex in $H$. Consider $G^2[H]$ and suppose that there exists a path between vertices $v$ and $w$ in $G^2[H]$, such that $v \in \mathcal{H}_i(G,p)$ and $w \in \mathcal{H}_j(G,p)$, and $i \neq j$. As shown above, this path corresponds to a path between $v$ and $w$ in $G$, and all edges of this path are incident to $H$. Since $v \in \mathcal{H}_i(G,p)$, there is a precolored vertex $v' \in N[v]$ with $p(v')=i$. Similarly, there is a $w' \in N[w]$ with $p(w')=j$. There is a path between $v'$ and $w'$ in $G$ with all edges incident to $H$ and $p(v')\neq p(w')$, a contradiction. Hence, no vertices in different sets of terminals $\mathcal{H}_i(G,p)$ and $\mathcal{H}_j(G,p)$ are connected in $G^2[H]$. Thus, $\mathcal{H}(G, p)\setminus H$ is an answer to $(G^2[\mathcal{H}(G,p)], \{\mathcal{H}_i(G,p)\},$ $ |\mathcal{H}(G, p)|-k)$, so it is a yes-instance of \textsc{Group Multiway Cut}.
	
	The proof in the other direction is similar: if $S$, $(|S|=|\mathcal{H}(G,p)|-k)$, is a solution to the instance of \textsc{Group Multiway Cut}, then all $k$ vertices in $\mathcal{H}(G,p)\setminus S$ can be happy simultaneously in $(G,p)$.
\end{proof}

Theorem \ref{thm:mhv_gmc_connection} shows the importance of potentially happy vertices in the input of \MHV.
Other vertices are playing role of common neighbours or precolored neighbours for potentially happy vertices.
Note that the sets $\mathcal{H}(G,p)$ and $\mathcal{H}_i(G,p)$ are computable in polynomial time.
Thus, an instance of \MHV~can be compressed in order to contain only useful information about potentially happy vertices.
We formulate this in the following corollary.

\begin{corollaryO}\label{cor:mhv_kernel_pot_happy}
	\MHVfull, parameterized by the number of potentially happy vertices $h$, (i) admits a  polynomial compression into \textsc{Group Multiway Cut} with $h$ vertices and (ii) admits a kernel with $\O(h^2)$ vertices and edges.
\end{corollaryO}
\begin{proofO}
	(i) is a direct corollary of Theorem \ref{thm:mhv_gmc_connection}.
	To prove (ii) and obtain a kernel with $\O(h^2)$ vertices and edges for an instance $(G,p,k)$ of \MHV, compress it firstly to an equivalent instance $(G^2[\mathcal{H}(G,p)], \{\mathcal{H}_i(G,p)\}, h-k)$ of \textsc{Group Multiway Cut}. Then, transform this instance back to an equivalent instance $(G', p', k)$ of \MHV~as follows. Construct $G'$ as a subdivision of $G^2[\mathcal{H}(G,p)]$, and then introduce two vertices $t_1$ and $t_2$ to $G'$, connect them by an edge $(t_1, t_2)$, and connect them both to each vertex that was introduced to $G'$ because of the subdivision of an edge. Finally, for each $i \in [\ell]$ and each $v \in \mathcal{H}_i(G,p)$, set $p'(v)=i$; then set $p'(t_1)=1$ and $p'(t_2)=2$. Observe that $\mathcal{H}(G',p')=\mathcal{H}(G,p)$ and $\mathcal{H}_i(G',p')=\mathcal{H}_i(G,p)$, as no newly-introduced vertex is potentially happy, and no potentially happy vertex in $G'$ is adjacent to a vertex precolored by $p'$. Moreover, $(G')^2[\mathcal{H}(G',p')]=G^2[\mathcal{H}(G,p)]$ by means of the subdivision. Hence, $(G',p',k)$ is an instance equivalent to $(G,p,k)$, and there is $\O(|\mathcal{H}(G,p)|^2)$ vertices and edges in $G'$.
\end{proofO}

Another interesting consequence of Theorem \ref{thm:mhv_gmc_connection}, along with Corollary \ref{cor:mhv_w1_structural}, is a lower bound on algorithms for \textsc{Group Multiway Cut} parameterized by clique-width.

\begin{corollary}
	\textsc{Group Multiway Cut} is $\W[1]$-hard when parameterized by the clique-width of the input graph.
\end{corollary}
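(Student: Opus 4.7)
The plan is to combine Theorem~\ref{thm:mhv_gmc_connection} with Corollary~\ref{cor:mhv_w1_structural} into an \FPT-reduction from \MHV~parameterized by clique-width to \textsc{Group Multiway Cut} parameterized by clique-width. Theorem~\ref{thm:mhv_gmc_connection} already supplies the polynomial-time many-one reduction
$(G,p,k) \longmapsto (G^2[\mathcal{H}(G,p)], \{\mathcal{H}_i(G,p)\}_{i=1}^\ell, |\mathcal{H}(G,p)|-k)$,
so what remains is to argue that this mapping preserves the clique-width parameter up to a computable function, so that a hypothetical \FPT-algorithm for \textsc{Group Multiway Cut} parameterized by clique-width would refute the hardness of \MHV.

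The crucial ingredient is the known closure property of clique-width under squaring: there is a computable function $f$ with $\mathrm{cw}(G^2) \le f(\mathrm{cw}(G))$ for every graph $G$. This can be obtained by transforming a $k$-expression for $G$ into an $f(k)$-expression for $G^2$: one blows up the label alphabet so that every vertex's label additionally records a summary of the labels appearing in its current $G$-neighbourhood, and replays the $\rho_{i,j}$-operations to additionally introduce the edges corresponding to pairs of vertices that share a common neighbour. Because clique-width never increases under taking induced subgraphs, the same bound applies to $G^2[\mathcal{H}(G,p)]$. Hence the polynomial-time reduction of Theorem~\ref{thm:mhv_gmc_connection} outputs a graph whose clique-width is at most $f(\mathrm{cw}(G))$; combined with Corollary~\ref{cor:mhv_w1_structural}, which states that \MHV~is already $\W[1]$-hard parameterized by clique-width, this immediately yields $\W[1]$-hardness of \textsc{Group Multiway Cut} with respect to clique-width.

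The main obstacle will be carefully handling the bound $\mathrm{cw}(G^2) \le f(\mathrm{cw}(G))$ at the level of $k$-expressions; while this is a known closure property, writing it out explicitly requires tracking neighbourhood-label data through the disjoint-union, renaming, and edge-introducing operations. A cleaner route that avoids invoking the general squaring bound is to re-examine the reduction from \textsc{Regular Multicoloured Independent Set} used to prove Theorem~\ref{thm:mhv_p3}: the constructed graph $G'$ is at distance only $k$ from a disjoint union of $P_3$'s, the set $\mathcal{H}(G',p')$ consists solely of the leaves $t_{uv}^u$, and $(G')^2[\mathcal{H}(G',p')]$ decomposes into $k$ cliques (one per $V_i$) plus one extra edge $t_{uv}^u t_{uv}^v$ per edge $uv \in E(G)$. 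A clique-width expression using $O(k)$ labels---one label per group $V_i$, together with a constant number of auxiliary labels that introduce the two endpoints of each new matching edge, $\rho$-connect them, and then rename them into their respective group labels before merging---directly witnesses $\mathrm{cw}((G')^2[\mathcal{H}(G',p')]) = O(k)$, which suffices to conclude the \FPT-reduction and hence the corollary.
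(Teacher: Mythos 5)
Your first route is exactly the paper's proof: it applies Theorem~\ref{thm:mhv_gmc_connection} to the instance from Corollary~\ref{cor:mhv_w1_structural} and then bounds the clique-width of $G^2[\mathcal{H}(G,p)]$ via closure of clique-width under taking powers and under induced subgraphs. The only difference is that the paper does not reprove the squaring bound but simply cites Todinca~\cite{Todinca2003} for $\mathrm{cw}(G^c)\le 2t c^t$ and Courcelle and Olariu~\cite{Courcelle2000} for the induced-subgraph closure, so your label-blow-up sketch, while pointing in the right direction, is unnecessary work. Your alternative second route is a genuinely different and arguably more self-contained argument: your description of $(G')^2[\mathcal{H}(G',p')]$ as $k$ cliques (one per group $V_i$, induced by the common neighbour $s_i$) plus a perfect matching $\{t_{uv}^u t_{uv}^v \mid uv\in E(G)\}$ is accurate, and such a graph indeed has clique-width at most $k+O(1)$ (introduce each matched pair with two fresh labels, add the matching edge, join each endpoint to the persistent label of its clique, then rename); this avoids the exponential blow-up $2t2^t$ of the general power bound at the cost of being tied to the specific hard instances rather than giving the clean parameter-preservation statement the paper records.
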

\begin{proof}
	By Corollary \ref{cor:mhv_w1_structural}, \MHVfull~is \W[1]-hard when parameterized by the clique-width of the input graph. Take an instance $(G,p,k)$ of \MHV. As shown by Todinca in \cite{Todinca2003}, if $G$ has clique-width $t$, then the power $G^c$ of $G$ has clique-width at most $2tc^t$. Hence, $G^2$ has clique-width at most $2t2^t$. Then, as shown by Courcelle and Olariu in \cite{Courcelle2000}, every induced subgraph of a graph of clique-width $t$ has clique-width at most $t$, so $G^2[\mathcal{H}(G,p)]$ has clique-width at most $2t2^t$ as well. So, in an instance $(G^2[\mathcal{H}(G,p)], \{\mathcal{H}_i(G,p)\}, |\mathcal{H}(G,p)|-k)$ of \textsc{Group Multiway Cut} equivalent to the instance $(G,p,k)$, the clique-width of the input graph is bounded if the clique-width of $G$ is bounded. Since the reduction from \MHV~to \textsc{Group Multiway Cut} is polynomial, the corollary statement follows.
\end{proof}

In contrast, we have that \textsc{Node Multiway Cut} is in \FPT~when parameterized by the clique-width of the input graph.
We present an algorithm solving \textsc{Node Multiway Cut} using dynamic programming on a $w$-expression of $G$.

\begin{theoremO}\label{thm:nmc_cw}
	\textsc{Node Multiway Cut} can be solved in $(w+3)^{2w} \cdot n^{\O(1)}$, if a $w$-expression of $G$ is given.
\end{theoremO}
\begin{proofO}
	Let $(G,T,k,\Psi)$ be an instance of \textsc{Node Multiway Cut} with a given $w$-expression $\Psi$ of $G$.
	To solve the instance, we employ dynamic programming.
	To help undestanding it, we suggest to consider the following.
	Consider a subexpression $\Phi$ of $\Psi$.
	Let $S$ be a vertex subset of  $V(G_\Phi)\setminus T$ whose deletion from $G_\Phi$ disconnects all terminals in $T$ from each other.
	That is, $S$ is a possible answer for the smaller graph $G_\Phi$.
	Let $V_i(G_\Phi \setminus S)$ denote the set of vertices with label $i$ in $G_\Phi \setminus S$.
	Let $R_j(G_\Phi \setminus S)$ denote the set of vertices reachable from the terminal $t_j$ in $G_\Phi \setminus S$ (including $t_j$ itself).
	If $t_j \notin V(G_\Phi)$, then let $R_j(G_\Phi \setminus S)=\emptyset$.
	Since $S$ is an answer to the instance, all $R_j(G_\Phi \setminus S)$ are disjoint.
	
	Consider an arbitrary vertex label $i \in [w]$ in $\Phi$.
	We distinguish four types of labels depending on what terminal vertices in $T$ the vertices in $V_i(G_\Phi \setminus S)$ are connected to.
	If there is no vertex with label $i$ in $G_\Phi \setminus S$, i.e.\
	$V_i(G_\Phi\setminus S)=\emptyset$, we say that $i$ is of \emph{$\emptyset$-type} in $G_\Phi \setminus S$.
	If all vertices with label $i$ in $G_\Phi \setminus S$ are not connected with any of the terminals in $T$, i.e.\ $V_i(G_\Phi \setminus S) \cap R_j(G_\Phi \setminus S)=\emptyset$ for each $j \in [\ell]$, and $V_i(G_\Phi \setminus S)$ is not empty, we say that $i$ is of \emph{$0$-type} in $G_\Phi \setminus S$.
	If a label $i$ is not of these two types, then it must contain a vertex that is connected to some terminal in $T$.
	If there are two vertices with label $i$, each of these two is connected to a terminal in $G_\Phi \setminus S$, but these two terminals are distinct, i.e.\ $V_i(G_\Phi\setminus S)\cap R_j(G_\Phi \setminus S)\neq \emptyset$ for at least two values of $j$, we say that $i$ is of \emph{$2$-type}.
	Otherwise, there is exactly one terminal, say $s_j$, reachable from $V_i(G_\Phi \setminus S)$ in $G_\Phi \setminus S$, and we say that $i$ is of \emph{$1$-type}.
	Note that in that case $s_j$ is necessarily a vertex of $G_\Phi$.
	
	In this way, the set of labels $[w]$ become partitioned into four sets $P_\emptyset$, $P_0$, $P_1$ and $P_2$, depending on the value of $\Phi$ and $S$.
	Note that each $1$-type label corresponds to a single teminal.
	Thus, the set $P_1$ can be further partitioned into a family of sets $\mathcal{P}_1=\{P_{1,1}, P_{1,2},\ldots, P_{1,p}\}$, $p \le w$, where each $P_{1,j}$ is a set containing all $1$-type labels corresponding to a certain terminal.
	Note that despite there are $\ell$ terminals in $T$ and $\ell$ may be much greater than $w$, $\mathcal{P}_1$ always contains no more than $w$ sets, since it is a partition of $P_1$.
	We obtain a partition $(P_\emptyset, P_0, \mathcal{P}_1, P_2)$ according to $S$ and $\Phi$.
	Thus, for fixed $\Phi$ and $S$, there is exactly one partition corresponding to $S$.
	Note that different values of $S$ may lead to the same partitions in $\Phi$.
	Also, for some partitions of labels of $\Phi$, there may be no corresponding values of $S$.
	We now bound the number of possible partitions.
	
	\begin{claim}
		There are at most $(w+3)^w$ possible partitions $(P_\emptyset, P_0, \mathcal{P}_1, P_2)$ of the label set $[w]$, and all possible partitions can be enumerated in $\Ostar{(w+3)^w}$ time.
	\end{claim}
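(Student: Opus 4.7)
The plan is to encode each partition $(P_\emptyset, P_0, \mathcal{P}_1, P_2)$ of $[w]$ by a single function $\tau : [w] \to T$, where $T$ is a tag set of size $w+3$. Three of the tags serve as distinguished symbols indicating membership in $P_\emptyset$, $P_0$, and $P_2$ respectively, and the remaining $w$ tags play the role of names for the parts of $\mathcal{P}_1$. The key observation that makes this encoding work is that $\mathcal{P}_1$ is a partition of a subset of $[w]$, hence it has at most $w$ parts, so $w$ group-naming tags suffice.

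First I would establish the upper bound. To make the encoding well defined I would canonicalize $\mathcal{P}_1$: for instance, order its parts by their smallest element and name them $g_1, g_2, \ldots$ in that order. Under this convention every partition yields a unique $\tau \in T^{[w]}$, and conversely every $\tau$ determines at most one partition by grouping labels according to their tag. Since $|T^{[w]}| = (w+3)^w$, this immediately gives the claimed bound on the number of partitions.

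For enumeration, I would simply iterate through all $\tau \in T^{[w]}$, which takes $\Ostar{(w+3)^w}$ time. For each $\tau$ I would reconstruct the candidate partition in polynomial time and verify that its $\mathcal{P}_1$-part is already in canonical form, discarding non-canonical $\tau$ so that no partition is listed more than once under different namings of the parts. This keeps both the running time and the cardinality bound aligned.

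The only subtle point, which I expect to be the main source of care rather than a real obstacle, is the canonicalization of $\mathcal{P}_1$: because its parts are unordered, many different $\tau$ can represent the same partition. Fixing a canonical naming rule such as smallest-element-first resolves this cleanly while preserving the $(w+3)^w$ upper bound.
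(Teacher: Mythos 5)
Your proposal is correct and is essentially the paper's own argument: the paper likewise observes that every partition consists of at most $w+3$ sets (the three distinguished ones plus at most $w$ parts of $\mathcal{P}_1$) and assigns each of the $w$ labels to one of them, giving the $(w+3)^w$ bound and a straightforward enumeration. Your explicit canonicalization of the parts of $\mathcal{P}_1$ merely fills in the detail the paper leaves as ``enumerated easily.''
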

	\begin{claimproof}
		Observe that any partition contains at most $w+3$ sets.
		Since each element in $[w]$ appears in exactly one set of the partition, there are at most $(w+3)^w$ possible partitions and they can be enumerated easily.
	\end{claimproof}
	
	We are ready to introduce the dynamic programming. For each subexpression $\Phi$ of $\Psi$, and each possible partition $(P_\emptyset, P_0, \mathcal{P}_1, P_2)$, $$OPT(\Phi, P_\emptyset, P_0, \mathcal{P}_1, P_2)$$ stores the minimum size of $S$ among all possible solutions $S$ for $G_\Phi$ (that is, $S \subseteq V(G_\Phi) \setminus T$ and deletion of $S$ disconnects all terminals in $G_\Phi$ from each other), such that $S$ corresponds to the partition $(P_\emptyset, P_0, \mathcal{P}_1, P_2)$.
	If there is no solution $S$ corresponding to the partition, $OPT(\Phi, P_\emptyset, P_0, \mathcal{P}_1, P_2)=\infty$.
	Clearly, $OPT$ consists of at most $|\Psi| \cdot (w+3)^w$ states.
	The initial instance $(G, T, k, \Psi)$ is a yes-instance if and only if $$\min_{(P_\emptyset, P_0, \mathcal{P}_1, P_2)} OPT(\Psi, P_\emptyset, P_0, \mathcal{P}_1, P_2) \le k.$$
	
	We now show how we compute the values of $OPT$.
	We compute the values of $OPT$ going from smaller subexpressions of $\Psi$ to larger.
	Thus, if we are to compute the values of $OPT$ for some subexpression $\Phi$ of $\Psi$, we have all values of $OPT$ calculated for all subexpressions of $\Phi$.
	Let $\Phi$ be a fixed subexpression of $\Psi$.
	To simplify our task, we do not consider a fixed stage $OPT(\Phi, P_\emptyset, P_0, \mathcal{P}_1, P_2)$ of $OPT$ and compute its value at once.
	Instead, we initialize all values of $OPT(\Phi, \cdot)$ with $\infty$, and then update them considering values of $OPT$ for smaller subexpressions.
	This procedure depends on the topmost operator in $\Phi$.
	
	\begin{enumerate}
		\item $\Phi=i(v)$.
		That is, $G_\Phi$ consists of a single vertex with label $i$.
		Each label, except for $i$, is of $\emptyset$-type.
		One may either choose to pick $v$ in the solution and delete it, or  not to delete it.
		In other words, the possible values of $S$ are $\emptyset$ and $v$.
		However, if $v \in T$, we should always leave the vertex in the graph.
		
		If we choose not to delete $v$, $i$ stays a label of $1$-type. Hence, we put $OPT(\Phi,$ $ [w]\setminus \{i\}, \emptyset, \{\{i\}\}, \emptyset)=0$.
		If we choose to delete $v$ (in that case, $v \notin T$), $G_\Phi$ becomes empty, so we put $OPT(\Phi, [w], \emptyset, \emptyset, \emptyset)=1$.
		Clearly, all other values $OPT(\Phi,\cdot)$ should be left equal $\infty$.
		
		\item $\Phi=\rho_{i \to j} \Phi'$.
		Note that any solution $S$ for $G_{\Phi'}$ is a solution for $G_\Phi$, and vice versa.
		Let $S$ be a solution that corresponds to a partition $(P'_\emptyset, P'_0, \mathcal{P}'_1, P'_2)$ in $\Phi'$.
		$S$ also corresponds to some partition $(P_\emptyset, P_0, \mathcal{P}_1, P_2)$ in $\Phi$, and it occurs that this partition is easy to find if the partition $(P'_\emptyset, P'_0, \mathcal{P}'_1, P'_2)$ is given.
		Since $\Phi$ differs from $\Phi'$ only in renaming label $i$ to label $j$, $(P_\emptyset, P_0, \mathcal{P}_1, P_2)$ differs from $(P'_\emptyset, P'_0, \mathcal{P}'_1, P'_2)$ only in positions of labels $i$ and $j$.
		In fact, $V_i(G_\Phi)=\emptyset$, so $i$ is of $\emptyset$-type in $G_\Phi \setminus S$.
		Thus, $i \in P_\emptyset$.
		It is left to determine the position of label $j$ in the partition $(P_\emptyset, P_0, \mathcal{P}_1, P_2)$.
		
		Since $V_j(G_\Phi)=V_i(G_{\Phi'}) \sqcup V_j(G_{\Phi'})$, the type of $j$ in $G_\Phi$ depends on types of $i$ and $j$ in $G_{\Phi'}$.
		If at least one of $i$ or $j$ are of $\emptyset$-type in $G_{\Phi'}\setminus S$, without loss of generality let it be $i$, then $V_j(G_\Phi)=V_j(G_{\Phi'})$.
		Hence, the type of $j$ in $G_\Phi$ is equal to the type of $j$ in $G_\{\Phi'\}\setminus S$, and its position in the partition (if it is $1$-type) remains the same.
		When neither of $i$ and $j$ are of $\emptyset$-type, but at least one of them is of $0$-type, again, let it be $i$, the type and position of $j$ in $G_\Phi$ remains the same as in $G_{\Phi'}$.
		If at least one of $i$ or $j$ are of $2$-type in $G_{\Phi'}\setminus S$, then, clearly, $j$ is of $2$-type in $G_{\Phi}\setminus S$.
		
		It is left to consider the case when both $i$ and $j$ are of $1$-type in $G_{\Phi'}\setminus S$, i.e.\ $i,j\in \mathcal{P}'_1$.
		Consider $i$ and $j$ belonging to distinct sets in $\mathcal{P}'_1$, that is, there is a terminal reachable from $V_i(G_{\Phi'})\setminus S$ and a terminal reachable from $V_j(G_{\Phi'}) \setminus S$, and these terminals are distinct.
		Then, both these terminals are reachable from $V_j(G_{\Phi})\setminus S$, so $j$ should be of $2$-type in $G_{\Phi}\setminus S$, i.e.\ $j \in P_2$.
		In the only case left, $i$ and $j$ belong to the same set in $\mathcal{P}'_1$.
		Clearly, the position of $j$ should not change in that case.
		
		Thus, to compute all values $OPT(\Phi, \cdot)$, we iterate over all possible partitions $(P'_\emptyset, P'_0, \mathcal{P}'_1, P'_2)$ of labels in $G_{\Phi'}$.
		For a fixed partition, we find the corresponding partition $(P_\emptyset, P_0, \mathcal{P}_1, P_2)$ of labels in $G_{\Phi}$ in polynomial time as described above.
		Finally, we update $OPT(\Phi,P_\emptyset, P_0, \mathcal{P}_1, P_2)$ with the value of $OPT(\Phi',P'_\emptyset, P'_0, \mathcal{P}'_1, P'_2)$.
		
		\item $\Phi=\Phi' \oplus \Phi''$.
		That is, $G_\Phi$ is a disjoint union of $G_{\Phi'}$ and $G_{\Phi''}$.
		Thus, if $S$ is a solution for $G_\Phi$, then $S'=S\cap V(G_{\Phi'})$ is a solution for $G_{\Phi'}$, and $S''=S\cap V(G_{\Phi''})$ is a solution for $G_{\Phi''}$, and vice versa, two solutions $S'$ and $S''$ give a solution $S'\sqcup S''$ for $G_\Phi$.
		Thus, any solution for $G_\Phi$ is a union of solutions for $G_{\Phi'}$ and $G_{\Phi''}$.
		Again, we consider all possible partitions $(P'_\emptyset, P'_0, \mathcal{P}'_1, P'_2)$ and $(P''_\emptyset, P''_0, \mathcal{P}''_1, P''_2)$ for solutions $S'$ and $S''$ respectively in $G_{\Phi'}$ and $G_{\Phi''}$, and show how to find a partition $(P_\emptyset, P_0, \mathcal{P}_1, P_2)$ corresponding to their union $S=S'\sqcup S''$ in $G_\Phi$.
		
		Since $V_i(G_\Phi)\setminus S=(V_i(G_{\Phi'}) \setminus S') \sqcup (V_i(G_{\Phi''})\setminus S'')$ for each label $i$, it is enough to show how to determine position of label $i$ in the partition for $G_{\Phi}$ knowing its position in the partitions for $G_{\Phi'}$ and $G_{\Phi''}$.
		Consider the types of $i$ in $G_{\Phi'}\setminus S'$ and $G_{\Phi''}\setminus S''$.
		Cases when at least one of these types is not $1$-type are handled in the same way as above for the relabelling operator.
		Consider the case when $i$ is of $1$-type both in $G_{\Phi'}\setminus S'$ and in $G_{\Phi''}\setminus S''$.
		That means that there is a terminal in $G_{\Phi'}$ reachable from the vertices with label $i$ in $G_{\Phi'}\setminus S'$, and a terminal in $G_{\Phi''}$ reachable from the vertices with label $i$ in $G_{\Phi''}\setminus S''$.
		Since $V(G_{\Phi'})$ and $V_(G_{\Phi''})$ are disjoint, these two terminals are distinct.
		Hence, in $G_\Phi\setminus S$ there are at least two terminals reachable from vertices with label $i$.
		Therefore, $i$ should receive $2$-type in $G_\Phi \setminus S$. 
		
		Again, to compute values of $OPT(\Phi,\cdot)$, we iterate over all possible partitions $(P'_\emptyset, P'_0, \mathcal{P}'_1, P'_2)$ and $(P''_\emptyset, P''_0, \mathcal{P}''_1, P''_2)$.
		Having these two partitions fixed, we find the combined partition $(P_\emptyset, P_0, \mathcal{P}_1, P_2)$ and update $OPT(\Phi, P_\emptyset, P_0, \mathcal{P}_1,$ $ P_2)$ with $OPT(\Phi', P'_\emptyset, P'_0, \mathcal{P}'_1, P'_2)+OPT(\Phi'', P''_\emptyset, P''_0, \mathcal{P}'_1, P''_2)$.
		
		\item $\Phi=\eta_{i,j}\Phi'$.
		That is, $G_\Phi$ is a graph obtained by introducing all possible edges with endpoints having labels $i$ and $j$.
		It is again easy to see that if $S$ is a solution for $G_\Phi$, then $S$ is a solution for $G_{\Phi'}$.
		However, a solution for $G_{\Phi'}$ does not yield a solution $G_{\Phi}$.
		It occurs again that to check that a solution $S$ for $G_{\Phi'}$ is a suitable solution for $G_{\Phi}$, it suffices to know the partition of labels for $S$ and $\Phi'$.
		
		Let $(P'_\emptyset, P'_0, \mathcal{P}'_1, P'_2)$ be a partition corresponding to $S$ in $G_{\Phi'}$.
		We show how to check that $S$ is a suitable solution for $G_\Phi$, and find a partition $(P_\emptyset, P_0, \mathcal{P}_1, P_2)$ corresponding to $S$ in $G_\Phi$.
		Clearly, it is enough to consider the types of labels $i$ and $j$ in $G_{\Phi'}\setminus S$.
		If at least one of $i$ and $j$ are of $\emptyset$-type, then no edge is actually added by the operator and $G_{\Phi}=G_{\Phi'}$.
		Thus, $S$ is a suitable solution for $G_{\Phi}$ and the partition for $G_{\Phi}\setminus S$ remains the same as for $G_{\Phi'}\setminus S$.
		If neither of $i$ and $j$ are of $\emptyset$-type, but at least one of them is of $2$-type in $G_{\Phi'}\setminus S$, then $S$ is not a suitable solution for $G_\Phi$.
		Indeed, without loss of generality let $i$ be of $2$-type, i.e.\ vertices with label $i$ are connected to two distinct terminals.
		Since there is at least one vertex with label $j$, adding all edges between vertices with labels $i$ and $j$ connects these two terminals together.
		
		It is left to consider cases when $i$ and $j$ are of $0$-type or of $1$-type in $G_{\Phi'}\setminus S$.
		If both of them are of $0$-type, then adding edges between vertices of these two labels does not yield any connection between terminals.
		Hence, $S$ is a suitable solution for $G_\Phi$.
		Also, $i$ and $j$ remain being of $0$-type in $G_\Phi\setminus S$, so the partition remains the same.
		If one label is of $0$-type, and the other is of $1$-type, then adding edges does not yield any connection between two distinct terminals, and $S$ is a suitable solution for $G_\Phi$.
		However, since vertices with both labels are now connected to the same terminal in $G_\Phi\setminus S$, the partition should be changed: both labels should receive $1$-type and go into the same set in $\mathcal{P}_1$.
		It is left to consider the case when both $i$ and $j$ are of $1$-type in $G_{\Phi'}$.
		If $i$ and $j$ are in the same set in $\mathcal{P}'_1$, then no two terminals become connected, so $S$ is a suitable solution for $G_\Phi$.
		The partition remains the same.
		Otherwise, $i$ and $j$ are in distinct sets inside $\mathcal{P}_1$, i.e.\ the vertices with labels $i$ and $j$ are connected to two distinct terminals.
		Clearly, in that case, $S$ is not a suitable solution for $G_\Phi$.
		
		We have shown how to obtain the partition of labels $(P_\emptyset, P_0, \mathcal{P}_1, P_2)$ for $G_\Phi\setminus S$ knowing the partition $(P'_\emptyset, P'_0, \mathcal{P}'_1, P'_2)$ for $G_{\Phi'} \setminus S'$, and to ensure
		that any solution for $G_{\Phi'}$ corresponding to $(P'_\emptyset, P'_0, \mathcal{P}'_1, P'_2)$ is a suitable solution for $G_\Phi$.
		Thus, we again iterate over all possible partitions for $(P'_\emptyset, P'_0, \mathcal{P}'_1, P'_2)$ and check that it corresponds to a suitable solution of $G_\Phi$.
		If it does, we find the partition for $G_\Phi \setminus S$ and update $OPT(\Phi, P_\emptyset, P_0, \mathcal{P}_1, P_2)$ with $OPT(\Phi', P'_\emptyset, P'_0,$ $ \mathcal{P}'_1, P'_2)$.	
	\end{enumerate}
	
	We have shown how to compute the values of $OPT$.
	Each computation step occurs for a fixed subexpression $\Phi$, and is done by considering all possible partitions for child subexpressions of $\Phi$.
	Since there are always at most two child subexpression (in the case of the disjoint union operator), so the heaviest computation step takes $(w+3)^{2w} \cdot n^{\O(1)}$ running time.
	Having all values of $OPT$ calculated, the answer to the initial instance is then found in $(w+3)^w$ time by considering each value of type $OPT(\Psi, \cdot)$.
	Thus, the running time of the whole algorithm is $|\Psi| \cdot (w+3)^{2w} \cdot n^{\O(1)}$.
	The correctness follows from the discussion.
	This finishes the proof.
\end{proofO}

	\section{Polynomial kernel for \MHVfull}\label{sec:structural-kernel}

In this section, we present a polynomial kernel for \MHV~parameterized by the distance to clique.
This partially answers a question of Misra and Reddy in \cite{Misra2018}, where they also showed \FPT~algorithms for both \MHV~and \MHE~parameterized by this parameter.
\ifshort
\else
We start with the following technical lemma.

\begin{lemmaO}\label{lemma:mhv_clique_linear_kernel}
	\MHVfull~admits a kernel with $\O(|\mathcal{H}(G,p)|+|S|^2)$ vertices, if a clique modulator $S$ of $G$ is given.
\end{lemmaO}
\begin{proofO}
	Let $(G,p,k,S)$ be an instance of \MHV~with a clique modulator $S$ of $G$ given.
	That is, $V(G)=C\sqcup S$, and $C$ induces a clique in $G$.
	Let $h=|\mathcal{H}(G,p)|$ as usual.
	This lemma has much in common with Corollary \ref{cor:mhv_kernel_pot_happy}.
	Here, to obtain a linear dependency on $h$, we exploit the fact that $G$ contains a large clique.
	
	Our kernelization algorithm outputs an instance $(G',p',k)$ of \MHV~with $|V(G')|\le 2|\mathcal{H}(G,p)|+3|S|+\binom{|S|}{2}+2$.
	The graph $G'$ is obtained as an induced graph $G[T]$ of $G$ for some vertex subset $T$.
	Additional two vertices are then introduced in $G'$ to ensure that certain vertices are not happy.
	We now show how the algorithm constructs the set $T$.
	By means of Theorem \ref{thm:mhv_gmc_connection}, the goal of $T$ is to preserve all potentially happy vertices, colors of their neighbours, and paths of length two between them.
	
	Firstly, the algorithm puts all potentially vertices of $(G,p)$ and all vertices of the clique modulator in $T$, i.e.\ $T=\mathcal{H}(G,p) \cup S$ initially.
	Each vertex in $\mathcal{H}(G,p)$ either has no precolored neighbours, or all its neighbours are precolored with the same color.
	For each such vertex that has a precolored neighbour, say $v \in \mathcal{H}_i(G,p)$ for some $i \in [\ell]$, the algorithm chooses any neighbour $u$ of $v$ with $p(u)=i$, and puts it in $T$.
	Now $T$ consists of at most $2h+2|S|$ vertices.
	
	It is left to add vertices in $T$ that preserve paths of length two between potentially happy vertices.
	For each vertex in $S$, say $s \in S$, the algorithm picks any neighbour of $s$ in $C$, and adds it to $T$.
	If $s$ has no neighbours in $C$, the algorithm does nothing.
	Finally, for each pair of vertices in $S$, say $s_1, s_2 \in S$, that has a common neighbour in $G$, the algorithm adds 
	any common neighbour of $s_1$ and $s_2$ to $T$.
	It is easy to see that $|T|\le 2h+3|S|+3\binom{|S|}{2}$ now.
	
	The graph $G'$ is then obtained as the induced graph $G[T]$ of $G$. The precoloring $p'$ is obtained as just a restriction $\left.p\right|_T$ of the precoloring $p$ to the set $T$.
	The only problem behind $(G',p')$ is that $\mathcal{H}_i(G,p)=\mathcal{H}_i(G',p')$ not necessarily holds true for each $i \in [\ell]$: some vertices that are not potentially happy in $(G,p)$ can become potentially happy in $(G',p')$. 
	To overcome that, the algorithm introduces two vertices $t_1, t_2$ to $G'$, connects them by an edge in $G'$, and precolors them with colors $1$ and $2$ respectively, i.e.\ $p'(t_i)=i$.
	Then it connects both $t_1$ and $t_2$ with each vertex in $T \setminus \mathcal{H}(G,p)$ by an edge in $G'$.
	The construction of $(G',p',k)$ is finished.
	
	\begin{claim}
		For the constructed instance $(G',p',k)$ the following holds:
		\begin{enumerate}
			\item $\mathcal{H}(G,p)=\mathcal{H}(G',p')$;
			\item For each $i \in [\ell]$, $\mathcal{H}_i(G,p)=\mathcal{H}_i(G',p')$;
			\item $G^2[\mathcal{H}(G,p)]$ and $(G')^2[\mathcal{H}(G',p')]$ are the same.
		\end{enumerate}
	\end{claim}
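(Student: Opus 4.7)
The plan is to verify the three parts in sequence, exploiting the three phases of the construction of $T$ together with the role of the gadget vertices $t_1,t_2$.

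For part (1), I would first argue $\mathcal{H}(G,p)\subseteq\mathcal{H}(G',p')$: every $v\in\mathcal{H}(G,p)$ lies in $T$, and the crucial observation is that $t_1$ and $t_2$ are adjacent only to $T\setminus\mathcal{H}(G,p)$, so the precoloured neighbourhood of $v$ in $G'$ is exactly the restriction of its precoloured neighbourhood in $G$ to $T$, which remains monochromatic (or empty). For the reverse inclusion, any vertex in $T\setminus\mathcal{H}(G,p)$ now has both $t_1$ (precoloured $1$) and $t_2$ (precoloured $2$) as neighbours in $G'$, hence is not potentially happy in $(G',p')$; and $t_1,t_2$ are mutually adjacent with distinct precolourings, so neither of them is potentially happy either.

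For part (2), the forward inclusion $\mathcal{H}_i(G,p)\subseteq\mathcal{H}_i(G',p')$ is immediate from the construction: either $p'(v)=p(v)=i$, or the algorithm explicitly copied a colour-$i$ neighbour of $v$ into $T$. The reverse inclusion $\mathcal{H}_i(G',p')\subseteq\mathcal{H}_i(G,p)$ uses part (1): if $v\in\mathcal{H}_i(G',p')$ then $v\in\mathcal{H}(G,p)$, so $v$ is not adjacent to $t_1$ or $t_2$, and hence any colour-$i$ witness in $N_{G'}[v]$ actually lies in $T\cap V(G)$ and is a genuine witness in $(G,p)$.

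Part (3) is the main obstacle and the reason for the careful three-phase construction of $T$. The vertex sets agree by (1), so only the edge sets are in question. The direction $(G')^2\subseteq G^2$ is easy, because any length-at-most-two path between vertices of $\mathcal{H}(G,p)$ in $G'$ avoids $t_1,t_2$ (since they have no neighbours in $\mathcal{H}(G,p)$) and so corresponds to such a path in $G$. The harder direction is handled by a case split on the positions of the endpoints $u,v\in\mathcal{H}(G,p)$ with respect to the clique $C$ and the modulator $S$. If $u,v$ are adjacent in $G$ then both lie in $T$ and the edge survives in $G[T]\subseteq G'$. Otherwise $u,v$ share a common neighbour in $G$: when $u,v\in C$ they are already adjacent since $C$ is a clique; when $u,v\in S$, the pair-phase of the construction copied a common neighbour into $T$; and when $u\in C$ and $v\in S$, the modulator-neighbour phase provided some $w\in C\cap T$ adjacent to $v$, which is then automatically adjacent in $C$ to $u$. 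The main technical subtlety is this mixed $C$/$S$ sub-case, where it matters that the chosen witness is forced into $C$ so that the clique property can close the gap.
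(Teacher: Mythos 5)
Your proposal is correct and follows essentially the same route as the paper: the forward inclusions come from the fact that each potentially happy vertex and one colour witness per vertex are kept in $T$, the reverse inclusions come from $t_1,t_2$ being mutually adjacent, distinctly precoloured, and attached exactly to $T\setminus\mathcal{H}(G,p)$, and part (3) is closed by the same case analysis in which the clique property of $C$ supplies the missing length-two path. The only imprecision is in your mixed sub-case ($u\in C$, $v\in S$): the modulator-neighbour phase guarantees a $C$-neighbour of $v$ in $T$ only when $v$ has one, so you should first observe (as the paper does) that a common neighbour lying in $S$ is already in $T$, which lets you assume the witnessing common neighbour lies in $C$ before invoking that phase.
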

	\begin{claimproof}
		For each potentially happy vertex $v$ of $G$, it was preserved in $T$ and at least one of its colored neighbours is preserved in $T$.
		Thus, $\mathcal{H}(G,p)\subseteq\mathcal{H}(G',p')$ and $\mathcal{H}_i(G,p)\subseteq\mathcal{H}_i(G',p')$.
		At the other hand, all vertices in $T$ that are not potentially happy in $G$, are connected with two vertices $t_1$ and $t_2$, that are of different colors in $(G',p')$.
		Moreover, neither $t_1$ nor $t_2$ are potentially happy in $(G',p')$ since they are connected by an edge.
		Therefore, the first two conditions of the claim follows.
		
		To prove the third condition of the claim, note that $$(G')^2[\mathcal{H}(G',p')]=(G')^2[\mathcal{H}(G,p)]=(G[T])^2[\mathcal{H}(G,p)],$$ since vertices $t_1$ and $t_2$ of $G'$ are connected only with vertices that are not potentially happy in $(G,p)$.
		Hence, $t_1$ and $t_2$ cannot contribute to a path of length two between a pair of vertices in $\mathcal{H}(G,p)$.
		Suppose that $(G[T])^2[\mathcal{H}(G,p)]\neq G^2[\mathcal{H}(G,p)]$.
		Then, it is only the case that some path of length two between some non-adjacent vertices $u, v\in \mathcal{H}(G,p)$ is missing in $G[T]$, but they share a common neighbour $w$ in $G$.
		As $w$ is missing in $G[T]$, it is the case that $w \in C$. Otherwise $w \in S$ and the algorithm would include it in the set $T$ initially.
		
		Since $u$ and $v$ are non-adjacent, we may assume without loss of generality that $u \in S$.
		Suppose that $v \in S$.
		Then, $u$ and $v$ are vertices in $S$ that share a common neighbour in $G$.
		But then the algorithm has added at least one their neighbour in $T$, so there should be a path of length two between them in $G[T]$.
		Hence, it is the case that $v \in C$.
		But $w \in C$ also, so $u$ has a neighbour in $C$, say $w' \in C$, that the algorithm has included in $T$.
		$uw'v$ is a path of length two between $u$ and $v$ in $G[T]$, a contradiction.
		Therefore, the third condition of the claim holds.
	\end{claimproof}
	
	From the claim and Theorem \ref{thm:mhv_gmc_connection} follows the lemma.
\end{proofO}

\begin{lemmaO}\label{lemma:mhv_distance_to_clique_kernel}
	\MHVfull~admits a polynomial kernel of size $\O(d^3)$, where $d$ is the size of a given clique modulator $S$.
\end{lemmaO}
\begin{proofO}
	Let $(G, p, k, S)$ be an instance of \MHV~with a clique modulator $S$ of $G$ of size $d$ given.
	As usual, we denote the set of the vertices of the clique by $C$, i.e.\  $V(G)=C\sqcup S$.
	Throughout the proof, we assume that $d\ge 2$, otherwise the instance is trivial.
	
	We present an algorithm that transforms $(G, p, k, S)$ into an instance $(G', p',$ $ k')$ of \MHV~with $|V(G')|=\O(d^3)$.
	The algorithm reduces the number of potentially happy vertices in $(G,p)$ step-by-step.
	To do that, the algorithm makes some potentially happy vertices unhappy.
	As does the algorithm in the proof of Lemma \ref{lemma:mhv_clique_linear_kernel},
	the algorithm introduces two new adjacent vertices $t_1$ and $t_2$ to $G$ and precolors them with colors $1$ and $2$ respectively.
	Then, to make a potentially happy vertex $v$ unhappy, the algorithm connects $t_1$ with $v$ and $t_2$ with $v$ by introducing two new edges.
	Note that this operation is done in polynomial time and strictly decreases the number of potentially happy vertices.
	$t_1$ and $t_2$ are introduced to $G$ only once, and after the introduction $S$ may no longer be a clique modulator of $G$.
	To fix that, we say that algorithm extends $S$ with $t_1$ and $t_2$.
	Thus, $d$ increases by a constant value of two.
	We further assume that $G$ contains the vertices $t_1$ and $t_2$, $S$ is the extended clique modulator and $d=|S|$ is equal to its size.
	
	When the number of potentially happy vertices becomes $\O(d^3)$, the algorithm continues following the Lemma \ref{lemma:mhv_clique_linear_kernel}.
	We now show how the algorithm achieves $\O(d^3)$ potentially happy vertices.
	
	Firstly, the algorithm ensures that each color in $[\ell]$ is presented at least once in $p$. It applies the following reduction rule exhaustively.
	
	\begin{rrule}\label{rrule:cluster_color_not_present}
		If there is a color $i \in [\ell]$ that is not presented in $p$, $p^{-1}(i)=\emptyset$, decrease color numbers in $\{i+1, i+2, \ldots \ell \}$ by $1$ and decrease $\ell$ by $1$.
	\end{rrule}
	
	Then, the following claim allows to deal with the case when the number of colors in $(G,p)$ is sufficiently large.
	\ifshort
	\addtocounter{claim}{8}
	\fi
	\begin{claimO}
		If $\ell > d+1$, then either there is a color $i \in [\ell]$ with $p^{-1}(i)=\emptyset$, or only the vertices of $S$ can be happy in $(G, p)$, i.e.\ $S \supseteq \mathcal{H}(G,p)$.
	\end{claimO}
	\begin{claimproofO}
		Suppose $\ell > d+1$ and each color in $[\ell]$ is presented at least once in $p$.
		Then there is a sequence of distinct vertices $v_1, v_2, \ldots, v_\ell$ with $p(v_i)=i$.
		Since $\ell \ge d+2$ and $|S|=d$, at least two vertices in the sequence are from the clique $C$.
		Colors of these two vertices are distinct, so no vertex in the clique can be happy with respect to any coloring of $(G,p)$.
	\end{claimproofO}
	
	The claim shows that after the exhaustive application of Reduction rule \ref{rrule:cluster_color_not_present}, if $\ell$ is large, then the number of potentially happy vertices in that case is at most $d$.
	The following part of the algorithm is dealing with the case when $\ell \le d+1$.
	
	Then, the algorithm finds sets $C_1, C_2, \ldots, C_\ell$, where \ifshort $ \else $$ \fi C_i=C\cap \mathcal{H}_i(G,p) \ifshort $ \else $$ \fi is the set of potentially happy vertices in $C$ that are either precolored with color $i$ or has a neighbour precolored with color $i$ in $G$.
	Also, algorithm finds a set $C_0$ of potentially happy vertices in $C$ that are not precolored and have no precolored neighbour in $(G,p)$, $C_0=\mathcal{H}(G,p)\setminus C_1 \setminus C_2 \setminus \ldots \setminus C_\ell$.
	The sequence $C_0, C_1, \ldots, C_\ell$ is found in polynomial time.
	If all sets in the sequence have size at most $d$, then the number of potentially happy vertices in $(G,p)$ is at most $\ell \cdot d \le (d+1) \cdot d$.
	The following claim helps to deal with the other case.
	
	\begin{claimO}\label{claim:cluster_big_color}
		If $|C_0| + |C_i| > d$ for some $i \in [\ell]$, then in any optimal coloring $c$ of $(G,p)$, all vertices of the clique are colored with the same color by $c$, i.e.\ $|c(C)|=1$.
	\end{claimO}
	\begin{claimproofO}
		If $|c(C)| \ge 2$, only vertices in $S$ can be happy with respect to $c$.
		Since $|S|=d$, at most $d$ vertices can be happy with respect to $c$ in $(G,p)$. However, since $|C_0|+|C_i|>d$, a trivial extension of $p$ with the color $i$ yields at least $d+1$ happy vertices.
	\end{claimproofO}
	
	The algorithm then applies the following reduction rule, that gets rid of potentially happy vertices in $C$ that are never happy in any optimal coloring of $(G,p)$. 
	
	\begin{rrule}\label{rrule:cluster_empty_cols}
		If there exists $i \in [\ell]$ with $|C_i| + d < \max\limits_{j=1}^{\ell} |C_j|$, make all vertices in $C_i$ unhappy.
	\end{rrule}
	\begin{claimO}
		Reduction rule \ref{rrule:cluster_empty_cols} is safe.
	\end{claimO}
	\begin{claimproofO}
		Note that $|C_j|\ge d+1$. Then, by Claim \ref{claim:cluster_big_color}, in any optimal coloring $c$, $c(C)=1$.
		Suppose the reduction rule is not safe.
		Then, there is a optimal coloring $c$ with $c(C)=\{i\}$.
		$c$ yields at most $|C_0|+|C_i|+d$ happy vertices in $(G,p)$.
		At the other hand, a trivial extension of $p$ with color $j$ yields at least $|C_0|+|C_j|>|C_0|+|C_i|+d$ happy vertices.
		Hence, $c$ is not an optimal coloring.
		Moreover, for any optimal coloring of $c$, no vertex in $C_i$ is happy with respect to $c$.
	\end{claimproofO}
	
	After a single application of Reduction rule \ref{rrule:cluster_empty_cols}, $C_i$ becomes empty.
	The algorithm applies the rule exhaustively.
	Denote the set of colors corresponding to non-empty sets in the sequence $C_1, C_2, \ldots, C_\ell$ by $L=\{i \mid C_i \neq \emptyset \}$.
	
	At the next step, the algorithm deals with non-precolored vertices in $S$.
	The obstacle behind non-precolored vertices in $S$ is that we can not be sure about their color in an optimal coloring.
	Depending on the color of the clique, certain colorings of certain non-precolored vertices in $S$ can make some vertices in the clique not happy.
	The following claim helps in reducing the number of clique neighbours for the non-precolored vertices.
	
	\begin{claimO}\label{claim:cluster_many_neighbours}
		In any optimal coloring $c$ of $(G,p)$, for any $i \in L$ and any non-precolored vertex $v \in S$, if $|N(v)\cap C_i| \ge d$ and $c(C)=\{i\}$, then $c(v)=i$. 
		Also, if $i=0$ and $|N(v) \cap C_0| \ge d$ and $c(C)=\{j\}$, then $c(v)=j$.
	\end{claimO}
	\begin{claimproofO}
		Suppose $i \in L$ and $c$ is an optimal coloring with $c(C)=\{i\}$, but there is a vertex $v \in S$ with that has at least $d$ neighbours in $C_i$, and $c(v)\neq i$.
		Note that only vertices in $C_0 \cup C_i \cup S$ can be happy with respect to $c$.
		
		Denote $$B_i=\{u \mid u \in S,\; |N(u)\cap (C_0\cup C_i)|>0,\; c(u)\neq i\}.$$
		That is, $B_i$ is a set of neighbours of $C_0\cup C_i$ that are colored with a color different from $i$ in $c$.
		$B_i$ is not empty and no vertex in $B_i$ is happy with respect to $c$.
		Also, since $v \in B_i$, $|N(B_i)\cap C_i|\ge d$.
		
		Construct a coloring $c'$ by changing colors of all vertices in $B_i$ to $i$ in $c$.
		That is, $c'(B_i)=\{i\}$, but $c'(u)=c(u)$ for each $u \in V(G) \setminus B_i$.
		After such change, some vertices in $S \setminus B_i$ that are happy with respect to $c$, become not happy with respect to $c'$.
		On the other hand, all vertices in $N(B_i) \cap (C_0\cup C_i)$ (all of them are not happy with respect to $c$) become happy with respect to $c'$.
		No other vertex is influenced by the change.
		Since $|S \setminus B_i| \le d-1$ and $|N(B_i) \cap C_i| \ge d$, $c'$ yields at least one happy vertex more than $c$ does.
		A contradiction with the optimality of $c$.
		
		The case $i=0$ is handled in the same way.
	\end{claimproofO}
	
	The claim results in the following reduction rule.
	
	\begin{rrule}\label{rrule:cluster_remove_edge}
		If there is a non-precolored vertex $v \in S$ with $|N(v) \cap C_i| > d$ for some $i \in \{0\} \cup L$, take any $u \in N(v) \cap C_i$ and remove the edge $vu$ from $G$.
	\end{rrule}
	\begin{claimO}
		Reduction rule \ref{rrule:cluster_remove_edge} is safe.
	\end{claimO}
	\begin{claimproofO}
		Suppose $(G,p,k)$ and $(G\setminus vu,p,k)$ are not equivalent instances.
		Note that for any coloring $c$, $\mathcal{H}(G,c)\subseteq \mathcal{H}(G\setminus vu, c)$.
		Instances are not equivalent, so there is an optimal coloring $c$ of $(G\setminus vu,p)$ that yields at least $k$ happy vertices in $(G\setminus vu,p)$.
		But since $(G,p,k)$ is a no-instance, $c$ yields at most $k-1$ happy vertices in $(G,p)$.
		Thus, the edge $vu$ changes the happiness of at least one of $u$ and $v$ with respect to $c$. In particular, $c(u)\neq c(v)$.
		
		At the other hand, $|N_G(v)\cap C_i|>d$, hence $|C_i|>d$ and $|N_{G\setminus vu}(v)\cap C_i|\ge d$.
		By Claim \ref{claim:cluster_big_color}, $c(C)=\{j\}$ for some $j \in L$.
		Hence, $c(u)=j$, but $c(v)\neq j$.
		Suppose that $i=0$.
		By Claim \ref{claim:cluster_many_neighbours} applied to $c$ and $(G\setminus vu,p)$, it holds that $c(v)=j$, a contradiction.
		
		Then it is the case that $i \neq 0$.
		Since $c(C)=\{j\}$ and $c(v)\neq j$, $v$ is not happy in $G\setminus vu$ with respect to $c$.
		Then the edge $vu$ changes the happiness of $u$, so $u$ is happy with respect to $c$ in $G\setminus vu$. 
		Since $i\neq 0$ and $u\in C_i$ is a happy vertex, $c(u)=i$, so $i=j$.
		But $|N_{G\setminus vu}(v)\cap C_i|\ge d$ and $c(C)=\{i\}$, and from Claim \ref{claim:cluster_many_neighbours} follows that $c(v)=i$. This contradiction finishes the proof.
	\end{claimproofO}
	
	The algorithm applies Reduction rule \ref{rrule:cluster_remove_edge} exhaustively.
	Then, when it got rid of non-precolored vertices with many neighbours, it gets rid of clique vertices with no non-precolored neighbours in $S$.
	This is formulated in the following two reduction rules, that the algorithm applies exhaustively.
	
	\begin{rrule}\label{rrule:cluster_precs_uncols}
		If $|C_0| > d+1$ and $C_0$ contains a vertex that has no neighbours in $S$, make that vertex unhappy and decrease $k$ by $1$.
	\end{rrule}
	\begin{claimO}
		Reduction rule \ref{rrule:cluster_precs_uncols} is safe.
	\end{claimO}
	\begin{claimproofO}
		Let $(G,p,k)$ be an instance with
		$|C_0| > d+1$.
		Let $v \in C_0$ be a vertex that has no neighbours in $S$.
		Let $(G',p,k-1)$ be the instance obtained after a single application of the reduction rule.
		$G'$ differs from $G$ only in two edges that ensure unhappiness of $v$.
		Obviously, if there is an optimal coloring of $(G,p)$ that yields at least $k$ happy vertices in $(G,p)$, the same coloring yields at least $k-1$ happy vertices in $(G',p)$.
		
		Take now an optimal coloring $c'$ of $(G',p)$. $|C_0 \setminus \{v\}|\ge d+1$, and again by Claim \ref{claim:cluster_big_color} (applied now to $(G',p')$), $|c'(C)|=1$.
		Apply $c'$ to $(G,p)$.
		$|c'(C)|=1$, so $v$ is happy with respect to $c'$ in $(G,p)$.
		Thus, $c'$ yields all the happy vertices in $(G,p)$ that it does in $(G',p')$, and also the vertex $v$.
		Therefore, if an optimal coloring of $(G',p')$ yields at least $k-1$ happy vertices, there is an optimal coloring of $(G,p)$ that yields at least $k$ vertices.
		The safeness of the reduction rule follows.
	\end{claimproofO}
	
	The following reduction rule is of the same nature, but is a bit more complicated.
	
	\begin{rrule}\label{rrule:cluster_precs_cols}
		If for each $i \in L$, $|C_i| > d+1$ and $C_i$ contains a vertex that has only precolored neighbours in $S$, do the following.
		For each $i \in L$, make one such vertex in $C_i$ unhappy. Decrease $k$ by $1$.
	\end{rrule}
	\begin{claimO}
		Reduction rule \ref{rrule:cluster_precs_cols} is safe.
	\end{claimO}
	\begin{claimproofO}
		Let $(G,p,k)$ be the instance before an application of the reduction rule.
		$|C_i|> d+1$ holds for each $i \in [\ell]$.
		For each $i \in [\ell]$, let $v_i \in C_i$ be the vertex in $C_i$ that has only precolored neighbours in $S$.
		Let $(G',p,k-1)$ be the instance after the application of the reduction rule.
		
		We show that if $(G,p,k)$ is a yes-instance, then $(G',p',k-1)$ is a yes-instance.
		Let $c$ be an optimal coloring of $(G,p)$ yielding at least $k$ happy vertices.
		By Claim \ref{claim:cluster_big_color}, $c(C)=\{i\}$, and $i \in L$.
		Note that $v_i$ is happy with respect to $c$, since all vertices of the clique are colored with color $i$ and all vertices in $N(v_i)\cap S$ are precolored with the color $i$.
		Moreover, for each $j \in [\ell]\setminus \{i\}$, no vertex in $C_j$ is happy with respect to $c$.
		Thus, $c$ yields all the same happy vertices in $(G',p)$ as it does in $(G,p)$, except for the single vertex $v_i$.
		Hence, $c'$ yields at least $k-1$ happy vertices in $(G',p)$.
		
		To prove in the other direction, take an optimal coloring $c'$ of $(G',p)$ that yields at least $k-1$ happy vertices.
		$|C_i\setminus \{v_i\}| \ge d+1$ for each $i\in [\ell]$, so apply Claim \ref{claim:cluster_big_color} to $(G',p')$ and get that $|c'(C)|=\{i\}$ for some $i\in L$.
		$v_i$ is not happy in $(G',p')$ with respect to $c'$, but it is happy in $(G,p)$ with respect to $c'$.
		All other happy vertices remain the same.
		Hence, $c'$ yields at least $k$ happy vertices in $(G,p)$.
		This finishes the proof.
	\end{claimproofO}
	
	We finally claim that the number of remaining potentially happy vertices is $\O(d^3)$.
	
	\begin{claimO}
		After the exhaustive application of the reduction rules, the clique $C$ contains at most $d^2+ d \cdot (d+1)^2$ potentially happy vertices, i.e.\ $|C \cap \mathcal{H}(G,p)| \le d^2+ d \cdot (d+1)^2$.
	\end{claimO}
	\begin{claimproofO}
		Observe that $|C \cap \mathcal{H}(G,p)|=|C_0\cup C_1 \cup C_2 \cup \ldots \cup C_\ell|=|C_0|+\sum_{i\in L} |C_i|$.
		We bound $|C_0|$ and $\sum_{i\in L} |C_i|$ separately.
		
		Each non-precolored vertex in $S$ has at most $d$ neighbours in $C_0$, otherwise Reduction rule \ref{rrule:cluster_remove_edge} could be applied.
		This contributes to at most $d^2$ such vertices in $C_0$.
		And since Reduction rule \ref{rrule:cluster_precs_uncols} cannot be applied, $C_0$ either contains no other vertices or consists of at most $d+1$ vertices in total.
		Hence, $|C_0|\le \max\{d^2, d+1\}=d^2$.
		
		We now bound $\sum_{i\in L} |C_i|$.
		We suppose that $L$ is not empty, otherwise \linebreak $\sum_{i\in L} |C_i|=0$.
		Since Reduction rule \ref{rrule:cluster_precs_cols} cannot be applied, there is either a set $C_i$ with $|C_i|\le d+1$, or there is a set $C_j$ that consists only of neighbours of non-precolored vertices in $S$.
		In the latter case, $|C_j| \le d^2$, otherwise Reduction rule \ref{rrule:cluster_remove_edge} could be applied.
		In any case, there is a set $C_k$ of size at most $d^2$, $k \in L$.
		
		Recall that Reduction rule \ref{rrule:cluster_empty_cols} ensures that $|C_i| \le |C_k|+d$ for each $i \in L$.
		Since it was applied exhaustively, we get that $\sum_{i\in L}|C_i|\le \ell \cdot (|C_k|+d) \le (d+1) \cdot (d^2+d)=d\cdot (d+1)^2$.
	\end{claimproofO}
	
	The claim shows that the number of potentially happy vertices is bounded and Lemma \ref{lemma:mhv_clique_linear_kernel} can be applied.
	The proof is finished.
\end{proofO}

The lemmata above require that a clique modulator of $G$ is given as an input.
This is not that necessary, since the distance to clique number is $2$-approximable in polynomial time.

\begin{lemmaO}\label{lemma:distance_to_clique_approximable}
	There is a polynomial-time algorithm that finds a clique modulator of a given $G$ consisting of at most $2d$ vertices, where $d$ is the size of minimum clique modulator of $G$.
\end{lemmaO}
\begin{proofO}
	Observe that a clique modulator $S$ of $G$ is a vertex cover of its complement $\overline{G}$.
	And vice versa, a vertex cover of $\overline{G}$ is a clique modulator in $G$.
	Thus, $d$ is the size of minimum vertex cover of $\overline{G}$.
	Take a well-known $2$-approximation algorithm for vertex cover by Gavril and Yannakakis \cite{0131524623}, and apply it to $\overline{G}$.
	Resulting vertex cover of size at most $2d$ is a clique modulator of $G$.
\end{proofO}

We combine algorithms of Lemma \ref{lemma:distance_to_clique_approximable} and Lemma \ref{lemma:mhv_distance_to_clique_kernel} to finally obtain the following result.
\fi
\begin{theoremO}
	\MHVfull~admits a kernel with $\O(d^3)$ vertices, where $d$ is the distance to clique parameter, and the parameter and a clique modulator of $G$ are not given explicitly.
\end{theoremO}

	\nocite{*}
	\bibliography{wg-cited}

\end{document}